\newcommand\blfootnote[1]{%
  \begingroup
  \renewcommand\thefootnote{}\footnote{#1}%
  \addtocounter{footnote}{-1}%
  \endgroup
}
\providecommand{\tabularnewline}{\\}
\newtheorem{definition}{Definition}
\newtheorem{model}{Model}
\newtheorem{lemma}{Lemma}
\newtheorem{corollary}{Corollary}
\newtheorem{proposition}{Proposition}
\newcommand{\E}{\mathbb{E}}
\newcommand{\Var}{var}
\newcommand{\cov}{cov}
\def \N{\mathbb{N}}
\def \R{\mathbb{R}}
\def \Ff{\mathcal{F}_t^{[full]}}
\def \Fn{\mathcal{F}_t^{[freq]}}
\def \Fa{\mathcal{F}_t^{[agg]}}
\def \P{\mathbb{P}}
\def \Z{\mathbf{Z}}
\title{Predictive Risk Analysis in Collective Risk Model: Choices between Historical Frequency and Aggregate Severity}
\author[1,*]{
Rosy Oh
}
\author[2,*]{
Youngju Lee
}
\author[3,**]{Dan Zhu
}
\author[2,**]{Jae Youn Ahn
}
\affil[1]{%
  Institute of Mathematical Sciences, Ewha Womans University, Korea.}
\affil[2]{%
  Department of Statistics, Ewha Womans University, Korea.}
\affil[3]{%
  Department of Econometrics and Business Statistics, Monash University, Australia}
\begin{document}
\maketitle
\blfootnote{*First Authors}
\blfootnote{**Corresponding Authors}
\blfootnote{Email addresses: \texttt{rosy.oh5@gmail.com} (Rosy Oh), \texttt{dldudwn220@gmail.com} (Youngju Lee),
\texttt{dan.zhu@monash.edu} (Dan Zhu), and \texttt{jaeyahn@ewha.ac.kr} (Jae Youn Ahn)}
\begin{abstract}
Typical risk classification procedure in insurance is consists of
a priori risk classification determined by observable risk characteristics,
and a posteriori risk classification where the premium is adjusted
to reflect the policyholder's claim history. While using the full
claim history data is optimal in a posteriori risk classification
procedure, i.e. giving premium estimators with the minimal variances,
some insurance sectors, however, only use partial information of the
claim history for determining the appropriate premium to charge. Classical
examples include that auto insurances premium are determined by the
claim frequency data and workers' compensation insurances are based
on the aggregate severity. The motivation for such practice is to
have a simplified and efficient posteriori risk classification procedure
which is customized to the involved insurance policy. This paper compares
the relative efficiency of the two simplified posteriori risk classifications,
i.e. based on frequency versus severity, and provides the mathematical
framework to assist practitioners in choosing the most appropriate
practice.
\end{abstract}
\noindent \textit{Keywords}: Collective Risk Model, B\"uhlmann Premium, Predictive Analysis, Posteriori Risk Classification, Premium

\section{Introduction}

Determination of the premiums is a major and interesting problem in
Actuarial science. Fair insurance premiums are established via risk
classification procedures, which involve the grouping of risks into
various classes that share a homogeneous set of characteristics allowing
the actuary to reasonably price discriminated. This paper examines
the statistical properties of the Collective Risk Model(CRM) for the
application in risk classification procedures. The CRM \citep{Klugman}
is, defined as the random sum, $S$, of claim severities,

\[
S=\sum_{j=1}^{N}Y_{j}
\]
where an independence assumption is made between the frequency, $N$,
and individual severities, $Y_{j}$'s. Due to the mathematical elegance
and relative robustness of the model, such independence assumption is widely applied by the
insurance industry for modeling the aggregate claim experience of
their portfolios over a fixed time horizon in pricing and reserving
exercises. However, in empirical studies such as \citep{Czado3},
the number and the size of claims are significantly dependent. Extending
from is its original form, there are various dependence structured
studied to better capture the stochastic nature of the insurance portfolio.

Generally speaking, there are two ways of describing CRM. The first
method is the \emph{two-part approach} where the frequency and the
severity part are described separately, and then their joint distribution
is described statistically via random effects \citep{Bastida, Czado2015, lu2016flexible, oh2019bonus},
copula specifications \citep{Czado, Kramer2013, Gee2016, Marceau2018, oh2019copula}, hierarchical structures
\citep{Peng, Garrido, park2018, AhnValdez2} or the dependence specification between the inter-arrival
time and the severity \citep{Albrecher2006, Boudreault2006, cheung2010structural}. Alternatively, the \textit{direct} \emph{approach}
models the distribution of aggregate severity, $S$, directly. The
most frequently used distribution for the aggregate severity in the
insurance literature is the Tweedie distribution \citep{tweedie1984index}.
To reflect the skewness and the heavy tails of the loss, \citep{mcdonald2008some}
introduced the generalized beta distribution of the second kind for
the modeling of the aggregate severity.

In this paper, we address the problem of risk classification procedure
in predicting the mean of aggregate severity, $S$, based on a set
of information. Usually, it involves first classifying risks based
on a priori risk classification procedure involving a priori risk characteristics,
i.e. risk characteristics of each policyholder at the moment of contract.
It forms the basis for premium settings when a policyholder is new
and insufficient information may be available. After a priori risk
classification, the policyholder is further classified based on the
claim history of each policyholder. This secondary classification
is called a posteriori risk classification. In effect, the resulting
a posteriori premium allows one to correct and adjust the previous a
priori premium making the price discrimination even fairer and more
reasonable.

While full information which consists of claim history of frequency and
severities is guaranteed to give the best posteriori prediction, we
are mainly interested in comparison of two simpler versions of the
posteriori risk classification methods
\begin{itemize}
\item one based on the historical frequency information only,
\item one based on the historical aggregate severity information only,
\end{itemize}
under the general dependence structure in frequencies and severities.
Such simplified a posteriori risk classification is common, and the
type of a posteriori risk classification depends on the characteristics
of insurance to facilitate efficient communication with the policyholder
while keeping the reasonable efficiency of a posteriori risk classification.
For example, auto insurance prefers to use the former type while the
workers\textquoteright{} compensation insurance generally adapts the
latter.

For the fair comparison of two a posteriori risk classification methods, we
need a CRM which can accommodate both a posteriori risk classification
methods. Since the historical frequency information cannot be used in the
direct approach, we use two-part approaches where both a posteriori
classification methods are possible. In particular, we use the two-step
frequency-severity model in \citet{AhnValdez2} where the analytical
comparison between two posteriori risk classifications is possible.

The actuarial credibility theory is designed for a posteriori rating
system that takes into account the history of claims as it emerges.
We construct and compare the quality of two simplified posteriori classification
methods, i.e. by conditioning on the aggregate claim history and the
claim frequency history respectively, via the B\"uhlmann estimators \citep{buhlmann2006course},
i.e. a linear version of a posteriori mean. B\"uhlmann estimators of
premium based on the history of frequency as well as severity are
first considered in \citet{hewitt1970credibility} and further studied
in \citet{frees2003multivariate} and \citet{goulet2006credibility}.
While those studies assumed independence between the frequency and
the severity, we develop B\"uhlmann premiums based on the historical frequency information as well as B\"uhlmann premiums based on the historical aggregate severity information
claim frequency under the dependence assumption between the frequency and
the severity.
Furthermore, we
derive the Mean square errors (MSE) of the B\"uhlmann premiums, which
facilities the analysis and comparison of the quality of two premiums. Our, non-technical,
yet equally important contribution to the insurance society is that,
with such analytical tools and related numerical study, we are the
first to provide a practical guideline for choosing the appropriate
posteriori risk classification method.

In the numerical study, we compare the quality of two B\"uhlmann premiums
under various scenarios, which hopefully provide the practical guideline
about the choice of premium method between the historical frequency information and
the historical aggregate severity information. In general, B\"uhlmann premiums based on
the historical aggregate severity information have a tendency to outperform B\"uhlmann premiums
based on the historical frequency information when there is a strong dependence among individual
severities, and vice versa. Yet, the preferable method also dynamically
changes over time as the number of observations increases. Hence,
there is no simple rule-of-thumb in deciding the appropriate information
set, but to make cases by case analysis.

We apply our analysis to a practical
application with real data obtained from the auto insurance of the
Wisconsin Local Government Property Insurance Fund (LGPIF) as in \citet{Frees4}.
First, the comparison of two B\"uhlmann premiums via numerical procedure
indicates relatively stronger dependence among severities, which in turn recommends the prediction based on the historical aggregate severity information rather than
the prediction based on the historical frequency information.
We confirm that the historical aggregate severity information has more predictive power in this particular example via out-of-sample validation.


The rest of the paper is organised as follows. In Section \ref{sec:Problem-Formulation},
we fix the notations and the model for our analysis, as well as present
a motivating problem. Section \ref{sec:Two-B=0000FChlmann-Premiums}
contains the derivation of the two B\"uhlmann premiums. In Section \ref{sec:Criteria},
we study the criteria for choosing appropriate models for premium
settings. An application to the auto insurance of the Wisconsin Local
Government Property Insurance Fund is presented in Section \ref{sec:Application}.

\section{Problem Formulation\label{sec:Problem-Formulation}}

\subsection{Notations}

We consider a portfolio of policyholders in the context of short-term
insurance, where a policyholder could decide whether or not to renew
the policy at the end of each policy year and the insurer can adjust
the premium at the beginning of each policy year based on the policyholder's
claim experience. We denote $\mathbb{N}$, $\mathbb{N}_{0}$, $\R$,
and $\R^{+}$ by the set of natural numbers, the set of non-negative
integers, the set of real numbers, and the set of positive real numbers,
respectively. We consider a set of discrete-time stochastic processes
that the associated data is collected over time $t=1,2,...$ that
\begin{itemize}
\item $N_{t}\in\N$ denotes the claim count at time $t$ and $\left\{ \Fn\right\} _{t=0}^{\infty}$
denotes the natural filtration generated by $N_{t}$,
\item $Y_{t,j}\in\R^{+}$ denote the size of the $i$th claim and
\[
\boldsymbol{{Y}}_{t}:=\begin{cases}
(Y_{t,1},...,Y_{t,N_{t}}) & N_{t}>0\\
undefined & N_{t}=0
\end{cases}
\]
denote the vector of claim sizes observed at time $t$,
\item $\left\{ \Ff\right\} _{t=0}^{\infty}$ denotes the natural filtration
generated by $N_{t}$ and $Y_{t}$, i.e. the information of the full
claim history
\item $S_{t}=\sum_{j=1}^{N_{t}}Y_{t,j}$ denotes the aggregate claim at
time $t$ and $\left\{ \Fa\right\} _{t=0}^{\infty}$ denotes the natural
filtration generated by $S_{t}$,
\item $M_{t}:=\begin{cases}
\frac{S_{t}}{N_{t}} & N_{t}>0\\
0 & N_{t}=0
\end{cases}$ denotes the average claim amount at time $t$.
\end{itemize}
We use lower case letters to denotes the realisation of these random
variables. The actuarial science literature often refers to $N_{t}$
as the \textit{frequency}, $\boldsymbol{{Y}}_{t}$ as the \textit{individual
severity}, and $M_{t}$ as the \textit{average severity} of the insurance
claims. We emphasize that our proposed method requires only information
on the average severity, and imposes no constraints on individual
severity. In the following text, we refer to the model for $(N_{t},M_{t})$
as the \textit{frequency-severity model}.

In the risk classification, premiums are determined by a set of risk
characteristics. Let $\left(\boldsymbol{X}^{[1]},\boldsymbol{X}^{[2]}\right)$
and $\left(R^{[1]},R^{[2]}\right)$ denote, respectively, the observed
and unobserved risk characteristics, and the superscripts {[}1{]}
and {[}2{]} are indices for the frequency and the severity components,
respectively. For convenience, we call the observed risk characteristics
$\left(\boldsymbol{X}^{[1]},\boldsymbol{X}^{[2]}\right)$ as a priori
risk characteristics. Note that $\left(\boldsymbol{X}^{[1]},\boldsymbol{X}^{[2]}\right)$,
and $\left(R^{[1]},R^{[2]}\right)$ do not have the subscript $t$,
as we assume that they are constant in time. The marginal distributions
of the residual effect characteristics are given by $R^{[1]}\sim G_{1}$
and $R^{[2]}\sim G_{2}$, respectively, for the proper distribution
functions $G_{1}$ and $G_{2}$. We use $g_{1}$ and $g_{2}$ to denote
the density version of $G_{1}$ and $G_{2}$, respectively. Furthermore,
denote
\[
\Z=(\boldsymbol{X}^{[1]},\boldsymbol{X}^{[2]},R^{[1]},R^{[2]}),\boldsymbol{\;\;X}:=\left(\boldsymbol{X}^{[1]},\boldsymbol{X}^{[2]}\right)\quad and\quad\boldsymbol{R}:=\left(R^{[1]},R^{[2]}\right)
\]
where $\Z$ is called as the risk characteristics.

\subsection{The Motivating problem}

Predictions of $S_{t+1}$ can be made based on different information
sets, i.e.

\begin{equation}
\E[S_{t+1}|\Ff]\label{eq:E_full}
\end{equation}
\begin{equation}
\E[S_{t+1}|\Fn]\label{eq:E_N}
\end{equation}
\begin{equation}
\E[S_{t+1}|\Fa]\label{eq:E_a}
\end{equation}
By definition, all three predictors are unbiased estimators of the
premium, and it is obvious that the quality of the predictor in Equation
\eqref{eq:E_full} is the best among three in the sense
\[
\Var\left(\E[S_{t+1}|\Ff]\right)\le\Var\left(\E[S_{t+1}|\Fn]\right)\quad and\quad\Var\left(\E[S_{t+1}|\Ff]\right)\le\Var\left(\E[S_{t+1}|\Fa]\right)
\]
which follows from the set inclusions
\[
\mathcal{F}_{t}^{[{\rm freq}]}\subseteq\mathcal{F}_{t}^{[{\rm full}]}\quad and\quad\mathcal{F}_{t}^{[{\rm agg}]}\subseteq\mathcal{F}_{t}^{[{\rm full}]}.
\]
In the comparison of the quality of two predictors in Equation \eqref{eq:E_N}
and \eqref{eq:E_a}, the statement that Equation \eqref{eq:E_a} is
superior in the sense
\[
\Var\left(\E[S_{t+1}|\Fa]\right)\le\Var\left(\E[S_{t+1}|\Fn]\right)
\]
does not hold in general. The main focus of this paper is to compare
these two approaches of setting premiums and set out guidelines in
choosing the appropriate when in practice.

\subsection{Model Assumption on Dependent Collective Risk Models\label{subsec:assump_CRM}}

To model the frequency and the severity of insurance claims, we follow
\citet{deJong2008} and use generalized linear models (GLMs). Specifically,
we consider the \textit{exponential dispersion family} (EDF) in \citet{Nelder1989}.
Now we are ready to present the collective risk model equipped with
various dependence structures.

\begin{model}\citep{AhnValdez2} \label{mod1} Suppose the insurer
predetermines $\mathcal{K}$ risk classes based on the policyholders'
risk characteristics. Let $\left(\boldsymbol{X}_{\kappa}^{[1]},\boldsymbol{X}_{\kappa}^{[2]}\right)$
define the a priori risk characteristics of the $\kappa$-th risk class, and $w_{\kappa}$ be the weight
of the risk class:
\[
w_{\kappa}:=\P\left(\boldsymbol{X}^{[1]}=\boldsymbol{x}_{\kappa}^{[1]},\boldsymbol{X}^{[2]}=\boldsymbol{x}_{\kappa}^{[2]}\right),\quad\kappa=1,\cdots,\mathcal{K}.
\]

Denote $\left(\Lambda^{[1]},\Lambda^{[2]}\right)$ as the \emph{a
priori premium for the given policyholder}, which is determined by
the observed risks characteristics as follows:
\begin{equation}
\Lambda^{[1]}=\left(\eta^{[1]}\right)^{-1}\left(\boldsymbol{X}^{[1]}\boldsymbol{\beta}^{[1]}\right)\quad and\quad\Lambda^{[2]}=\left(\eta^{[2]}\right)^{-1}\left(\boldsymbol{X}^{[2]}\boldsymbol{\beta}^{[2]}\right), \label{eq:prior}
\end{equation}
where $\eta^{[1]}(\cdot)$ and $\eta^{[2]}(\cdot)$ are link functions,
and $\boldsymbol{\beta}^{[1]}$ and $\boldsymbol{\beta}^{[2]}$ are
parameters to be estimated. We also assume that $R^{[1]},R^{[2]}$
and $\left(\boldsymbol{X}^{[1]},\boldsymbol{X}^{[2]}\right)$ are
independent. For the easiness of the analysis, we assume that priori
risk characteristics are fixed across time $t$. Assume that $N_{t}$
and $Y_{t,j}$'s are independent conditional on the risk
characteristics $\Z$.
\begin{itemize}
\item The frequency is specified using a count regression model conditioning
on the risk characteristics
\[
N_{t}\big\vert\Z\sim{F_{1}}\left(\cdot;\Lambda^{[1]}R^{[1]},\psi^{[1]}\right)
\]
for $t=1,2,...$ where the distribution $F_{1}$ has the mean $\Lambda^{[1]}R^{[1]}$
and the dispersion parameter $\psi^{[1]}$.
\item The individual severity $Y_{t,j}$ is specified using a regression
model conditioning on the risk characteristics, and the frequency
\begin{equation}
Y_{t,j}\big\vert\left(\Z,N_{t}\right)\sim F_{2}(\cdot;U^{[2]}R^{[2]},\psi^{[2]}),\quad i.i.d\;\;for\;\;j\in\mathbb{N},\label{eq.y}
\end{equation}
where the distribution $F_{2}$ has the mean $U^{[2]}R^{[2]}$ with
\[
U^{[2]}=\left(\eta^{[2]}\right)^{-1}\left(\boldsymbol{X}^{[2]}\boldsymbol{\beta}^{[2]}+\beta_{0}^{[2]}N_{t}\right)
\]
and the dispersion parameter $\psi^{[2]}$.
\end{itemize}
\end{model}

For the brevity of the notation, we only use log-link function for
$\eta^{[2]}$ such that
\[
U^{[2]}=\Lambda^{[2]}\exp\left(\beta_{0}^{[2]}N_{t}\right),
\]
where
\[
\Lambda^{[2]}:=\exp\left(\boldsymbol{X}^{[2]}\boldsymbol{\beta}^{[2]}\right).
\]
To further simplify the analysis in the later sections, we shall assume
the following parametric model.

\begin{model}[Parametric Model]\label{mod2} Assume the settings
in Model \ref{mod1} with the following parametric assumptions:
\begin{enumerate}
\item For the frequency, assume that
\[
N_{t}\big\vert\Z\sim{\rm Pois}\left(\Lambda^{[1]}R^{[1]}\right).
\]
\item For the individual severity, assume that
\begin{equation}
Y_{t,j}\big\vert\left(\Z,N_{t}\right)\sim{\rm Gamma}\left(\Lambda^{[2]}\exp\left(\beta_{0}^{[2]}N_{t}\right)R^{[2]},\psi^{[2]}\right),\quad t,j\in\mathbb{N}.\label{eq.y2}
\end{equation}
\item For the random effect, assume that
\[
\E\left[{R^{[1]}}\right]=\E\left[{R^{[2]}}\right]=1
\]
and
\[
\Var\left[{R^{[1]}}\right]=b^{[1]}\quad and\quad\Var{\left[R^{[2]}\right]}=b^{[2]}.
\]
\end{enumerate}
\end{model}

We provide the mean, variance and covariance formulae for the statistics
of Model \ref{mod2} in Proposition \ref{prop.1} (see Appendix). It
is often the case that using the individual severity in \eqref{eq.y2}
is inconvenient for estimation purposes, yet insurance literature
often provides the distributional assumption for the average severity
\citep{Peng}. The following corollary provides the equivalence of
two representations based on the individual severity and the average
severity in the case of gamma distributional assumption. The same
result for EDF distribution can be found in Lemma \ref{App:lem1} (see
Appendix).

\begin{corollary} Consider the settings in Model \ref{mod1}, then
the individual severity assumption in \eqref{eq.y2} is the equivalent
with
\[
M_{t}\big\vert\left(\Z,N_{t}\right)\sim{\rm Gamma}\left(\Lambda^{[2]}\exp\left(\beta_{0}^{[2]}N_{t}\right)R^{[2]},\psi^{[2]}\right).
\]
for $N_{t}>0$, and
\[
\P\left(M_{t}=0\big\vert\Z,N_{t}\right)=1
\]
for $N_{t}=0$. \end{corollary}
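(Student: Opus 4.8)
The plan is to split on the value of the frequency $N_t$ and to reduce the positive-frequency case to the classical reproductive (closure-under-averaging) property of the Gamma family. The case $N_t=0$ is immediate: by the definition of $M_t$, we have $M_t=0$ on the event $\{N_t=0\}$, so conditionally on $\{\Z,N_t=0\}$ the law of $M_t$ is degenerate at $0$, i.e. $\P\left(M_t=0\mid\Z,N_t\right)=1$, which is exactly the second assertion of the corollary.

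For the main case, fix $\{\Z,N_t=n\}$ with $n>0$ and abbreviate $\mu:=\Lambda^{[2]}\exp(\beta_0^{[2]}n)R^{[2]}$ for the common conditional mean of the individual severities. By \eqref{eq.y2} the variables $Y_{t,1},\dots,Y_{t,n}$ are, conditionally, i.i.d. Gamma, which in the mean--dispersion parametrisation have shape $1/\psi^{[2]}$ and scale $\psi^{[2]}\mu$. I would then compute the conditional moment generating function of $M_t=\tfrac{1}{n}\sum_{j=1}^{n}Y_{t,j}$: by conditional independence, $\E\!\left[e^{sM_t}\mid\Z,N_t=n\right]=\prod_{j=1}^{n}\E\!\left[e^{(s/n)Y_{t,j}}\mid\Z,N_t=n\right]=\left(1-\psi^{[2]}\mu s/n\right)^{-n/\psi^{[2]}}$ on the appropriate domain. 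This is precisely the moment generating function of a Gamma law with shape $n/\psi^{[2]}$ and scale $\psi^{[2]}\mu/n$, and since the Gamma m.g.f. determines the distribution uniquely, $M_t$ is conditionally Gamma with these parameters.

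It remains to translate this shape--scale pair back into the mean--dispersion notation of the statement. The conditional mean of $M_t$ is the product of shape and scale, namely $\mu$, so the mean is preserved; its conditional variance is the product of shape and the square of the scale, namely $\psi^{[2]}\mu^2/n$. Under the weighted EDF convention used for the average severity, in which an average of $n$ observations carries prior weight $N_t=n$ and hence variance $\psi^{[2]}V(\mu)/n$ with Gamma variance function $V(\mu)=\mu^2$, this is exactly the law ${\rm Gamma}\!\left(\Lambda^{[2]}\exp(\beta_0^{[2]}N_t)R^{[2]},\psi^{[2]}\right)$. The converse direction follows by reversing the identity: assuming the $Y_{t,j}$ are i.i.d. Gamma and that $M_t$ has the stated law forces the individual shape to be $1/\psi^{[2]}$ and the scale to be $\psi^{[2]}\mu$, recovering \eqref{eq.y2}, so the two representations are genuinely equivalent.

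The step I expect to demand the most care is the bookkeeping of the dispersion parameter. A naive, unweighted reading would assign $M_t$ the dispersion $\psi^{[2]}/n$, and the crucial observation that makes the equivalence hold as stated is that the factor $1/n$ is absorbed into the prior weight rather than into the dispersion, leaving $\psi^{[2]}$ invariant. This invariance is exactly the feature that generalises beyond the Gamma case to the full exponential dispersion family in Lemma \ref{App:lem1}, where the general variance function $V(\cdot)$ plays the role of the Gamma-specific $\mu^2$ used above.
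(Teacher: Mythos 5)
Your proof is correct, but it takes a genuinely different route from the paper's. The paper disposes of the corollary in one line by citing Lemma \ref{App:lem1}, whose own proof rests on two general facts about exponential dispersion families: the reproductive property (an average of $n$ i.i.d.\ ${\rm ED}(\mu,\psi)$ variables is ${\rm ED}(\mu,\psi/n)$) and the characterisation of infinite divisibility via the index set $\Lambda=\R^{+}$, the latter being what licenses the converse direction. You instead work entirely within the Gamma case and verify the reproductive property by hand: you compute the conditional moment generating function of $M_t$ given $\{\Z,N_t=n\}$, identify it as Gamma with shape $n/\psi^{[2]}$ and scale $\psi^{[2]}\mu/n$, and read off mean $\mu$ and variance $\psi^{[2]}\mu^{2}/n$; your converse is handled by parameter identification within the assumed i.i.d.\ Gamma family rather than by infinite divisibility, which is adequate here since the model already posits that the $Y_{t,j}$ are i.i.d.\ Gamma. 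What the paper's route buys is generality (the lemma covers any EDF with index set $\R^{+}$) and brevity; what yours buys is a self-contained, elementary argument that makes the parameter bookkeeping explicit. In particular, you correctly surface the one subtle point that the corollary's statement glosses over: the average of $n$ individual severities has dispersion $\psi^{[2]}/N_t$, not $\psi^{[2]}$, exactly as written in Lemma \ref{App:lem1} and as used in the variance computations of Lemma \ref{dan.lem.st}. Your reconciliation via the weighted-EDF convention (the factor $1/N_t$ being carried by the prior weight rather than the dispersion) is the right reading, and flagging it is a genuine improvement in clarity over the corollary as printed.
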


\begin{proof}
The proof is an immediate result of Lemma \ref{App:lem1}.
\end{proof}

Finally, for the brevity of the paper, define the following symbols
under the settings in Model \ref{mod2}.

\begin{definition}\label{sym.1} Under the settings in Model \ref{mod2},
define
\[
\zeta_{1}:=\Lambda^{[1]}\left(e^{\beta_{0}^{[2]}}-1\right)\quad and\quad \zeta_{2}:=\Lambda^{[1]}\left(e^{2\beta_{0}^{[2]}}-1\right).
\]

\end{definition}

For the brevity of the paper, we also abuse symbols $\Lambda^{[1]}$,
$\Lambda^{[2]}$, and $\boldsymbol{\Lambda}=(\Lambda^{[1]},\Lambda^{[2]})$
as follows in a clear context. Under the settings in Model \ref{mod2},
we use $\Lambda^{[1]}$, $\Lambda^{[2]}$, and $\boldsymbol{\Lambda}$
to stand for $\boldsymbol{X}^{[1]}$, $\boldsymbol{X}^{[2]}$, and
$\boldsymbol{X}$, respectively, in a clear context. For example,
we have the following two expressions are equivalent
\[
\E\left[{S_{t+1}|\boldsymbol{\Lambda}}\right]\quad \hbox{and} \quad\E\left[{S_{t+1}|\boldsymbol{X}}\right].
\]

\section{Two B\"uhlmann Premiums in Dependent CRM}

\label{sec:Two-B=0000FChlmann-Premiums}

With the introduction of more complicated and dynamic insurance products,
a major challenge of the actuarial profession can be found in the
measurement and construction of a fair insurance premium. Pricing
risks based upon certain specific characteristics has a long history
in actuarial science. In light of the heterogeneity within an insurance
portfolio, an insurance company should not apply the same premium
for its policyholders, but group the risks in the portfolio so that
people with similar risk profiles pay the same reasonable premium
rate. To reflect the various risk profiles in a portfolio within a
statistically sound basis, the standard technique that actuaries use
is a regression-based approach. The standard GLM type structure as
in Equation \eqref{eq:prior} gives a natural candidate for the a priori risk
classification.

Based on such a priori risk classification, this section considers two different type of premiums under the dependent assumption between frequency and severity.
The first premium is a classical method where
the historical aggregate severity information is used to predict the aggregate severity
in the future.
The second premium that we are proposing is
a non-traditional approach in that the historical frequency information
is used to predict the aggregate severity in the future.
For the analytical of comparison of two methods,
we consider B\"uhlmann premiums rather using the exact posteriori mean of the aggregate severity as a posteriori premiums.
Note that B\"uhlmann premiums can deal with both approaches as long as one can
calculate the covariance matrix of the observations and premium \citep{hewitt1970credibility}.

%
%

\subsection{B\"uhlmann Premiums based on the Historical Aggregate Severity Information}

Under Model \ref{mod2}, our goal in this subsection is to find the
B\"uhlmann premium based on the historical aggregate severity.

\begin{definition}\label{def:P_agg}

The B\"uhlmann premium based on the historical aggregate severity information is
\[
{\rm Prem}_{1}(\boldsymbol{\Lambda}):=\widehat{\alpha}_{0}+\widehat{\alpha}_{1}S_{1}+\cdots+\widehat{\alpha}_{t}S_{t},
\]
where
\[
\left(\widehat{\alpha}_{0},\cdots,\widehat{\alpha}_{t}\right):=\arg\min\limits _{(\alpha_{0},...,\alpha_{t})\in\R^{t+1}}\E\left[\Big(\E\left[S_{t+1}|\boldsymbol{R},\boldsymbol{\Lambda}\right]-\left(\alpha_{0}+\alpha_{1}S_{1}+\cdots+\alpha_{t}S_{t}\right)\Big)^2\Big\vert\boldsymbol{\Lambda}\right].
\]

\end{definition}

For the known random effect $\boldsymbol{R}$ and a priori rate $\boldsymbol{\Lambda}$,
the conditional mean $\E{\left[S_{t+1}|\boldsymbol{R},\boldsymbol{\Lambda}\right]}$
can be obtained as follow.

\begin{proposition} \label{Prop_P_agg_1} Under Model \ref{mod2},
for the known random effect $\boldsymbol{R}$ and a priori rate $\boldsymbol{\Lambda}$,
we have
\[
\E\left[{S_{t+1}|\boldsymbol{R},\boldsymbol{\Lambda}}\right]=\Lambda^{[1]}\Lambda^{[2]}R^{[1]}R^{[2]}e^{\beta_{0}^{[2]}}\exp\left(\Lambda^{[1]}R^{[1]}\left(e^{\beta_{0}^{[2]}}-1\right)\right).
\]
\end{proposition}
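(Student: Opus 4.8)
The plan is to apply the tower property by conditioning on the frequency $N_{t+1}$ and exploiting the conditional independence structure of Model \ref{mod2}. Recalling the abuse of notation identifying $\Lambda^{[1]},\Lambda^{[2]}$ with $\boldsymbol{X}^{[1]},\boldsymbol{X}^{[2]}$, conditioning on $\boldsymbol{R},\boldsymbol{\Lambda}$ is the same as conditioning on the full risk characteristics $\Z=(\boldsymbol{X}^{[1]},\boldsymbol{X}^{[2]},R^{[1]},R^{[2]})$, so I would write
\[
\E\left[S_{t+1}\mid\boldsymbol{R},\boldsymbol{\Lambda}\right]=\E\left[\E\left[S_{t+1}\mid\Z,N_{t+1}\right]\,\Big\vert\,\Z\right].
\]

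First I would evaluate the inner expectation. Given $\Z$ and $N_{t+1}$, the severities $Y_{t+1,j}$ are i.i.d.\ with common mean $\Lambda^{[2]}\exp(\beta_0^{[2]}N_{t+1})R^{[2]}$ by \eqref{eq.y2}, so a conditional application of linearity over the random sum $S_{t+1}=\sum_{j=1}^{N_{t+1}}Y_{t+1,j}$ gives
\[
\E\left[S_{t+1}\mid\Z,N_{t+1}\right]=N_{t+1}\,\Lambda^{[2]}R^{[2]}\exp\left(\beta_0^{[2]}N_{t+1}\right).
\]

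The substantive step is then to take the outer expectation over $N_{t+1}\mid\Z\sim{\rm Pois}(\Lambda^{[1]}R^{[1]})$, which reduces to evaluating $\E\left[N e^{\beta_0^{[2]}N}\right]$ for a Poisson variable $N$ with rate $\mu:=\Lambda^{[1]}R^{[1]}$. I would obtain this by differentiating the moment generating function $\E[e^{sN}]=\exp(\mu(e^s-1))$ in $s$, yielding $\E[Ne^{sN}]=\mu e^s\exp(\mu(e^s-1))$, and then setting $s=\beta_0^{[2]}$. Substituting $\mu=\Lambda^{[1]}R^{[1]}$ and multiplying by the $\Lambda^{[2]}R^{[2]}$ prefactor reproduces the claimed expression.

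I do not expect a genuine obstacle here; the computation is elementary once the order of conditioning is fixed. The only point requiring care is recognising that the dependence between frequency and severity is channelled entirely through the tilt factor $\exp(\beta_0^{[2]}N_{t+1})$ in the severity mean, so that the outer expectation couples $N_{t+1}$ with itself through this exponential weight rather than factoring as it would under independence. This coupling is precisely what produces the extra $e^{\beta_0^{[2]}}\exp(\Lambda^{[1]}R^{[1]}(e^{\beta_0^{[2]}}-1))$ correction relative to the naive product $\Lambda^{[1]}\Lambda^{[2]}R^{[1]}R^{[2]}$ that one would get in the independent case $\beta_0^{[2]}=0$.
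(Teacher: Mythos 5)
Your proposal is correct and follows the same route as the paper: condition on $N_{t+1}$ via the tower property, reduce the inner expectation to $N_{t+1}\Lambda^{[2]}R^{[2]}\exp(\beta_0^{[2]}N_{t+1})$, and evaluate $\E[N e^{\beta_0^{[2]}N}]$ for the conditional Poisson law by differentiating its moment generating function, which is precisely the content of the paper's Lemma \ref{dan.lem.1}. No gaps.
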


\begin{proof}
\[
\begin{aligned}\E\left[{S_{t+1}|\boldsymbol{R},\boldsymbol{\Lambda}}\right] & =\E\left[{\E{\left[S_{t+1}|N_{t+1},\boldsymbol{R},\boldsymbol{\Lambda}\right]}\big\vert\boldsymbol{R},\boldsymbol{\Lambda}}\right]\\
 & =\E\left[{N_{t+1}R^{[2]}\Lambda^{[2]}\exp\left(\beta_{0}^{[2]}N_{t+1}\right)\big\vert\boldsymbol{R},\boldsymbol{\Lambda}}\right]\\
 & =R^{[2]}\Lambda^{[2]}\E\left[{N_{t+1}\exp\left(\beta_{0}^{[2]}N_{t+1}\right)\big\vert\boldsymbol{R},\boldsymbol{\Lambda}}\right]\\
 & =\Lambda^{[1]}\Lambda^{[2]}R^{[1]}R^{[2]}e^{\beta_{0}^{[2]}}\exp\left(\Lambda^{[1]}R^{[1]}\left(e^{\beta_{0}^{[2]}}-1\right)\right),
\end{aligned}
\]
where the last equality comes from Lemma \ref{dan.lem.1} in Appendix.
\end{proof}

Note that the B\"uhlmann premium in \eqref{def:P_agg} can be regarded
as the best linear unbiased estimator (BLUE) of the conditional mean
$\E{\left[S_{t+1}|\Fa,\boldsymbol{\Lambda}\right]}$. By the classical
procedure in \citet{buhlmann2006course}, one can easily show that
\begin{equation}
\widehat{\alpha}_{0}=(1-Z_{1}(\boldsymbol{\Lambda}))\E{\left[S_{t}|\boldsymbol{\Lambda}\right]}\quad and\quad\widehat{\alpha}_{1}=\cdots=\widehat{\alpha}_{t}=\frac{Z_{1}(\boldsymbol{\Lambda})}{t},  \label{alpha_agg}
\end{equation}
where B\"uhlmann factor is given by
\[
Z_{1}(\boldsymbol{\Lambda})=\frac{t\,\Var\left[\E\left[{S_{t}|\boldsymbol{\Lambda},\boldsymbol{R}}|\boldsymbol{\Lambda}\right]\right]}{\E\left[\Var\left[S_{t}|\boldsymbol{\Lambda},\boldsymbol{R}\right]|\boldsymbol{\Lambda}\right]+t\,\Var\left[\E\left[S_{t}|\boldsymbol{\Lambda},\boldsymbol{R}\right]|\boldsymbol{\Lambda}\right]}
\]

The
following result provides the analytical expression of premium in
\eqref{def:P_agg}.

\begin{proposition}\label{prop:P_agg_2} Under Model \ref{mod2},
the conditional mean $\E\left[S_{t}|\boldsymbol{\Lambda}\right]$
and B\"uhlmann factor, $Z_{1}(\Lambda)$, can be expressed as
\[
\E\left[S_{t}|\boldsymbol{\Lambda}\right]=u_{1}\left(\boldsymbol{\Lambda}\right)\quad and\quad Z_{1}(\boldsymbol{\Lambda})=\frac{ta_{1}\left(\boldsymbol{\Lambda}\right)}{ta_{1}\left(\boldsymbol{\Lambda}\right)+v_{1}\left(\boldsymbol{\Lambda}\right)},
\]
where
\[
u_{1}\left(\boldsymbol{\Lambda}\right)
=\Lambda^{[1]}\Lambda^{[2]}e^{\beta_{0}^{[2]}}\,M_{R^{[1]}}^{\prime}\left(\zeta_{1}\right)
\]

\[
v_{1}\left(\boldsymbol{\Lambda}\right)%
=\Lambda^{[1]}\left(\Lambda^{[2]}\right)^{2}e^{2\beta_{0}^{[2]}}\left(1+b^{[2]}\right)\bigg[\left(1+\psi^{[2]}\right)M_{R^{[1]}}^{\prime}\left(\zeta_{2}\right)+\Lambda^{[1]}e^{2\beta_{0}^{[2]}}M_{R^{[1]}}^{\prime\prime}\left(\zeta_{2}\right)-\Lambda^{[1]}M_{R^{[1]}}^{\prime\prime}\left(2\zeta_{1}\right)\bigg],
\]
and
\[
a_{1}\left(\boldsymbol{\Lambda}\right)%
=\left(\Lambda^{[1]}\Lambda^{[2]}\right)^{2}e^{2\beta_{0}^{[2]}}\bigg[\left(1+b^{[2]}\right)M_{R^{[1]}}^{\prime\prime}\left(2\zeta_{1}\right)-\left\{ M_{R^{[1]}}^{\prime}\left(\zeta_{1}\right)\right\} ^{2}\bigg].
\]

\end{proposition}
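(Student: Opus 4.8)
The plan is to recognize that $u_1$, $a_1$, and $v_1$ are precisely the three ingredients appearing in the B\"uhlmann factor displayed just above the statement. Since Model \ref{mod2} keeps the a priori characteristics fixed across time, the conditional law of $S_t$ given $\boldsymbol{R},\boldsymbol{\Lambda}$ does not depend on $t$, so it suffices to show
\[
u_1(\boldsymbol{\Lambda}) = \E\left[S_t|\boldsymbol{\Lambda}\right],\quad a_1(\boldsymbol{\Lambda}) = \Var\left[\E[S_t|\boldsymbol{R},\boldsymbol{\Lambda}]\big|\boldsymbol{\Lambda}\right],\quad v_1(\boldsymbol{\Lambda}) = \E\left[\Var[S_t|\boldsymbol{R},\boldsymbol{\Lambda}]\big|\boldsymbol{\Lambda}\right],
\]
after which the expression for $Z_1(\boldsymbol{\Lambda})$ follows by direct substitution into the stated formula. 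The workhorse throughout is the moment generating function identity $M_{R^{[1]}}^{(k)}(s)=\E\left[(R^{[1]})^k e^{sR^{[1]}}\right]$ for $k=1,2$, combined with the independence of $R^{[1]}$ and $R^{[2]}$ and the moments $\E[R^{[2]}]=1$ and $\E[(R^{[2]})^2]=1+b^{[2]}$ from Model \ref{mod2}.

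For $u_1$, I would start from the closed form of $\E[S_t|\boldsymbol{R},\boldsymbol{\Lambda}]$ in Proposition \ref{Prop_P_agg_1}, take a further expectation over $\boldsymbol{R}$ given $\boldsymbol{\Lambda}$, factor out $R^{[2]}$ by independence (its mean being $1$), and identify the surviving $\E\left[R^{[1]}e^{\zeta_1 R^{[1]}}\right]$ as $M_{R^{[1]}}'(\zeta_1)$. For $a_1$, I would square the same closed form, take $\E[\cdot|\boldsymbol{\Lambda}]$ using $\E[(R^{[2]})^2]=1+b^{[2]}$ and $\E\left[(R^{[1]})^2 e^{2\zeta_1 R^{[1]}}\right]=M_{R^{[1]}}''(2\zeta_1)$, and subtract $u_1^2$; the stated form drops out immediately. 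Both reductions are routine once Proposition \ref{Prop_P_agg_1} is in hand.

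The main obstacle is $v_1$, which requires the conditional variance of the compound sum $S_t$ given $\boldsymbol{R},\boldsymbol{\Lambda}$. I would obtain it by a nested law of total variance, conditioning first on $N_t$:
\[
\Var[S_t|\boldsymbol{R},\boldsymbol{\Lambda}] = \E\left[\Var[S_t|N_t,\boldsymbol{R},\boldsymbol{\Lambda}]\big|\boldsymbol{R},\boldsymbol{\Lambda}\right] + \Var\left[\E[S_t|N_t,\boldsymbol{R},\boldsymbol{\Lambda}]\big|\boldsymbol{R},\boldsymbol{\Lambda}\right].
\]
Given $N_t$, the $Y_{t,j}$ are i.i.d.\ Gamma with mean $\Lambda^{[2]}e^{\beta_0^{[2]}N_t}R^{[2]}$ and variance $\psi^{[2]}(\Lambda^{[2]}e^{\beta_0^{[2]}N_t}R^{[2]})^2$, so the inner conditional mean is $N_t\Lambda^{[2]}e^{\beta_0^{[2]}N_t}R^{[2]}$ and the inner conditional variance is $N_t\psi^{[2]}(\Lambda^{[2]})^2 e^{2\beta_0^{[2]}N_t}(R^{[2]})^2$. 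The two outer operations then reduce to the Poisson exponential moments $\E[N_t e^{sN_t}|\boldsymbol{R},\boldsymbol{\Lambda}]$ and $\E[N_t^2 e^{sN_t}|\boldsymbol{R},\boldsymbol{\Lambda}]$ for $s\in\{\beta_0^{[2]},2\beta_0^{[2]}\}$, which are supplied by Lemma \ref{dan.lem.1}; collecting terms and using the abbreviations $\zeta_1,\zeta_2$ expresses $\Var[S_t|\boldsymbol{R},\boldsymbol{\Lambda}]$ as $(\Lambda^{[2]})^2(R^{[2]})^2 e^{2\beta_0^{[2]}}$ multiplying a combination of $\exp(\zeta_2 R^{[1]})$ and $\exp(2\zeta_1 R^{[1]})$ terms.

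To finish, I would take $\E[\cdot|\boldsymbol{\Lambda}]$ of that expression, pull out $\E[(R^{[2]})^2]=1+b^{[2]}$ by independence, and replace the remaining $R^{[1]}$-moments by $M_{R^{[1]}}'(\zeta_2)$, $M_{R^{[1]}}''(\zeta_2)$, and $M_{R^{[1]}}''(2\zeta_1)$, whereupon a common factor $\Lambda^{[1]}$ pulls out to recover the stated $v_1$. I expect the only genuine difficulty to be bookkeeping: keeping the two arguments $\zeta_1$ and $\zeta_2$ distinct and pairing each Poisson moment with the correct power of $e^{\beta_0^{[2]}}$, since a single misplaced factor would spoil the cancellation that gives $a_1$ and $v_1$ their compact closed forms.
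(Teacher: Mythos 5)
Your proposal is correct and follows essentially the same route as the paper: the paper delegates the computation of $\E[S_t|\boldsymbol{R},\boldsymbol{\Lambda}]$ and $\Var[S_t|\boldsymbol{R},\boldsymbol{\Lambda}]$ to Lemma \ref{dan.lem.st} (which performs exactly your law-of-total-variance decomposition conditioning on $N_t$ and invokes the Poisson exponential moments of Lemma \ref{dan.lem.1}), and then averages over $\boldsymbol{R}$ using $\E[(R^{[2]})^2]=1+b^{[2]}$ and the $M_{R^{[1]}}$ derivatives, just as you do. The only difference is presentational --- you inline the lemma's content rather than cite it.
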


\begin{proof}
Proof of $u_{1}$ and $v_{1}$ immediately follow from Lemma \ref{dan.lem.st}.
Proof of $a_{1}$ is from
\[
\begin{aligned}a_{1}\left(\boldsymbol{\Lambda}\right) & =\Var{\left[u_{1}\left(\boldsymbol{R},\boldsymbol{\Lambda}\right)|\boldsymbol{\Lambda}\right]}\\
 & =e^{2\beta_{0}^{[2]}}\left(\Lambda^{[1]}\Lambda^{[2]}\right)^{2}\Var{\left[R^{[1]}R^{[2]}\exp\left(\Lambda^{[1]}R^{[1]}\left(e^{\beta_{0}^{[2]}}-1\right)\right)|\boldsymbol{\Lambda}\right]}\\
 & =e^{2\beta_{0}^{[2]}}\left(\Lambda^{[1]}\Lambda^{[2]}\right)^{2}\Bigg[\E{\left[\left(R^{[1]}R^{[2]}\right)^{2}\exp\left(2\Lambda^{[1]}R^{[1]}\left(e^{\beta_{0}^{[2]}}-1\right)\right)|\boldsymbol{\Lambda}\right]}\\
 & \quad\quad\quad\quad\quad\quad\quad\quad\quad\quad\quad\quad-\left(\E{\left[R^{[1]}R^{[2]}\exp\left(\Lambda^{[1]}R^{[1]}\left(e^{\beta_{0}^{[2]}}-1\right)\right)|\boldsymbol{\Lambda}\right]}\right)^{2}\Bigg]\\
 & =\left(\Lambda^{[1]}\Lambda^{[2]}\right)^{2}e^{2\beta_{0}^{[2]}}\bigg[\left(1+b^{[2]}\right)M_{R^{[1]}}^{\prime\prime}\left(2\zeta_{1}\right)-\left\{ M_{R^{[1]}}^{\prime}\left(\zeta_{1}\right)\right\} ^{2}\bigg].
\end{aligned}
\]
\end{proof}

The statistics in Proposition \ref{prop:P_agg_2} can be further explicitly
calculated based on the formulas for the moment generating function
of random effect $R^{[1]}$ in Lemma \ref{dan.lem.2} in Appendix.
Hence, we have

\begin{equation}
{\rm Prem}_{1}(\boldsymbol{\Lambda})=Z_{1}(\boldsymbol{\Lambda})\frac{\sum\limits _{k=1}^{t}S_{k}}{t}+\left(1-Z_{1}(\boldsymbol{\Lambda})\right)u_{1}\left(\boldsymbol{\Lambda}\right),  \label{eq:P_agg}
\end{equation}
where B\"uhlmann factors $Z_{1}(\boldsymbol{\Lambda})$ are described
in Proposition \ref{prop:P_agg_2} and \ref{Prop:P_freq_1}.

\subsection{B\"uhlmann Premiums based on the Historical Frequency Information}

Similar to the previous subsection, our goal in this subsection is
to derive the B\"uhlmann premium based on claim frequencies. The followings
\begin{equation}
\E\left[{S_{1}|N_{1},\boldsymbol{\Lambda}}\right],\cdots,\E\left[{S_{t}|N_{t},\boldsymbol{\Lambda}}\right]\label{eq:Buhlman_observation}
\end{equation}
are the \textit{B\"uhlmann observation}, that are functions of the historical
frequency and priori characteristics, are used as if they are observations
in the determination of the B\"uhlmann premiums, and we shall denote
\[
\widetilde{S}_{t}\left(N_{t},\boldsymbol{\Lambda}\right):=\E\left[S_{t}|N_{t},\boldsymbol{\Lambda}\right].
\]
\textit{ }In Model \ref{mod2}, the statistics in \eqref{eq:Buhlman_observation}
can be analytically expressed as follows.

\begin{proposition}\label{Prop:P_freq_1} Under the settings in Model
\ref{mod2}, the B\"uhlmann observation $\widetilde{S}_{t}\left(N_{t},\boldsymbol{\Lambda}\right)$
in \eqref{def:P_freq} is expressed as
\[
\widetilde{S}_{t}\left(N_{t},\boldsymbol{\Lambda}\right)=\Lambda^{[2]}N_{t}\exp\left(\beta_{0}^{[2]}N_{t}\right).
\]
\end{proposition}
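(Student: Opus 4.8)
The plan is to compute the B\"uhlmann observation $\widetilde{S}_{t}\left(N_{t},\boldsymbol{\Lambda}\right)=\E\left[S_{t}\mid N_{t},\boldsymbol{\Lambda}\right]$ by conditioning further on the unobserved random effects $\boldsymbol{R}$ through the tower property, thereby reducing the computation to a moment of a single gamma severity and then averaging out the severity random effect $R^{[2]}$. First I would enlarge the conditioning set from $(N_{t},\boldsymbol{\Lambda})$ to $(N_{t},\boldsymbol{R},\boldsymbol{\Lambda})$ and write
\[
\E\left[S_{t}\mid N_{t},\boldsymbol{\Lambda}\right]=\E\big[\E\left[S_{t}\mid N_{t},\boldsymbol{R},\boldsymbol{\Lambda}\right]\,\big|\,N_{t},\boldsymbol{\Lambda}\big].
\]
For the inner conditional expectation, conditionally on $(N_{t},\boldsymbol{R},\boldsymbol{\Lambda})$ (equivalently on $(N_{t},\Z)$) the severities $Y_{t,1},\ldots,Y_{t,N_{t}}$ are i.i.d.\ with common mean $\Lambda^{[2]}\exp\left(\beta_{0}^{[2]}N_{t}\right)R^{[2]}$ by the gamma assumption in \eqref{eq.y2}, so that
\[
\E\left[S_{t}\mid N_{t},\boldsymbol{R},\boldsymbol{\Lambda}\right]=N_{t}\,\Lambda^{[2]}\exp\big(\beta_{0}^{[2]}N_{t}\big)R^{[2]}.
\]

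Next, since $N_{t}$, $\Lambda^{[2]}$ and $\exp\big(\beta_{0}^{[2]}N_{t}\big)$ are all measurable with respect to the $\sigma$-algebra generated by $(N_{t},\boldsymbol{\Lambda})$, they factor out of the outer expectation, leaving
\[
\E\left[S_{t}\mid N_{t},\boldsymbol{\Lambda}\right]=N_{t}\,\Lambda^{[2]}\exp\big(\beta_{0}^{[2]}N_{t}\big)\,\E\big[R^{[2]}\,\big|\,N_{t},\boldsymbol{\Lambda}\big].
\]

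The crux, and the step I expect to be the main obstacle, is to justify that $\E\big[R^{[2]}\mid N_{t},\boldsymbol{\Lambda}\big]=\E[R^{[2]}]=1$. I would argue this from the dependence structure of Model \ref{mod1}: the frequency obeys $N_{t}\mid\Z\sim\mathrm{Pois}\left(\Lambda^{[1]}R^{[1]}\right)$, so conditionally on $\boldsymbol{\Lambda}$ the law of $N_{t}$ is driven only by $R^{[1]}$, not $R^{[2]}$. Combined with the assumed mutual independence of $R^{[1]}$, $R^{[2]}$ and $\left(\boldsymbol{X}^{[1]},\boldsymbol{X}^{[2]}\right)$, the severity random effect $R^{[2]}$ is independent of the pair $(N_{t},\boldsymbol{\Lambda})$, so the conditional expectation collapses to the unconditional mean $\E[R^{[2]}]=1$ from Model \ref{mod2}. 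Substituting gives $\widetilde{S}_{t}\left(N_{t},\boldsymbol{\Lambda}\right)=\Lambda^{[2]}N_{t}\exp\big(\beta_{0}^{[2]}N_{t}\big)$ as claimed. The only genuine subtlety is verifying this conditional independence with care, because the model does introduce frequency--severity dependence through the factor $\exp\big(\beta_{0}^{[2]}N_{t}\big)$ in the severity mean; the point is that this dependence is channelled entirely through the observed count $N_{t}$, which is being conditioned on, and not through $R^{[2]}$, which is precisely why averaging over $R^{[2]}$ alone, rather than over the full vector $\boldsymbol{R}$, suffices here.
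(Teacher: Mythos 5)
Your proposal is correct and follows essentially the same route as the paper's proof: the tower property conditioning on $(N_{t},\boldsymbol{R},\boldsymbol{\Lambda})$, the inner conditional mean $N_{t}\Lambda^{[2]}R^{[2]}\exp\big(\beta_{0}^{[2]}N_{t}\big)$ from the gamma assumption, and then averaging out $R^{[2]}$. The only difference is that you explicitly justify $\E\big[R^{[2]}\mid N_{t},\boldsymbol{\Lambda}\big]=1$ via the conditional independence structure, a step the paper's proof applies without comment.
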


\begin{proof} Based on the law of total expectation, we have
\[
\begin{aligned}\widetilde{S}_{t}\left(N_{t},\boldsymbol{\Lambda}\right) & =\E\left[S_{t}|N_{t},\boldsymbol{\Lambda}\right]\\
 & =\E\left[\E\left[S_{t}|N_{t},\boldsymbol{\Lambda},\boldsymbol{R}\right]|N_{t},\boldsymbol{\Lambda}\right]\\
 & =\E\left[N_{t}\E\left[M_{t}|N_{t},\boldsymbol{\Lambda},\boldsymbol{R}\right]|N_{t},\boldsymbol{\Lambda}\right]\\
 & =\E\left[{N\Lambda^{[2]}R^{[2]}\exp\left(\beta_{0}^{[2]}N_{t}\right)|N_{t},\boldsymbol{\Lambda}}\right]\\
 & =\Lambda^{[2]}N_{t}\exp\left(\beta_{0}^{[2]}N_{t}\right).
\end{aligned}
\]
\end{proof}

\begin{definition}\label{def:P_freq}

The B\"uhlmann premium based on the historical frequency information is defined
as

\[
{\rm Prem}_{2}(\boldsymbol{\Lambda}):=\widehat{\alpha}_{0}^{*}+\widehat{\alpha}_{1}^{*}\widetilde{S}_{1}\left(N_{1},\boldsymbol{\Lambda}\right)+\cdots+\widehat{\alpha}_{t}^{*}\widetilde{S}_{t}\left(N_{t},\boldsymbol{\Lambda}\right),
\]
where
\[
\left(\widehat{\alpha}_{0},\cdots,\widehat{\alpha}_{t}\right):=\arg\min\limits _{(\alpha_{0},\cdots,\alpha_{t})\in\R^{t+1}}\E\left[\left(\E\left[S_{t+1}|\boldsymbol{R},\boldsymbol{\Lambda}\right]-\left(\alpha_{0}+\alpha_{1}\widetilde{S}_{1}\left(N_{1},\boldsymbol{\Lambda}\right)+\cdots+\alpha_{t}\widetilde{S}_{t}\left(N_{t},\boldsymbol{\Lambda}\right)\right)\right)^{2}\bigg\vert\boldsymbol{\Lambda}\right].
\]

\end{definition}

Similar to B\"uhlmann type premium in \eqref{def:P_agg},
the classical procedure in \citet{buhlmann2006course} shows that
\begin{equation}
\widehat{\alpha}_{0}^{*}=(1-Z_{2}(\boldsymbol{\Lambda}))\E\left[{\widetilde{S}_{t}\left(N_{t},\boldsymbol{\Lambda}\right)|\boldsymbol{\Lambda}}\right]\quad and\quad\widehat{\alpha}_{1}^{*}=\cdots=\widehat{\alpha}_{t}^{*}=\frac{Z_{2}(\boldsymbol{\Lambda})}{t},   \label{alpha_freq}
\end{equation}
where the B\"uhlmann factor is
\[
Z_{2}(\boldsymbol{\Lambda}):=\frac{t\,\Var\left[{\E\left[\widetilde{S}_{t}\left(N_{t},\boldsymbol{\Lambda}\right)|\boldsymbol{\Lambda},\boldsymbol{R}\right]|\boldsymbol{\Lambda}}\right]}{\E\left[\Var\left[\widetilde{S}_{t}\left(N_{t},\boldsymbol{\Lambda}\right)|\boldsymbol{\Lambda},\boldsymbol{R}\right]|\boldsymbol{\Lambda}\right]+t\,\Var\left[\E\left[\widetilde{S}_{t}\left(N_{t},\boldsymbol{\Lambda}\right)|\boldsymbol{\Lambda},\boldsymbol{R}\right]|\boldsymbol{\Lambda}\right]}.
\]
The following result provides the analytical expression of premium
in \eqref{def:P_freq}.

\begin{proposition}\label{Prop:P_freq_2} Under Model \ref{mod2},
the conditional mean $\E{\left[\widetilde{S}_{t}\left(N_{t},\boldsymbol{\Lambda}\right)|\boldsymbol{\Lambda}\right]}$
and B\"uhlmann factor, $Z_{2}(\Lambda)$, can be expressed as
\[
\E{\left[\widetilde{S}_{t}\left(N_{t},\boldsymbol{\Lambda}\right)|\boldsymbol{\Lambda}\right]}=u_{2}\left(\boldsymbol{\Lambda}\right)\quad and\quad Z_{2}(\boldsymbol{\Lambda})=\frac{ta_{2}\left(\boldsymbol{\Lambda}\right)}{ta_{2}\left(\boldsymbol{\Lambda}\right)+v_{2}\left(\boldsymbol{\Lambda}\right)},
\]
where
\[
u_{2}\left(\boldsymbol{\Lambda}\right)=e^{\boldsymbol{\beta}_{0}^{[2]}}\Lambda^{[1]}\Lambda^{[2]}M_{R^{[1]}}^{\prime}\left(\zeta_{1}\right)
\]

\[
v_{2}\left(\boldsymbol{\Lambda}\right)  =
\Lambda^{[1]}\left(\Lambda^{[2]}\right)^{2}e^{2\beta_{0}^{[2]}} \left[
\Lambda^{[1]}e^{2\beta_{0}^{[2]}}M_{R^{[1]}}^{\prime\prime}\left(\zeta_{2}\right)+
M_{R^{[1]}}^{\prime}\left(\zeta_{2}\right)-
\Lambda^{[1]}M_{R^{[1]}}^{\prime\prime}\left(2\zeta_{1}\right)
\right]
\]
and
\[
a_{2}\left(\boldsymbol{\Lambda}\right)=e^{2\beta_{0}^{[2]}}\left(\Lambda^{[1]}\Lambda^{[2]}\right)^{2}\left[M_{R^{[1]}}^{\prime\prime}\left(2\zeta_{1}\right)-\left(M_{R^{[1]}}^{\prime}\left(\zeta_{1}\right)\right)^{2}\right].
\]

\end{proposition}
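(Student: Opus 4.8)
The plan is to exploit the explicit form of the B\"uhlmann observation from Proposition \ref{Prop:P_freq_1}, namely $\widetilde{S}_t(N_t,\boldsymbol{\Lambda})=\Lambda^{[2]}N_t\exp(\beta_0^{[2]}N_t)$, and to reduce every required moment to derivatives of two moment generating functions: that of the Poisson frequency $N_t\mid\boldsymbol{\Lambda},\boldsymbol{R}\sim\mathrm{Pois}(\Lambda^{[1]}R^{[1]})$ and that of the random effect $R^{[1]}$. The crucial structural observation is that $\widetilde{S}_t$ depends on the random effects only through $R^{[1]}$ (it involves neither $R^{[2]}$ nor the gamma severity dispersion $\psi^{[2]}$), so the computation is strictly cleaner than its aggregate-severity analogue in Proposition \ref{prop:P_agg_2}; this is precisely why the factors $(1+b^{[2]})$ and $(1+\psi^{[2]})$ present there are absent here. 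The three quantities to produce are the overall mean $u_2=\E[\widetilde{S}_t\mid\boldsymbol{\Lambda}]$, the between-risk variance $a_2=\Var[\E[\widetilde{S}_t\mid\boldsymbol{\Lambda},\boldsymbol{R}]\mid\boldsymbol{\Lambda}]$, and the within-risk variance $v_2=\E[\Var[\widetilde{S}_t\mid\boldsymbol{\Lambda},\boldsymbol{R}]\mid\boldsymbol{\Lambda}]$, after which the stated form of $Z_2$ follows by substitution into the B\"uhlmann factor defined just above the proposition.

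First I would compute the hypothetical mean. Using the Poisson identity $\E[N e^{sN}]=\lambda e^{s}\exp(\lambda(e^{s}-1))$ with $\lambda=\Lambda^{[1]}R^{[1]}$ and $s=\beta_0^{[2]}$, and recognising $\Lambda^{[1]}(e^{\beta_0^{[2]}}-1)=\zeta_1$, I obtain
\[
\E[\widetilde{S}_t\mid\boldsymbol{\Lambda},\boldsymbol{R}]=\Lambda^{[1]}\Lambda^{[2]}e^{\beta_0^{[2]}}\,R^{[1]}\exp(\zeta_1 R^{[1]}).
\]
Averaging over $R^{[1]}$ and using $\E[R^{[1]}e^{\zeta_1 R^{[1]}}]=M_{R^{[1]}}^{\prime}(\zeta_1)$ gives $u_2$. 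The between-risk variance $a_2$ then follows immediately by squaring: the second moment of the hypothetical mean above requires $\E[(R^{[1]})^2 e^{2\zeta_1 R^{[1]}}]=M_{R^{[1]}}^{\prime\prime}(2\zeta_1)$, and subtracting $(M_{R^{[1]}}^{\prime}(\zeta_1))^2$ yields exactly $a_2=(\Lambda^{[1]}\Lambda^{[2]})^2 e^{2\beta_0^{[2]}}[M_{R^{[1]}}^{\prime\prime}(2\zeta_1)-(M_{R^{[1]}}^{\prime}(\zeta_1))^2]$.

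The bulk of the work is $v_2$. Here I need the within-risk variance of $N_t e^{\beta_0^{[2]}N_t}$, whose second moment is $\E[N_t^2 e^{2\beta_0^{[2]}N_t}]=M_{N}^{\prime\prime}(2\beta_0^{[2]})$, the second derivative of the Poisson MGF evaluated at $2\beta_0^{[2]}$ rather than at $\beta_0^{[2]}$. Differentiating $M_N(u)=\exp(\lambda(e^u-1))$ twice gives $M_N^{\prime\prime}(u)=\lambda e^{u}(1+\lambda e^{u})\exp(\lambda(e^u-1))$; evaluating at $u=2\beta_0^{[2]}$, recognising $\Lambda^{[1]}(e^{2\beta_0^{[2]}}-1)=\zeta_2$, and then subtracting $(M_N^{\prime}(\beta_0^{[2]}))^2$ gives the conditional process variance. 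After factoring out $e^{2\beta_0^{[2]}}$ and taking the expectation over $R^{[1]}$ term by term---using $\E[R^{[1]}e^{\zeta_2 R^{[1]}}]=M_{R^{[1]}}^{\prime}(\zeta_2)$, $\E[(R^{[1]})^2 e^{\zeta_2 R^{[1]}}]=M_{R^{[1]}}^{\prime\prime}(\zeta_2)$, and $\E[(R^{[1]})^2 e^{2\zeta_1 R^{[1]}}]=M_{R^{[1]}}^{\prime\prime}(2\zeta_1)$---one recovers the stated $v_2$.

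The main obstacle is exactly this last step: keeping the three distinct MGF arguments $\zeta_1$, $2\zeta_1$, and $\zeta_2$ straight through the expansion, since the variance of $N e^{sN}$ mixes a second derivative of $M_N$ at $2s$ with the square of a first derivative at $s$, and a single misplaced argument would collapse the two-term structure inside the bracket of $v_2$. Everything else is an application of the moment-generating-function derivative identities and the law of total variance. Once $a_2$ and $v_2$ are in hand, substitution into $Z_2=ta_2/(ta_2+v_2)$ completes the proof.
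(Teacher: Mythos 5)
Your proposal is correct and follows essentially the same route as the paper: the paper delegates $u_{2}$ and $v_{2}$ to Lemma \ref{dan.lem.3} (which performs exactly the Poisson moment-generating-function derivative computation you describe, evaluated at $\beta_{0}^{[2]}$ and $2\beta_{0}^{[2]}$ and then averaged over $R^{[1]}$ via $M_{R^{[1]}}^{\prime}$ and $M_{R^{[1]}}^{\prime\prime}$ at $\zeta_{1}$, $2\zeta_{1}$, $\zeta_{2}$), and computes $a_{2}$ as the variance of the hypothetical mean exactly as you do. Your observation that the factors $\left(1+b^{[2]}\right)$ and $\left(1+\psi^{[2]}\right)$ drop out because $\widetilde{S}_{t}$ involves neither $R^{[2]}$ nor the severity dispersion is accurate and consistent with the paper's computations.
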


\begin{proof}
Proofs of $u_{2}$ and $v_{2}$ immediately follow from Lemma
\ref{dan.lem.3}. Proof of $a_{2}$ is from
\[
\begin{aligned}a_{2}\left(\boldsymbol{\Lambda}\right) & =\Var{\left[u_{2}\left(\boldsymbol{R},\boldsymbol{\Lambda}\right)|\boldsymbol{\Lambda}\right]}\\
 & =e^{2\beta_{0}^{[2]}}\left(\Lambda^{[1]}\Lambda^{[2]}\right)^{2}\Var{\left[R^{[1]}\exp\left(\Lambda^{[1]}R^{[1]}\left(e^{\beta_{0}^{[2]}}-1\right)\right)|\boldsymbol{\Lambda}\right]}\\
 & =e^{2\beta_{0}^{[2]}}\left(\Lambda^{[1]}\Lambda^{[2]}\right)^{2}\bigg(\E{\left[\left(R^{[1]}\right)^{2}\exp\left(2\Lambda^{[1]}R^{[1]}\left(e^{\beta_{0}^{[2]}}-1\right)\right)|\boldsymbol{\Lambda}\right]}\\
 & \quad\quad\quad\quad\quad\quad\quad\quad\quad\quad\quad\quad-\left(\E{\left[R^{[1]}\exp\left(\Lambda^{[1]}R^{[1]}\left(e^{\beta_{0}^{[2]}}-1\right)\right)|\boldsymbol{\Lambda}\right]}\right)^{2}\bigg)\\
 & =e^{2\beta_{0}^{[2]}}\left(\Lambda^{[1]}\Lambda^{[2]}\right)^{2}\left[M_{R^{[1]}}^{\prime\prime}\left(2\zeta_{1}\right)-\left(M_{R^{[1]}}^{\prime}\left(\zeta_{1}\right)\right)^{2}\right].
\end{aligned}
\]
\end{proof}

Note that the statistics in Proposition \ref{Prop:P_freq_2} can be
further explicitly calculated based on the expression for the moment
generating function of $R^{[1]}$ in Lemma \ref{dan.lem.2}. Hence,
we have
\begin{equation}
{\rm Prem}_{2}(\boldsymbol{\Lambda})=Z_{2}(\boldsymbol{\Lambda})\frac{\sum\limits _{k=1}^{t}\widetilde{S}_{k}\left(N_{k},\boldsymbol{\Lambda}\right)}{t}+\left(1-Z_{2}(\boldsymbol{\Lambda})\right)u_{2}\left(\boldsymbol{\Lambda}\right), \label{oh.eq.2}
\end{equation}
where B\"uhlmann factor  $Z_{2}(\boldsymbol{\Lambda})$
is described in Proposition \ref{prop:P_agg_2} and \ref{Prop:P_freq_1}.

\subsection{Linkage with B\"uhlmann Premium for the Frequency}

This section assumes the independence between frequency and individual
severities by assuming $\beta_{0}^{[2]}=0$ in Model \ref{mod2}. From the
assumption $\beta_{0}^{[2]}=0$ in Model \ref{mod2}, we have the following
intuitive interpretation about the B\"uhlmann observation and B\"uhlmann
premium based on the historical frequency information
\[
\widetilde{S}_{t}\left(N_{t},\boldsymbol{\Lambda}\right)=N_{t}\Lambda^{[2]}
\]
and
\begin{equation}
{\rm Prem}_{2}(\boldsymbol{\Lambda})=\Lambda^{[2]}\left(Z_{2}^{*}(\boldsymbol{\Lambda})\frac{\sum\limits_{k=1}^t N_{k}}{t}+(1-Z_{2}^{*}(\boldsymbol{\Lambda}))\Lambda^{[1]}\right),  \label{ahn.12}
\end{equation}
where
\[
Z_{2}^{*}(\boldsymbol{\Lambda})=\frac{ta_{2}^{*}\left(\boldsymbol{\Lambda}\right)}{ta_{2}^{*}\left(\boldsymbol{\Lambda}\right)+v_{2}^{*}\left(\boldsymbol{\Lambda}\right)}
\]
with
\[
v_{2}^{*}\left(\boldsymbol{\Lambda}\right):=\E{\left[\Var{\left[N_{t}|\boldsymbol{R},\boldsymbol{\Lambda}\right]}|\boldsymbol{\Lambda}\right]}\quad and\quad a_{2}^{*}\left(\boldsymbol{\Lambda}\right):=\Var{\left[\E{\left[N_{t}|\boldsymbol{R},\boldsymbol{\Lambda}\right]}|\boldsymbol{\Lambda}\right]}.
\]
Note that the expression
\[
Z_{2}^{*}(\boldsymbol{\Lambda})\frac{\sum\limits_{k=1}^t N_{k}}{t}+(1-Z_{2}^{*}(\boldsymbol{\Lambda}))\Lambda^{[1]}
\]
in \eqref{ahn.12} coincides with the B\"uhlmann premium of frequency
defined by
\begin{equation}
{\rm Prem}_{3}(\boldsymbol{\Lambda}):=\widehat{\alpha}_{0}+\widehat{\alpha}_{1}N_{1}+\cdots+\widehat{\alpha}_{t}N_{t}\label{ahn.11}
\end{equation}
where
\[
\left(\widehat{\alpha}_{0},\cdots,\widehat{\alpha}_{t}\right):=\arg\min\limits _{(\alpha_{0},\cdots,\alpha_{t})\in\R^{t+1}}\E{\left[\Big(\E{\left[N_{t+1}|\boldsymbol{R},\boldsymbol{\Lambda}\right]}-\left(\alpha_{0}+\alpha_{1}N_{1}+\cdots+\alpha_{t}N_{t}\right)\Big)^{2}\Big\vert\boldsymbol{\Lambda}\right]}.
\]

\section{Numerical comparisons of the two B\"uhlmann premiums\label{sec:Criteria}}

The aggregate claim amount is the key element for an insurer\textquoteright s
balance sheet, as it represents the amount of money paid on claims,
hence they must understand the dynamics of the aggregate claim overtime.
Yet, depending on features of contracts as well as policyholder behaviour
and risk mitigation practices, some insurance products use only partial
information on the claim history for the posteriori risk classification,
i.e. the frequency of claims or the aggregate claim amounts, but not
both. We conduct a numerical study to investigate the effect of various
dependence structures in Model \ref{mod2} on the two proposed B\"uhlmann
premiums, ${\rm Prem}_{1}(\boldsymbol{\Lambda})$ and ${\rm Prem}_{2}(\boldsymbol{\Lambda})$
in Section \ref{sec:Two-B=0000FChlmann-Premiums} that based on the
frequency and the aggregate claim respectively. The following analysis
forms the basis of choosing the appropriate pricing and risk mitigation
practices for standard general insurance portfolios.

\subsection{Numerical Set-up}

We assume only one priori risk class for the simplicity of the analysis
and the following parametric assumption for the unobserved heterogeneities
\[
\begin{cases}
R^{[1]}\sim & IG(1,b^{[1]});\\
R^{[2]}~\sim & {\rm Gamma}(1,b^{[2]}),
\end{cases}
\]
where IG is inverse Gaussian distribution.
In particular, we set
\[
\lambda^{[1]}=\lambda_{0}^{[1]}\quad\hbox{and}\quad\lambda^{[2]}=\lambda_{0}^{[2]}
\]
with $\lambda_{0}^{[1]}=\exp(-1.9)$ and $\lambda_{0}^{[2]}=\exp(8.4)$.
The parameters $(b^{[1]},b^{[2]},\beta_{0}^{[2]})$ varies in 27 scenarios
that different dependence structure is considered. Note here, $\beta_{0}^{[2]}$
controls the dependence between the frequency and severity whereas
$b^{[1]}$ and $b^{[2]}$ controls the correlation among the frequencies
and severities over time, respectively. For each scenario with the
combination of parameters $\left(b^{[1]},b^{[2]},\beta_{0}^{[2]}\right)$,
we choose
\[
\psi_{0}^{[2]}=\left\{ \dfrac{c}{(\lambda^{[2]})^{2}}+(M_{R^{[1]}}(\zeta_{1}))^{2}\right\} \dfrac{1}{(1+b^{[2]})M_{R^{[1]}}(\zeta_{2})}-1,
\]
so that $c:=\Var{\left[Y_{t,j}|\boldsymbol{\Lambda}=\left(\lambda_{0}^{[1]},\lambda_{0}^{[2]}\right)\right]}=2.008$
is fixed.

For the comparison of the two B\"uhlmann premiums, define the conditional
mean square errors as
\[
{\rm HMSE}_{1}\left(\boldsymbol{\Lambda},t\right):=\E{\left[\left(\E{\left[S_{t+1}|\boldsymbol{R},\boldsymbol{\Lambda}\right]}-{\rm Prem}_{1}\left(\boldsymbol{\Lambda},\mathcal{F}_{t}^{[{\rm agg}]}\right)\right)^{2}|\boldsymbol{\Lambda}\right]}
\]
and
\[
{\rm HMSE}_{2}\left(\boldsymbol{\Lambda},t\right):=\E{\left[\left(\E{\left[S_{t+1}|\boldsymbol{R},\boldsymbol{\Lambda}\right]}-{\rm Prem}_{2}\left(\boldsymbol{\Lambda},\mathcal{F}_{t}^{[{\rm freq}]}\right)\right)^{2}|\boldsymbol{\Lambda}\right]}
\]
for the two B\"uhlmann premiums obtained under Model \ref{mod2}. Using
the above definition, we have
\begin{equation}
{\rm HMSE}_{1}\left(t\right):=\sum\limits _{\kappa\in\mathcal{K}}w_{\kappa}\,{\rm HMSE}_{1}\left(\boldsymbol{\lambda}_{\kappa},t\right)\quad and\quad{\rm HMSE}_{2}\left(t\right):=\sum\limits _{\kappa\in\mathcal{K}}w_{\kappa}\,{\rm HMSE}_{2}\left(\boldsymbol{\lambda}_{\kappa},t\right).\label{HMSE}
\end{equation}
Specific formulas in \eqref{HMSE} can be found in Proposition \ref{prop.11}
(see the Appendix).

\subsection{The Case of Independent between frequency and individual severities}

First, we consider an independence between the frequency and individual
severities, i.e. $\beta_{0}^{[2]}=0$, while $b^{[1]}$ and $b^{[2]}$
is allowed to vary in nine scenarios that different correlations
are implied among the frequencies and the individual severities, respectively.
The comparison of MSEs is in Figure \ref{fig.numeric1}.

\begin{figure}[H]
\centering \includegraphics[width=1\textwidth]{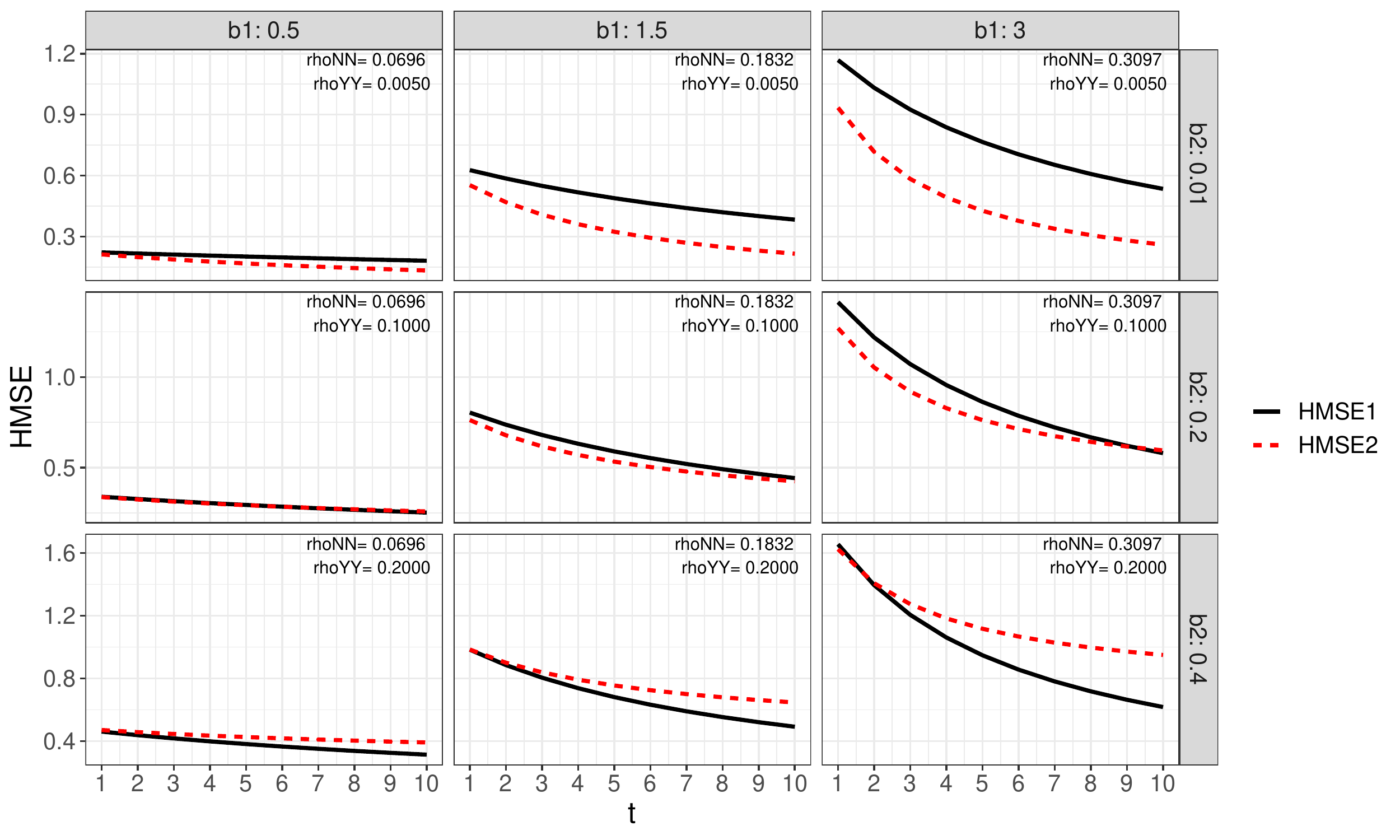}
\caption{The values of ${\rm HMSE}_{1}(t)$ and ${\rm HMSE}_{2}(t)$ for $\beta_{0}=0$}
\label{fig.numeric1}
\end{figure}

Theoretically, when there is \emph{a relatively weak dependence among
individual severities $b^{[2]}\approx0$ while a relatively strong
dependence among frequencies, $b^{[1]}>>0$}, we expect that the claim
history of frequency has the most predictive power for the premium,
while the additional information on the severities provides little
benefit. Especially, in the the extreme case of $b^{[2]}=0$ and $b^{[1]}>>0$,
$\Fn$ is the only valid information for predicting premiums, the
prediction $\E{\left[S_{t+1}|\Fa,\boldsymbol{\Lambda}\right]}$ suffers
from the loss of information. This is demonstrated by the following
comparison of premiums

\[
\E{\left[S_{t+1}|\mathcal{F}_{t}^{[agg]},\boldsymbol{\Lambda}\right]}=\Lambda^{[1]}\Lambda^{[2]}\E{\left[R^{^{[1]}}|\mathcal{F}_{t}^{[agg]},\boldsymbol{\Lambda}\right]}
\]
and
\[
\E{\left[S_{t+1}|\mathcal{F}_{t}^{[freq]},\boldsymbol{\Lambda}\right]}=\Lambda^{[1]}\Lambda^{[2]}\E{\left[R^{^{[1]}}|\mathcal{F}_{t}^{[freq]},\boldsymbol{\Lambda}\right]},
\]
where the equalities in both of expressions are from the assumption
$b^{[2]}=0$. Together with
\[
\E{\left[\left(R^{^{[1]}}-\E{\left[R^{^{[1]}}|\mathcal{F}_{t}^{[agg]},\boldsymbol{\Lambda}\right]}\right)^{2}|
\boldsymbol{\Lambda}\right]}\ge
\E{\left[\left(R^{^{[1]}}-\E{\left[R^{^{[1]}}|\mathcal{F}_{t}^{[freq]},\boldsymbol{\Lambda}\right]}\right)^{2}|\boldsymbol{\Lambda}\right]},
\]
it implies that
\[
{\rm HMSE}_{1}\left(\boldsymbol{\Lambda},t\right)\ge{\rm HMSE}_{2}\left(\boldsymbol{\Lambda},t\right).
\]
if B\"uhlmann premiums are not very different from  posteriori mean of the aggregate severity.
As shown in Figure \ref{fig.numeric1}, for the case where $(b^{[1]},b^{[2]})=(3,0.01)$
, ${\rm Prem}_{2}(\boldsymbol{\Lambda})$ outperforms ${\rm Prem}_{1}(\boldsymbol{\Lambda})$
consistently over time while the absolute values of HMSEs increase
as the variance of $R^{[1]}$ increases.

On the other hand, when there is \emph{a relatively strong dependence
among individual severities $b^{[2]}>0$ while relatively weak dependences
among frequencies $b^{[1]}=0$}, we expect, in theory, that only
the historical aggregate severity information provides meaningful information in the
prediction of premium. Especially in the extreme case $b^{[2]}>>00$
and $b^{[1]}=0$, $\mathcal{F}_{t}^{[freq]}$ does not provide any
information for the prediction of premium, while $\Fa$ can provide
some information for the prediction. Such a difference can be explained
by the fact that the equations
\[
\E{\left[S_{t+1}|\mathcal{F}_{t}^{[agg]},\boldsymbol{\Lambda}\right]}=\Lambda^{[1]}\Lambda^{[2]}\E{\left[R^{^{[2]}}|\mathcal{F}_{t}^{[agg]},\boldsymbol{\Lambda}\right]}
\]
and
\[
\begin{aligned}\E{\left[S_{t+1}|\mathcal{F}_{t}^{[freq]},\boldsymbol{\Lambda}\right]} & =\Lambda^{[1]}\Lambda^{[2]}\E{\left[R^{^{[2]}}|\mathcal{F}_{t}^{[freq]},\boldsymbol{\Lambda}\right]}\\
 & =\Lambda^{[1]}\Lambda^{[2]}
\end{aligned}
\]
together with the inequality
\[
\E{\left[\left(R^{^{[2]}}-\E{\left[R^{^{[2]}}|\mathcal{F}_{t}^{[agg]},\boldsymbol{\Lambda}\right]}\right)^{2}|\boldsymbol{\Lambda}\right]}\le\E{\left[\left(R^{^{[2]}}-\E{\left[R^{^{[2]}}|\mathcal{F}_{t}^{[freq]},\boldsymbol{\Lambda}\right]}\right)^{2}|\boldsymbol{\Lambda}\right]}
\]
implies that
\[
{\rm HMSE}_{1}\left(\boldsymbol{\Lambda},t\right)\le{\rm HMSE}_{2}\left(\boldsymbol{\Lambda},t\right).
\]
if B\"uhlmann premiums are not very different from  posteriori mean of the aggregate severity.
Indeed as shown in Figure \ref{fig.numeric1}, for the cases $(b^{[1]},b^{[2]})=(0.5,0.4)$,
${\rm Prem}_{1}(\boldsymbol{\Lambda})$ outperforms ${\rm Prem}_{2}(\boldsymbol{\Lambda})$
consistently over time.

Another interesting point is the asymptotic behaviour of ${\rm HMSE}_{1}\left(\boldsymbol{\Lambda},t\right)$
and ${\rm HMSE}_{2}\left(\boldsymbol{\Lambda},t\right)$ for larger
$t$ as seen in Figure \ref{fig.numeric1_1}. First, ${\rm HMSE}_{1}\left(\boldsymbol{\Lambda},t\right)$
converges to zero as $t$ increases. Such a convergence is an expected
result because of the convergence of $\frac{\sum\limits _{k=1}^{t}S_{k}}{t}\rightarrow\E{\left[S_{t}|\boldsymbol{R},\boldsymbol{\Lambda}\right]}$
in \eqref{eq:P_agg}. On the other hand, the convergence of ${\rm HMSE}_{2}\left(\boldsymbol{\Lambda},t\right)$
to zero is not guaranteed. This is also expected as the convergence
of
\[
\frac{\sum\limits _{k=1}^{t}\widetilde{S}_{k}\left(N_{k},\boldsymbol{\Lambda}\right)}{t}\rightarrow\E{\left[S_{t}|\boldsymbol{R},\boldsymbol{\Lambda}\right]}
\]
is not guaranteed in general. Instead,
\[
\frac{\sum\limits _{k=1}^{t}\widetilde{S}_{k}\left(N_{k},\boldsymbol{\Lambda}\right)}{t}
\]
converges to
\[
\begin{aligned}\E{\left[\widetilde{S}_{t}\left(N_{t},\boldsymbol{\Lambda}\right)\big\vert \boldsymbol{R},\boldsymbol{\Lambda}\right]} & =\Lambda^{[2]}\E{\left[N_{t}|\boldsymbol{R},\boldsymbol{\Lambda}\right]}\\
 & =\Lambda^{[1]}\Lambda^{[2]}R^{[1]}
\end{aligned}
\]
which further implies the convergence of the B\"uhlmann premium ${\rm Prem}_{2}(\boldsymbol{\Lambda})$
in \eqref{ahn.12} to
\[
\Lambda^{[1]}\Lambda^{[2]}R^{[1]}
\]
as the number of observations, $t$, increases. Hence, ${\rm HMSE}_{2}\left(\boldsymbol{\Lambda},t\right)$
in such case can be written as
\[
\lim\limits _{t\rightarrow\infty}{\rm HMSE}_{2}\left(\boldsymbol{\Lambda},t\right)=\E\bigg[\E\left[\left(\Lambda^{[1]}\Lambda^{[2]}R^{[1]}R^{[2]}-\Lambda^{[1]}\Lambda^{[2]}R^{[1]}\right)^{2}\right] \Big\vert\boldsymbol{\Lambda}\bigg]\ge 0,
\]
where the equality holds if $\P\left(R^{[2]}=1\right)=1$.

\begin{figure}[H]
\centering \includegraphics[width=1\textwidth]{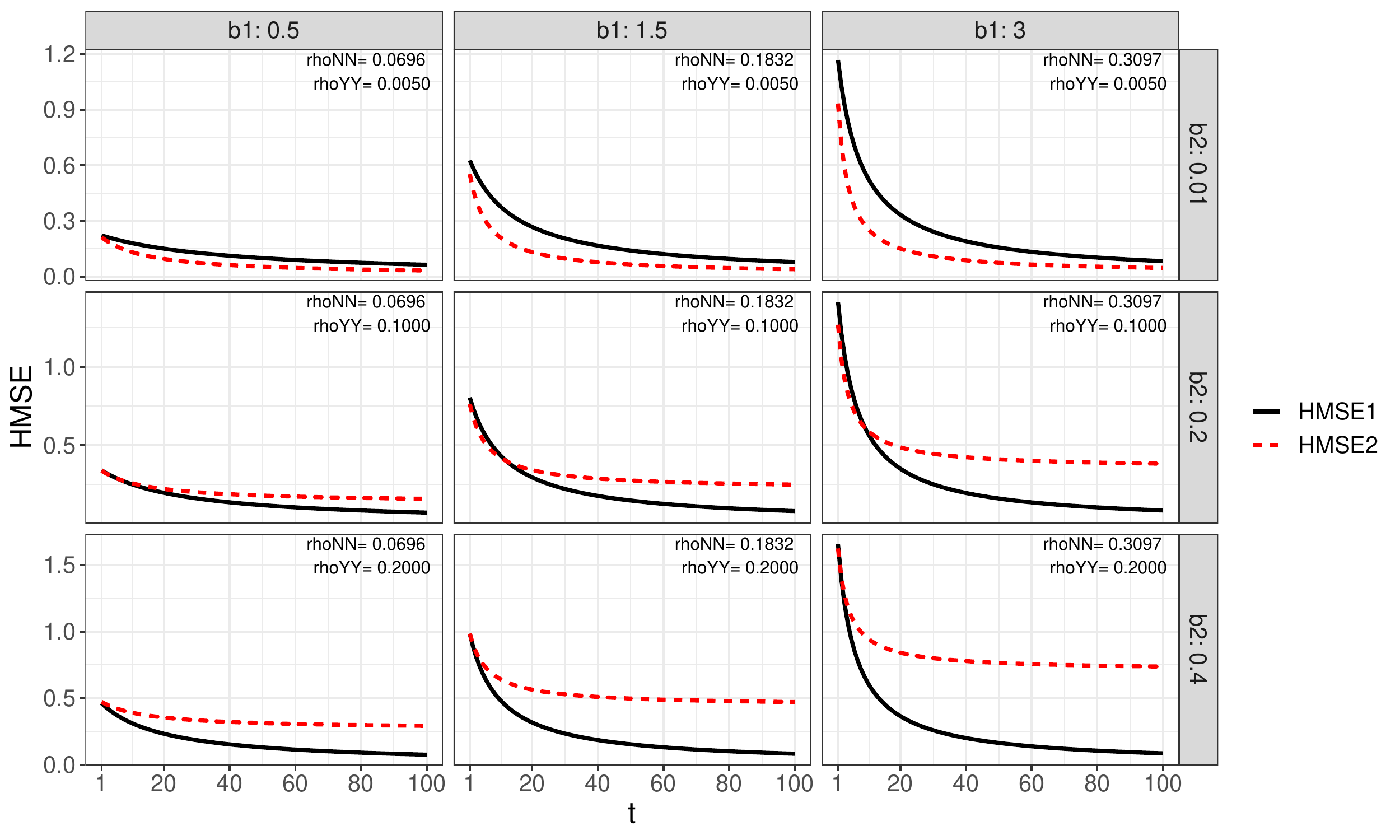}
\caption{The values of ${\rm HMSE}_{1}(t)$ and ${\rm HMSE}_{2}(t)$ for $\beta_{0}=0$ }
\label{fig.numeric1_1}
\end{figure}

\subsection{The Case of dependent between frequency and individual severities}

Here, we consider the case where both the frequency and individual
severities are dependent, which is the case $\beta_{0}^{[2]}\neq0$.
Motivated by the real data analysis in Section \ref{sec:Application}, a moderate dependence
with $\beta_{0}^{[2]}=-0.05$ and a relatively strong dependence with
$\beta_{0}^{[2]}=-0.1$ are assumed in Figures \ref{fig.numeric2}
and \ref{fig.numeric3} respectively. The overall patterns are similar
to the independent case $\beta_{0}^{[2]}=0$ that while\emph{ }a relatively
weak dependence among individual severities is combined with a relatively
strong dependence among frequencies\emph{, }the historical frequency information
has the most predictive power for the premium, and vice versa.

Specific results for HMSE are summarized in Table \ref{tab.numeric.hmse}.

\begin{table}[H]
\caption{(Numerical example) Mean square error $(10^{6})$}
\vspace{-0.05in}
 \centering \resizebox{\linewidth}{!}{
\begin{tabular}{lcccccccccccccccccccc}
\hline
 &  &  & $b^{[1]}$ & \multicolumn{3}{c}{0.5} &  & \multicolumn{3}{c}{1.5} &  & \multicolumn{3}{c}{3} &  &  &  &  &  & \tabularnewline
\hline
$\beta_{0}^{[2]}$ & $b^{[2]}$ &  & t & 1 & 5 & 10 &  & 1 & 5 & 10 &  & 1 & 5 & 10 &  &  &  &  &  & \tabularnewline
\hline
\hline
\multirow{6}{*}{0} & \multirow{2}{*}{0.01} & ${\rm HMSE}_{1}(t)$ &  & 0.1652 & 0.1509 & 0.1363 &  & 0.4325 & 0.3446 & 0.2748 &  & 0.7298 & 0.5031 & 0.3624 &  &  &  &  &  & \tabularnewline
 &  & ${\rm HMSE}_{2}(t)$ &  & 0.1579 & 0.1316 & 0.1180 &  & 0.3821 & 0.2650 & 0.2260 &  & 0.5878 & 0.3690 & 0.3171 &  &  &  &  &  & \tabularnewline
\cline{2-21} \cline{3-21} \cline{4-21} \cline{5-21} \cline{6-21} \cline{7-21} \cline{8-21} \cline{9-21} \cline{10-21} \cline{11-21} \cline{12-21} \cline{13-21} \cline{14-21} \cline{15-21} \cline{16-21} \cline{17-21} \cline{18-21} \cline{19-21} \cline{20-21} \cline{21-21}
 & \multirow{2}{*}{0.2} & ${\rm HMSE}_{1}(t)$ &  & 0.2565 & 0.2237 & 0.1928 &  & 0.5615 & 0.4214 & 0.3212 &  & 0.8941 & 0.5749 & 0.3975 &  &  &  &  &  & \tabularnewline
 &  & ${\rm HMSE}_{2}(t)$ &  & 0.2551 & 0.2288 & 0.2151 &  & 0.5326 & 0.4154 & 0.3764 &  & 0.8036 & 0.5848 & 0.5329 &  &  &  &  &  & \tabularnewline
\cline{2-21} \cline{3-21} \cline{4-21} \cline{5-21} \cline{6-21} \cline{7-21} \cline{8-21} \cline{9-21} \cline{10-21} \cline{11-21} \cline{12-21} \cline{13-21} \cline{14-21} \cline{15-21} \cline{16-21} \cline{17-21} \cline{18-21} \cline{19-21} \cline{20-21} \cline{21-21}
 & \multirow{2}{*}{0.4} & ${\rm HMSE}_{1}(t)$ &  & 0.3500 & 0.2915 & 0.2411 &  & 0.6916 & 0.4900 & 0.3592 &  & 1.0565 & 0.6364 & 0.4251 &  &  &  &  &  & \tabularnewline
 &  & ${\rm HMSE}_{2}(t)$ &  & 0.3573 & 0.3310 & 0.3173 &  & 0.6910 & 0.5738 & 0.5348 &  & 1.0307 & 0.8119 & 0.7601 &  &  &  &  &  & \tabularnewline
\hline
\multirow{6}{*}{-0.05} & \multirow{2}{*}{0.01} & ${\rm HMSE}_{1}(t)$ &  & 0.1913 & 0.1744 & 0.1570 &  & 0.5190 & 0.4092 & 0.3236 &  & 0.9161 & 0.6164 & 0.4376 &  &  &  &  &  & \tabularnewline
 &  & ${\rm HMSE}_{2}(t)$ &  & 0.1829 & 0.1482 & 0.1250 &  & 0.4580 & 0.2918 & 0.2224 &  & 0.7345 & 0.3967 & 0.2964 &  &  &  &  &  & \tabularnewline
\hline
 & \multirow{2}{*}{0.2} & ${\rm HMSE}_{1}(t)$ &  & 0.2951 & 0.2566 & 0.2206 &  & 0.6699 & 0.4973 & 0.3762 &  & 1.1156 & 0.6999 & 0.4775 &  &  &  &  &  & \tabularnewline
 &  & ${\rm HMSE}_{2}(t)$ &  & 0.2935 & 0.2587 & 0.2356 &  & 0.6353 & 0.4691 & 0.3997 &  & 1.0017 & 0.6639 & 0.5637 &  &  &  &  &  & \tabularnewline
\hline
 & \multirow{2}{*}{0.4} & ${\rm HMSE}_{1}(t)$ &  & 0.4014 & 0.3333 & 0.2750 &  & 0.8220 & 0.5761 & 0.4193 &  & 1.3125 & 0.7716 & 0.5092 &  &  &  &  &  & \tabularnewline
 &  & ${\rm HMSE}_{2}(t)$ &  & 0.4098 & 0.3751 & 0.3520 &  & 0.8219 & 0.6557 & 0.5863 &  & 1.2831 & 0.9453 & 0.8450 &  &  &  &  &  & \tabularnewline
\hline
\multirow{6}{*}{-0.1} & \multirow{2}{*}{0.01} & ${\rm HMSE}_{1}(t)$ &  & 0.2221 & 0.2019 & 0.1813 &  & 0.6270 & 0.4888 & 0.3833 &  & 1.1679 & 0.7651 & 0.5346 &  &  &  &  &  & \tabularnewline
 &  & ${\rm HMSE}_{2}(t)$ &  & 0.2125 & 0.1676 & 0.1332 &  & 0.5531 & 0.3238 & 0.2157 &  & 0.9339 & 0.4269 & 0.2596 &  &  &  &  &  & \tabularnewline
\hline
 & \multirow{2}{*}{0.2} & ${\rm HMSE}_{1}(t)$ &  & 0.3404 & 0.2951 & 0.2530 &  & 0.8046 & 0.5905 & 0.4431 &  & 1.4134 & 0.8633 & 0.5808 &  &  &  &  &  & \tabularnewline
 &  & ${\rm HMSE}_{2}(t)$ &  & 0.3385 & 0.2937 & 0.2593 &  & 0.7632 & 0.5340 & 0.4258 &  & 1.2701 & 0.7631 & 0.5958 &  &  &  &  &  & \tabularnewline
\hline
 & \multirow{2}{*}{0.4} & ${\rm HMSE}_{1}(t)$ &  & 0.4614 & 0.3819 & 0.3143 &  & 0.9835 & 0.6814 & 0.4924 &  & 1.6554 & 0.9480 & 0.6179 &  &  &  &  &  & \tabularnewline
 &  & ${\rm HMSE}_{2}(t)$ &  & 0.4713 & 0.4265 & 0.3920 &  & 0.9844 & 0.7552 & 0.6470 &  & 1.6240 & 1.1171 & 0.9497 &  &  &  &  &  & \tabularnewline
\hline
 &  &  &  &  &  &  &  &  &  &  &  &  &  &  &  &  &  &  &  & \tabularnewline
\end{tabular}} 
\label{tab.numeric.hmse}
\end{table}

\bigskip{}

\begin{figure}[hpt!]
\centering \includegraphics[width=1\textwidth]{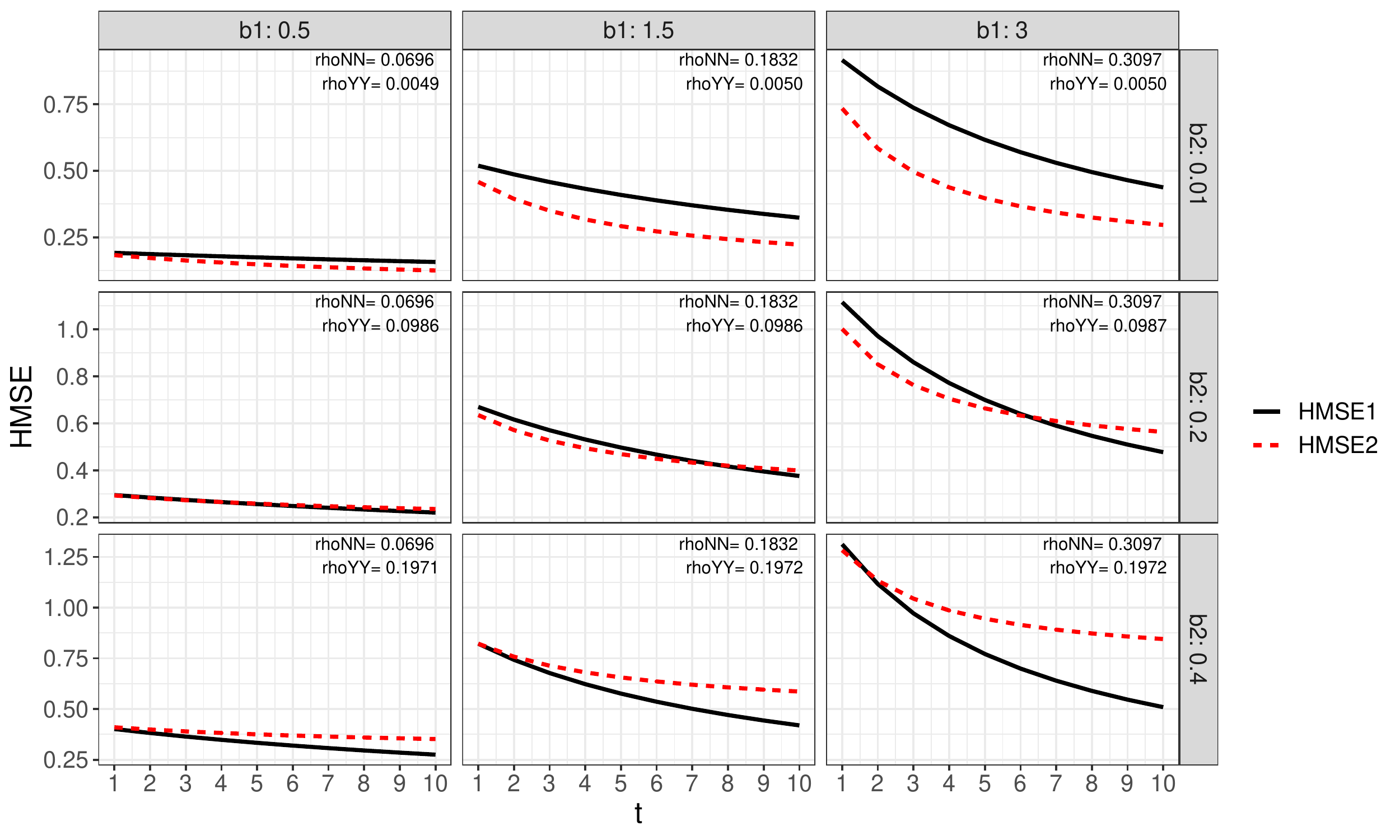}
\caption{The values of ${\rm HMSE}_{1}(t)$ and ${\rm HMSE}_{2}(t)$ for $\beta_{0}=-0.05$}
\label{fig.numeric2}
\end{figure}

\begin{figure}[H]
\centering \includegraphics[width=1\textwidth]{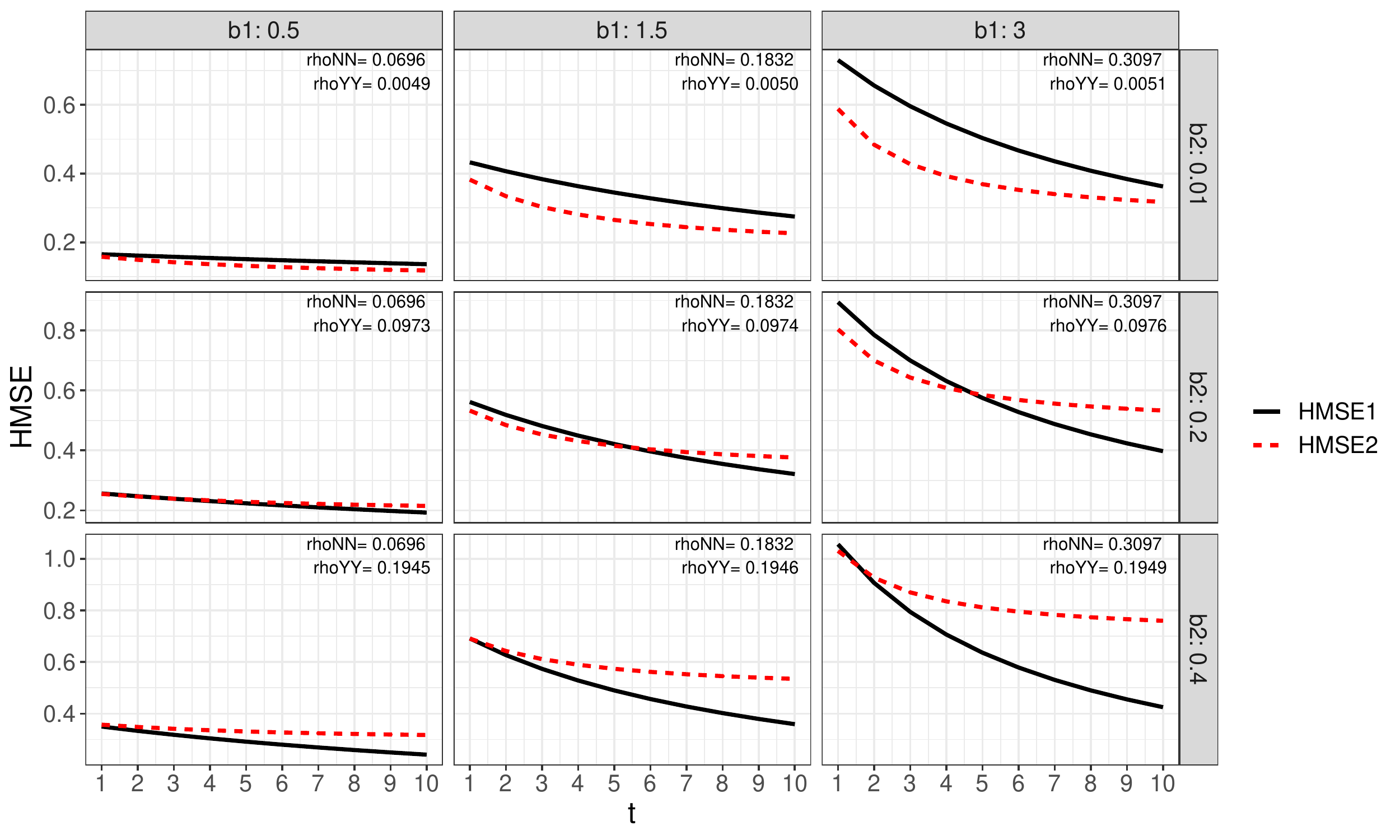}
\caption{The values of ${\rm HMSE}_{1}(t)$ and ${\rm HMSE}_{2}(t)$ for $\beta_{0}=-0.1$}
\label{fig.numeric3}
\end{figure}

In conclusion, we construct the following guidelines for practitioners
in choosing the appropriate posteriori risk classification approach
\begin{enumerate}
\item choose ${\rm Prem}_{1}(\boldsymbol{\Lambda})$ when there is a stronger
dependence among individual severities than that among frequencies
\item choose ${\rm Prem}_{2}(\boldsymbol{\Lambda})$ when there is a stronger
dependence among individual frequencies than that among individual
severities
\item the choice needs to be made dynamically over time as ${\rm HMSE}_{1}\left(\boldsymbol{\Lambda},t\right)\rightarrow0$
and ${\rm HMSE}_{2}\left(\boldsymbol{\Lambda},t\right)\rightarrow C>0.$
\end{enumerate}

\section{Application to Auto Insurance in Wisconsin Local Government Property
Insurance Fund\label{sec:Application}}

We illustrate our approach using data from the Wisconsin Local Government
Property Insurance Fund as in \citet{Frees4}. This fund offers insurance
protection for (i) property; (ii) motor vehicle; and (iii) contractors\textquoteright{}
equipment claims. Detailed information on the project is available
on the LGPIF project website. The LGPIF provides property insurance
for various governmental entities, including counties, cities, towns,
villages, school districts, fire departments, and other miscellaneous
entities. Collision coverage provides coverage for the impact of a
vehicle with an object, the impact of a vehicle with an attached
vehicle, or the overturn of a vehicle.

\subsection{Empirical Specification}

For the training sample data, we have used the longitudinal data from
1,234 local government entities cover from 2006 to 2010. We also
have hold-out sample data with 1098 observations from 379 local government
entities in the year of 2011. We removed the observations for policyholders
whose new collision coverage and old collision coverage are zero.
Hence, we use longitudinal data from 497 governmental entities in
our data analysis.\footnote{We adjust the values of the individual severity in the training sample
data so that the average individual severity in each year coincides
with the average individual severity in the year of 2011. 
} \textcolor{black}{We have two categorical variables:}
\begin{enumerate}
\item \textcolor{black}{the entity type with six levels, miscellaneous,
city, county, school, town and village, and average, }
\item \textcolor{black}{the coverage with three levels, coverage 1 $\in(0,0.14]$,
coverage 2 $\in(0.14,0.74]$, and coverage 3 $\in(0.74,\infty]$. }
\end{enumerate}
Under the settings in Model \ref{mod2}, we further assume
\[
R^{[1]}\sim{\rm IG}(1,b^{[1]}),\quad and\quad R^{[2]}\sim{\rm Gamma}(1,b^{[2]})
\]
so that
\[
\E\left[{R^{[1]}}\right]=\E{\left[R^{[2]}\right]}=1\quad and\quad\frac{\Var{\left[R^{[1]}\right]}}{b^{[1]}}=\frac{\Var\left[{R^{[2]}}\right]}{b^{[2]}}=1.
\]

\subsection{Estimation via Bayesian MCMC}

Our model specification in Section \ref{subsec:assump_CRM} is in
the form of multivariate nonlinear time-series with random effects,
that its estimation can be problematic in practice. Bayesian Econometric
methods (\citet{koop2003bayesian}) have made its popularity over
the last decade for its theoretical novelty and empirical performance,
especially for its application in economics and finance. The application
to the Actuarial research community has flourished over the last decades
(see \citet{klugman2013bayesian} and \citet{makov1996bayesian})
due to its intrinsic compatibility with Actuarial credibility theory.

To estimate the model under the Bayesian framework, we assume multivariate
Gaussian priors for the regression coefficients, i.e.

\[
\beta^{[1]}\sim {\rm MVN}(a_{0}^{[1]},A_{0}^{[1]})\quad\quad\beta^{[2]}\sim {\rm MVN}(a_{0}^{[2]},A_{0}^{[2]})
\]
and assume conjugate prior structure for
\[
\beta_{0}^{[2]}\sim {\rm N}(c_{0},d_{0}),\quad\psi^{[2]}\sim {\rm IGAM}(\alpha_{\psi},\delta_{\psi}),\quad b^{[1]}\sim {\rm IGAM}(\alpha_{b^{[2]}},\delta_{b^{[1]}})\; \hbox{and} \;b^{[2]}\sim {\rm IGAM}(\alpha_{b^{[2]}},\delta_{b^{[2]}}),
\]
where
$a_{0}^{[1]},A_{0}^{[1]},a_{0}^{[2]},A_{0}^{[2]},c_{0},d_{0},\alpha_{\psi},\delta_{\psi},\alpha_{b^{[2]}},\delta_{b^{[1]}},\alpha_{b^{[2]}},\delta_{b^{[2]}}$
are the prior hyper-parameters.
Note that  ${\rm IGAM}(\alpha, \sigma)$ is inverse gamma distribution with shape parameter $\alpha$ and scale parameter $\sigma$.

Due to the relatively complicated hierarchical structure, the posterior
distribution of the model parameters is not analytically feasible.
We reply to Markov Chain Monte-Carlo(MCMC) methods for obtaining empirical
estimates of the posterior statistics.\textcolor{red}{{} }The conjugate
prior specification gives known conditional likelihood that a simple
Gibbs sampler is used for estimating the parameters, $\beta_{0}^{[2]},\psi^{[2]},b^{[1]}$
and $b^{[2]}$, and a Metropolis-Hasting with random walk proposal
is used for estimating the coefficients. A more realistic prior structure
can be assumed, yet it implies a more computationally intensive MCMC
algorithm and possibly poor mixing. Hence, we alleviate this from
the current analysis.  For running MCMC, we use a
software, JAGS \citep{plummer2003jags}, that is a program for analysis of Bayesian hierarchical
models using MCMC. We have run 30,000 MCMC iterations
saving every 5th sample after burn-in of 20,000 iterations.
Multiple parallel MCMC chains are run to cross-validate the
convergence of the results.

Summary statistics of the posterior samples for the parameters in
Model \ref{mod2} using the Bayesian approach are presented in Table
\ref{est.model2} in \ref{apdx.tabs}. The table includes the posterior
median (EST), the posterior standard deviation (Std.dev), and the
95\% highest posterior density Bayesian credible interval (95$\%$
CI).  Note that a $*$ sign indicates the parameters whose 95 CI does not contain zero.

\begin{figure}[H]
\centering \includegraphics[width=0.6\textwidth]{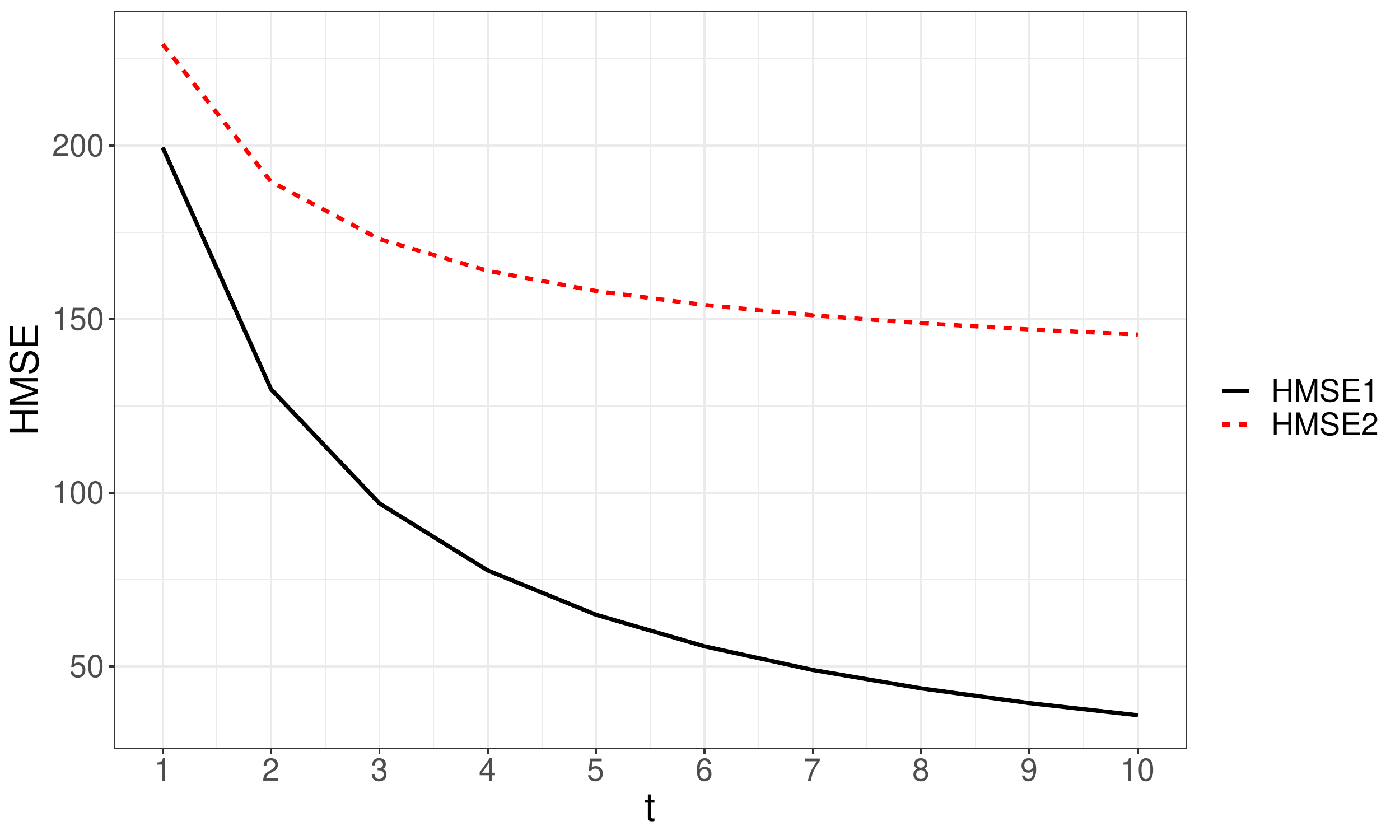}
\caption{The values of ${\rm HMSE}_{1}(t)$ and ${\rm HMSE}_{2}(t)$}
\label{fig.real.hmse}
\end{figure}

\begin{table}[H]
\caption{(Data example) Hypothetical Mean square error of Buhlmann premium
$(10^{6})$}
\vspace{-0.05in}
 \centering \resizebox{\linewidth}{!}{
\begin{tabular}{lccccccccccc}
\hline
t & 1 & 2 & 3 & 4 & 5 & 6 & 7 & 8 & 9 & 10 & \tabularnewline
\hline
${\rm HMSE}_{1}(t)$ & 199.46 & 129.83 & 96.92 & 77.60 & 64.83 & 55.74 & 48.92 & 43.62 & 39.37 & 35.88 & \tabularnewline
${\rm HMSE}_{2}(t)$ & 229.21 & 189.60 & 173.06 & 163.93 & 158.11 & 154.07 & 151.10 & 148.83 & 147.03 & 145.57 & \tabularnewline
\hline
\end{tabular}} 
\label{tab.real.hmse}
\end{table}

Figure \ref{fig.real.hmse} shows the comparison of the HMSE's of
two B\"uhlmann premiums, ${\rm Prem}_{1}(\boldsymbol{\Lambda})$ and
${\rm Prem}_{2}(\boldsymbol{\Lambda})$ in Section \ref{sec:Two-B=0000FChlmann-Premiums}.
The results are also summarized in Table \ref{tab.real.hmse}. In
terms of HMSE, B\"uhlmann premium ${\rm Prem}_{1}(\boldsymbol{\Lambda})$
outperforms B\"uhlmann premium ${\rm Prem}_{2}(\boldsymbol{\Lambda})$
regardless of the number of observations, while their gap becomes
larger as the number of observations increases. Moreover, while the
HMSE of B\"uhlmann premium ${\rm Prem}_{1}(\boldsymbol{\Lambda})$ asymptotically
converges to zero, the HMSE of B\"uhlmann premium ${\rm Prem}_{2}(\boldsymbol{\Lambda})$
converges to a non-zero constant. In conclusion, it is recommended
to use aggregate severity in the posteriori risk classification rather
than using the frequency. out-of-sample validation results in Table
\ref{tab.real.mse} show that ${\rm Prem}_{1}(\boldsymbol{\Lambda})$
outperforms ${\rm Prem}_{2}(\boldsymbol{\Lambda})$ consistently.

\begin{table}[H]
\caption{(Data example: validation) Mean square error of Buhlmann premium $(10^{6})$}
\vspace{-0.05in}
 \centering 
\begin{tabular}{lccccc}
\hline
t & 1 & 2 & 3 & 4 & 5\tabularnewline
\hline
${\rm MSE}_{1}(t)$ & 161.73 & 176.00 & 202.48 & 188.08 & 189.53\tabularnewline
${\rm MSE}_{2}(t)$ & 173.39 & 172.40 & 195.27 & 192.89 & 196.74\tabularnewline
\hline
\end{tabular}
\label{tab.real.mse}
\end{table}

\section{Remark on the Statistical Modelling of Collective Risk Model}

As briefly discussed in introduction, there are two ways of describing the CRM, i.e. the
\emph{two-part model} and the\emph{ direct model}, discussed in this
paper. While the latter demonstrate robust prediction of the mean
regardless of the parametric distribution used, however, the use of
the partial information on aggregate severity can be insufficient
in the estimation procedure and predictive analysis. On the other
hand, the \textit{two-part model} is sensitive to the model specification
meaning that its prediction ability is not guaranteed under the model
misspecification. Yet, when the model assumption is appropriate, it
shows better performance in the prediction of aggregate severity compared
to the \textit{direct model} since it uses the full information on
both the historical frequency and severities. As we have shown in
Section \ref{sec:Criteria}, using the historical aggregate severity information
can damage the prediction especially when the dependence among individual
severities is not so significant. Hence, our general suggestion for
the use of the direct approach is only for cases where there is a
relatively strong dependence among individual severities assuming
no non-statistical preferences.

\section*{Acknowledgements}

Jae Youn Ahn was supported by a National Research Foundation of Korea
(NRF) grant funded by the Korean Government (NRF-2017R1D1A1B03032318).
Rosy Oh was supported by Basic Science Research Program through the
National Research Foundation of Korea(NRF) funded by the Ministry
of Education (Grant No. 2019R1A6A1A11051177).

\section*{Appendix A: Important Lemmas }

\begin{lemma}\label{App:lem1} Consider the settings in Model \ref{mod1},
and let the conditional distribution of a random variable $M_{t}$
be given by
\begin{equation}
M_{t}\big\vert\left(\Z,N_{t}\right)\sim{\rm ED}\left(\Lambda^{[2]}\exp\left(\beta_{0}^{[2]}N_{t}\right)R^{[2]},\psi^{[2]}/N_{t}\right)\quad for\;t=1,2,...\label{eq:A1}
\end{equation}
for $N_{t}>0$, and
\[
\P\left(M_{t}=0\big\vert\Z,N_{t}\right)=1
\]
for $N_{t}=0$ based on the same EDF in \eqref{eq.y}. Then, we have
the distributional assumption on $Y_{t,j}$ in \eqref{eq.y} and the
distribution assumption on $M_{t}$ in \eqref{eq:A1} are equivalent
if the index set of the exponential dispersion family in \eqref{eq.y}
is $\Lambda=\R^{+}$.\end{lemma}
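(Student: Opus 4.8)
The plan is to reduce the claimed equivalence to the reproductive (convolution) property of the exponential dispersion family, carried out at the level of cumulant generating functions (CGFs). First I would condition on the event $\{N_{t}=n\}$ for a fixed $n\in\N$. The case $n=0$ is immediate, since both \eqref{eq.y} and \eqref{eq:A1} then place all mass at the point $M_{t}=0$, so the entire content lies in $n>0$. For $n>0$, I write the common EDF of each $Y_{t,j}$ in canonical form, with cumulant function $b$, canonical parameter $\theta$ fixed by the mean $\mu:=\Lambda^{[2]}\exp(\beta_{0}^{[2]}n)R^{[2]}$ through $\mu=b'(\theta)$, and dispersion $\psi^{[2]}$. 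A one-line integration against the density shows that the CGF is $K_{Y_{t,j}}(s)=\bigl[b(\theta+s\psi^{[2]})-b(\theta)\bigr]/\psi^{[2]}$.

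By the conditional-independence clause of Model \ref{mod1}, given $\{N_{t}=n\}$ the severities $Y_{t,1},\dots,Y_{t,n}$ are i.i.d., so the CGF of their average $M_{t}=n^{-1}\sum_{j=1}^{n}Y_{t,j}$ factorises as $K_{M_{t}}(s)=n\,K_{Y_{t,1}}(s/n)$. Substituting the displayed CGF collapses this to $\bigl[b(\theta+s\psi^{[2]}/n)-b(\theta)\bigr]/(\psi^{[2]}/n)$, which is exactly the CGF of a reproductive EDF with the same canonical parameter $\theta$ (hence the same mean $\mu$) and dispersion $\psi^{[2]}/n$. Since the CGF determines the law, this is precisely the distribution asserted in \eqref{eq:A1}, giving the forward implication.

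For the converse I would again appeal to uniqueness of the CGF: if $M_{t}$ is assumed to follow the same base EDF with the law in \eqref{eq:A1} and the $Y_{t,j}$ are i.i.d. from that family, then reading off the mean recovers $\theta$ and reading off the dispersion recovers $\psi^{[2]}/n$; as $n=N_{t}$ is observed, $\psi^{[2]}$ and therefore the individual law in \eqref{eq.y} are pinned down uniquely. The step I expect to need the most care, and where the hypothesis $\Lambda=\R^{+}$ on the index set is essential, is verifying that every distribution invoked truly belongs to the family: the average law requires the index value $n/\psi^{[2]}$ to be admissible for every $n\in\N$, while the reconstruction of the individual law requires $1/\psi^{[2]}$ to be admissible. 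Taking the index set to be all of $\R^{+}$ makes the additive index $n/\psi^{[2]}$ lie in $\Lambda$ for each $n$, so that the $n$-fold convolution stays within the family and the equivalence closes in both directions.
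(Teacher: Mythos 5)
Your argument is correct and follows essentially the same route as the paper: the paper's proof simply cites the reproductive property of the EDF together with J{\o}rgensen's characterization that the members are infinitely divisible if and only if the index set is $\R^{+}$, which is exactly the content you derive explicitly via the cumulant generating function computation $K_{M_t}(s)=n\,K_{Y_{t,1}}(s/n)$ and your observation that $\Lambda=\R^{+}$ makes the index values $n/\psi^{[2]}$ and $1/\psi^{[2]}$ admissible in both directions. Your write-up is a more detailed unpacking of the same two facts the paper invokes by citation.
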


\begin{proof} First, note that the members of the model are infinitely divisible
if and only if the index set $\Lambda=\R^{+}$ \citep{jorgensen1997theory}.
Then, the proof follows from the reproductive property of EDF and
the infinitely divisible property of EDF.
\end{proof}

\begin{lemma}\label{dan.lem.1} Under the settings in Model \ref{mod2},
we have the moment generating function of $N_{t}$ is given by
\begin{equation}
M_{N_{t}}\left(z|\boldsymbol{R},\boldsymbol{\Lambda}\right)\equiv
\E\left[e^{zN_{t}}|\boldsymbol{R},\boldsymbol{\Lambda}\right]=\exp\left(\Lambda^{[1]}R^{[1]}\left(e^{z}-1\right)\right).\label{dan.10}
\end{equation}
Furthermore, we have
\[
M_{N_{t}}^{\prime}\left(z|\boldsymbol{R},\boldsymbol{\Lambda}\right)\equiv\E\left[{N_{t}e^{zN_{t}}|\boldsymbol{R},\boldsymbol{\Lambda}}\right]=\Lambda^{[1]}R^{[1]}e^{z}\exp\left(\Lambda^{[1]}R^{[1]}\left(e^{z}-1\right)\right)
\]
and
\[
\begin{aligned}M_{N_{t}}^{\prime\prime}\left(z|\boldsymbol{R},\boldsymbol{\Lambda}\right) & \equiv\E\left[{N_{t}^{2}e^{zN_{t}}|\boldsymbol{R},\boldsymbol{\Lambda}}\right]\\
 & =\left(\Lambda^{[1]}R^{[1]}\right)^{2}e^{2z}\exp\left(\Lambda^{[1]}R^{[1]}\left(e^{z}-1\right)\right)+\Lambda^{[1]}R^{[1]}e^{z}\exp\left(\Lambda^{[1]}R^{[1]}\left(e^{z}-1\right)\right).
\end{aligned}
\]
\end{lemma}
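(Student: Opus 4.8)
The plan is to exploit the fact that, under Model \ref{mod2}, the conditional law of the frequency is Poisson, so that the entire lemma reduces to the textbook moment generating function of a Poisson variate together with two differentiations. First I would recall from the parametric assumptions that $N_t\big\vert\boldsymbol{Z}\sim\mathrm{Pois}\left(\Lambda^{[1]}R^{[1]}\right)$, and observe that $\Lambda^{[1]}R^{[1]}$ is measurable with respect to $\left(\boldsymbol{R},\boldsymbol{\Lambda}\right)$; hence conditioning on $\left(\boldsymbol{R},\boldsymbol{\Lambda}\right)$ leaves $N_t$ Poisson with a \emph{known} mean $\mu:=\Lambda^{[1]}R^{[1]}$. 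The standard identity $\E\left[e^{zX}\right]=\exp\left(\mu\left(e^{z}-1\right)\right)$ for $X\sim\mathrm{Pois}(\mu)$ then yields \eqref{dan.10} directly upon substituting $\mu=\Lambda^{[1]}R^{[1]}$.

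For the two derivative formulas I would work with $M_{N_t}(z):=\exp\left(\mu\left(e^{z}-1\right)\right)$ and differentiate in $z$. Applying the chain rule once gives $M_{N_t}^{\prime}(z)=\mu e^{z}\exp\left(\mu\left(e^{z}-1\right)\right)$, and the product rule on this expression gives $M_{N_t}^{\prime\prime}(z)=\mu^{2}e^{2z}\exp\left(\mu\left(e^{z}-1\right)\right)+\mu e^{z}\exp\left(\mu\left(e^{z}-1\right)\right)$, which become the stated expressions after writing $\mu=\Lambda^{[1]}R^{[1]}$. The only conceptual point to be stated carefully is the probabilistic meaning of these derivatives: I would justify interchanging differentiation and the (conditional) expectation so that $M_{N_t}^{\prime}(z)=\E\left[N_t e^{zN_t}\big\vert\boldsymbol{R},\boldsymbol{\Lambda}\right]$ and $M_{N_t}^{\prime\prime}(z)=\E\left[N_t^{2}e^{zN_t}\big\vert\boldsymbol{R},\boldsymbol{\Lambda}\right]$, thereby identifying the analytic derivatives with the moment-type expectations appearing in the lemma.

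The interchange is legitimate because, for a Poisson law, $M_{N_t}(z)$ is finite for every real $z$ and is real-analytic, so dominated convergence applies on any bounded $z$-neighbourhood; this is the standard regularity that makes the moment generating function differentiable under the expectation. I do not anticipate a genuine obstacle here: the computation is entirely routine, and the only item worth an explicit sentence is this differentiation-under-the-integral justification, which guarantees that the $k$-th derivative of $M_{N_t}$ coincides with $\E\left[N_t^{k}e^{zN_t}\big\vert\boldsymbol{R},\boldsymbol{\Lambda}\right]$ for $k=1,2$. With that in place, all three displayed identities follow immediately from the Poisson structure and elementary calculus.
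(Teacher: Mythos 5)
Your proposal is correct and follows essentially the same route as the paper: identify the conditional law of $N_t$ given $(\boldsymbol{R},\boldsymbol{\Lambda})$ as Poisson with mean $\Lambda^{[1]}R^{[1]}$, invoke the standard Poisson moment generating function for \eqref{dan.10}, and obtain the remaining identities by differentiating twice in $z$. The only addition is your explicit justification of differentiation under the (conditional) expectation, which the paper leaves implicit.
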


\begin{proof} The proof of \eqref{dan.10} is the moment generating function
of the Poisson distribution, and the other results follows by differentiating
\eqref{dan.10} with respect to $z$.
\end{proof}

\begin{lemma}\label{dan.lem.2} Under the settings in Model \ref{mod2},
if we assume that
\[
R^{[1]}\sim{\rm IG}(1,b^{[1]})
\]
we have the moment generating function of $R^{[1]}$ is given by
\begin{equation}
\begin{aligned}M_{R^{[1]}}\left(z\right) & :=\E\left[{e^{zR^{[1]}}}\right]\\
 & =\exp\left(\frac{1}{b^{[1]}}\left(1-\sqrt{1-2b^{[1]}z}\right)\right).
\end{aligned}
\label{dan.11}
\end{equation}
Furthermore, we also have
\[
\begin{aligned}M_{R^{[1]}}^{\prime}\left(z\right) & =\E\left[{R^{[1]}e^{zR^{[1]}}}\right]\\
 & =M_{R^{[1]}}\left(z\right)\left(1-2b^{[1]}z\right)^{-1/2}
\end{aligned}
\]
and
\[
\begin{aligned}M_{R^{[1]}}^{\prime\prime}\left(z\right) & =\E\left[\left(R^{[1]}\right)^{2}e^{zR^{[1]}}\right]\\
 & =M_{R^{[1]}}^{\prime}\left(z\right)\left[\left(1-2b^{[1]}z\right)^{-1/2}+b^{[1]}\left(1-2b^{[1]}z\right)^{-1}\right].%
\end{aligned}
\]
\end{lemma}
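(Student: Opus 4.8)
The plan is to obtain \eqref{dan.11} from the standard inverse Gaussian moment generating function and then differentiate twice. First I would recall that if a random variable follows ${\rm IG}(\mu,\lambda)$, parametrized so that it has mean $\mu$ and variance $\mu^{3}/\lambda$, then its moment generating function equals $\exp\left((\lambda/\mu)\left(1-\sqrt{1-2\mu^{2}z/\lambda}\right)\right)$ for $z<\lambda/(2\mu^{2})$. Under the assumption $R^{[1]}\sim{\rm IG}(1,b^{[1]})$ in Model \ref{mod2}, the first two moments are $\E[R^{[1]}]=1$ and $\Var[R^{[1]}]=b^{[1]}$, which forces $\mu=1$ and $\lambda=1/b^{[1]}$. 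Substituting these values collapses the radicand to $1-2b^{[1]}z$ and the prefactor to $1/b^{[1]}$, reproducing exactly the claimed expression in \eqref{dan.11}. If a fully self-contained argument is desired instead of citing the standard formula, the same identity follows by completing the square in the exponent of the ${\rm IG}$ density and recognizing the remaining integral as an (unnormalized) ${\rm IG}$ kernel.

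For the derivatives I would write $M_{R^{[1]}}(z)=\exp(h(z))$ with $h(z)=(1/b^{[1]})\left(1-\sqrt{1-2b^{[1]}z}\right)$. The chain rule gives $h'(z)=(1-2b^{[1]}z)^{-1/2}$, the factor $b^{[1]}$ from the inner derivative cancelling against the $1/b^{[1]}$ prefactor, so that $M_{R^{[1]}}'(z)=M_{R^{[1]}}(z)(1-2b^{[1]}z)^{-1/2}$. For the second derivative I would differentiate this product, using $\frac{d}{dz}(1-2b^{[1]}z)^{-1/2}=b^{[1]}(1-2b^{[1]}z)^{-3/2}$, and then re-express $M_{R^{[1]}}(z)$ through the first-derivative identity as $M_{R^{[1]}}(z)=M_{R^{[1]}}'(z)(1-2b^{[1]}z)^{1/2}$. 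This substitution is what converts the two terms into the common factor $M_{R^{[1]}}'(z)$ and yields the stated form $M_{R^{[1]}}'(z)\left[(1-2b^{[1]}z)^{-1/2}+b^{[1]}(1-2b^{[1]}z)^{-1}\right]$.

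There is no substantive obstacle in this lemma; it is a routine computation. The only two points needing care are matching the $(1,b^{[1]})$ parametrization to the conventional $(\mu,\lambda)$ form, so that the square root reads $1-2b^{[1]}z$ and not a rescaled variant, and performing the re-substitution in the second derivative so the result is expressed cleanly in terms of $M_{R^{[1]}}'$ rather than a mix of $M_{R^{[1]}}$ and $M_{R^{[1]}}'$. All three identities are valid on the interval $z<1/(2b^{[1]})$ where the moment generating function is finite and the square root is real and strictly positive.
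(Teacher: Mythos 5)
Your proof is correct and follows the same route as the paper: quote the standard inverse Gaussian moment generating function (correctly matching the $(1,b^{[1]})$ parametrization, where the second parameter is the variance, to the conventional $(\mu,\lambda)$ form) and then differentiate twice, with the re-substitution $M_{R^{[1]}}(z)=M_{R^{[1]}}'(z)(1-2b^{[1]}z)^{1/2}$ giving the stated form of the second derivative. The paper's own proof is just a one-line statement of the same idea, so your version simply supplies the routine details.
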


\begin{proof} The proof of \eqref{dan.11} is the moment generating function
of Inverse Gaussian distribution, and the other results follows by
differentiating \eqref{dan.11} with respect to $z$.

Here, we show how to derive B\"uhlmann premiums in Proposition \ref{prop:P_agg_2}
and \ref{Prop:P_freq_2}. First, we define
\[
u_{1}\left(\boldsymbol{\Lambda}\right):=\E{\left[u_{1}\left(\boldsymbol{R},\boldsymbol{\Lambda}\right)|\boldsymbol{\Lambda}\right]},\quad v_{1}\left(\boldsymbol{\Lambda}\right):=\E{\left[v_{1}\left(\boldsymbol{R},\boldsymbol{\Lambda}\right)|\boldsymbol{\Lambda}\right]},\quad \hbox{and} \quad a_{1}\left(\boldsymbol{\Lambda}\right):=\Var{\left[u_{1}\left(\boldsymbol{R},\boldsymbol{\Lambda}\right)|\boldsymbol{\Lambda}\right]},
\]
where
\[
u_{1}\left(\boldsymbol{R},\boldsymbol{\Lambda}\right):=\E\left[{S_{t}|\boldsymbol{R},\boldsymbol{\Lambda}}\right]\quad \hbox{and}\quad v_{1}\left(\boldsymbol{R},\boldsymbol{\Lambda}\right):=\Var{\left[S_{t}|\boldsymbol{R},\boldsymbol{\Lambda}\right]}.
\]
Similarly, define
\[
u_{2}\left(\boldsymbol{\Lambda}\right):=\E\left[{u_{2}\left(\boldsymbol{R},\boldsymbol{\Lambda}\right)|\boldsymbol{\Lambda}}\right],\quad v_{2}\left(\boldsymbol{\Lambda}\right):=\E{\left[v_{2}\left(\boldsymbol{R},\boldsymbol{\Lambda}\right)|\boldsymbol{\Lambda}\right]},\quad and\quad a_{2}\left(\boldsymbol{\Lambda}\right):=\Var{\left[u_{2}\left(\boldsymbol{R},\boldsymbol{\Lambda}\right)|\boldsymbol{\Lambda}\right]},
\]
where
\[
u_{2}\left(\boldsymbol{R},\boldsymbol{\Lambda}\right):=\E{\left[\E{\left[S_{t}|N_{t},\boldsymbol{\Lambda}\right]}|\boldsymbol{R},\boldsymbol{\Lambda}\right]}\quad \hbox{and} \quad v_{1}\left(\boldsymbol{R},\boldsymbol{\Lambda}\right):=\Var{\left[\E{\left[S_{t}|N_{t},\boldsymbol{\Lambda}\right]}|\boldsymbol{R},\boldsymbol{\Lambda}\right]}.
\]
\end{proof}

Analytical expression of
\[
u_{1}\left(\boldsymbol{\Lambda}\right),\quad v_{1}\left(\boldsymbol{\Lambda}\right),\quad a_{1}\left(\boldsymbol{\Lambda}\right),\quad u_{2}\left(\boldsymbol{\Lambda}\right),\quad v_{2}\left(\boldsymbol{\Lambda}\right),\quad and\quad a_{2}\left(\boldsymbol{\Lambda}\right)
\]
can be derived from Lemma \ref{dan.lem.st} and \ref{dan.lem.3} below.

\begin{lemma}\label{dan.lem.st} Under the settings in Model \ref{mod2},
we have
\[
u_{1}\left(\boldsymbol{R},\boldsymbol{\Lambda}\right)=\Lambda^{[1]}\Lambda^{[2]}R^{[1]}R^{[2]}e^{\beta_{0}^{[2]}}\exp\left(\Lambda^{[1]}R^{[1]}\left(e^{\beta_{0}^{[2]}}-1\right)\right)
\]
and
\[
\begin{aligned}v_{1}\left(\boldsymbol{R},\boldsymbol{\Lambda}\right) & =\left(1+\psi^{[2]}\right)\left(\Lambda^{[2]}R^{[2]}e^{\beta_{0}^{[2]}}\right)^{2}\Lambda^{[1]}R^{[1]}\exp\left(\Lambda^{[1]}R^{[1]}\left(e^{2\beta_{0}^{[2]}}-1\right)\right)\\
 & \quad\quad\quad\quad\quad\quad+\left(\Lambda^{[1]}R^{[1]}\Lambda^{[2]}R^{[2]}e^{2\beta_{0}^{[2]}}\right)^{2}\exp\left(\Lambda^{[1]}R^{[1]}\left(e^{2\beta_{0}^{[2]}}-1\right)\right)\\
%
 & \quad\quad\quad\quad\quad\quad\quad\quad-\left(\Lambda^{[1]}R^{[1]}\Lambda^{[2]}R^{[2]}e^{\beta_{0}^{[2]}}\right)^{2}\exp\left(2\Lambda^{[1]}R^{[1]}\left(e^{\beta_{0}^{[2]}}-1\right)\right).\\
%
%
\end{aligned}
\]
\end{lemma}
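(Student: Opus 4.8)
The plan is to treat $u_1$ and $v_1$ separately. The expression for $u_1(\boldsymbol{R},\boldsymbol{\Lambda})=\E[S_t|\boldsymbol{R},\boldsymbol{\Lambda}]$ coincides with the conditional mean already obtained in Proposition \ref{Prop_P_agg_1} (the time index is immaterial by stationarity of the model in $t$), so I would simply invoke that result. The substance of the lemma lies in the conditional variance $v_1(\boldsymbol{R},\boldsymbol{\Lambda})=\Var[S_t|\boldsymbol{R},\boldsymbol{\Lambda}]$, which I would derive by introducing an intermediate conditioning on the frequency $N_t$.

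First I would apply the law of total variance with respect to $N_t$:
\[
\Var[S_t|\boldsymbol{R},\boldsymbol{\Lambda}]=\E\big[\Var[S_t|N_t,\boldsymbol{R},\boldsymbol{\Lambda}]\big|\boldsymbol{R},\boldsymbol{\Lambda}\big]+\Var\big[\E[S_t|N_t,\boldsymbol{R},\boldsymbol{\Lambda}]\big|\boldsymbol{R},\boldsymbol{\Lambda}\big].
\]
Given $(N_t,\boldsymbol{R},\boldsymbol{\Lambda})$, the claim sizes $Y_{t,j}$ are i.i.d.\ Gamma with mean $\mu_Y:=\Lambda^{[2]}e^{\beta_0^{[2]}N_t}R^{[2]}$, and the Gamma variance function $V(\mu)=\mu^2$ of the EDF gives variance $\psi^{[2]}\mu_Y^2$. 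Summing over the $N_t$ i.i.d.\ terms then yields $\E[S_t|N_t,\boldsymbol{R},\boldsymbol{\Lambda}]=N_t\mu_Y$ and $\Var[S_t|N_t,\boldsymbol{R},\boldsymbol{\Lambda}]=N_t\psi^{[2]}\mu_Y^2$.

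The final step is to take the outer expectation and variance over $N_t$. After factoring the $R^{[2]}$ and $\Lambda^{[2]}$ constants out of the conditioning, everything reduces to quantities of the form $\E[N_te^{zN_t}|\boldsymbol{R},\boldsymbol{\Lambda}]$ and $\E[N_t^2e^{zN_t}|\boldsymbol{R},\boldsymbol{\Lambda}]$, which are exactly $M_{N_t}'(z|\boldsymbol{R},\boldsymbol{\Lambda})$ and $M_{N_t}''(z|\boldsymbol{R},\boldsymbol{\Lambda})$ supplied by Lemma \ref{dan.lem.1}, evaluated at $z=2\beta_0^{[2]}$ and $z=\beta_0^{[2]}$. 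Specifically, the first term of the decomposition equals $\psi^{[2]}(\Lambda^{[2]}R^{[2]})^2M_{N_t}'(2\beta_0^{[2]})$, while the second equals $(\Lambda^{[2]}R^{[2]})^2\big[M_{N_t}''(2\beta_0^{[2]})-\big(M_{N_t}'(\beta_0^{[2]})\big)^2\big]$, since $\E[(N_te^{\beta_0^{[2]}N_t})^2]=\E[N_t^2e^{2\beta_0^{[2]}N_t}]$.

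I expect the only real obstacle to be the algebraic bookkeeping in the final assembly. The linear-in-$N_t$ component of $M_{N_t}''(2\beta_0^{[2]})$ carries the same exponential factor $\exp(\Lambda^{[1]}R^{[1]}(e^{2\beta_0^{[2]}}-1))$ as the conditional-variance term, so the two merge into the single prefactor $(1+\psi^{[2]})$ on the first line of the claimed formula; the quadratic-in-$N_t$ component produces the second line, and the subtracted square $\big(M_{N_t}'(\beta_0^{[2]})\big)^2$ produces the third. Matching each piece against the three displayed summands while tracking the powers of $e^{\beta_0^{[2]}}$ is where care is needed, but no idea beyond the moment identities of Lemma \ref{dan.lem.1} is required.
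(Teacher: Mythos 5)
Your proposal is correct and follows essentially the same route as the paper: invoke Proposition \ref{Prop_P_agg_1} for $u_1$, apply the law of total variance conditioning on $N_t$, use the conditional moments $\E[S_t|N_t,\boldsymbol{R},\boldsymbol{\Lambda}]=N_t\mu_Y$ and $\Var[S_t|N_t,\boldsymbol{R},\boldsymbol{\Lambda}]=N_t\psi^{[2]}\mu_Y^2$, and reduce everything to $M_{N_t}^{\prime}$ and $M_{N_t}^{\prime\prime}$ from Lemma \ref{dan.lem.1}. The only cosmetic difference is that the paper writes $S_t=N_tM_t$ via the average severity while you sum the i.i.d.\ individual severities directly; the resulting computation is identical, and your accounting of how the three displayed summands arise is accurate.
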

\begin{proof}  The first equation is the result
in Proposition \ref{Prop_P_agg_1}. For the proof of the second equation,
we have
\[
\begin{aligned}\\
v_{1}\left(\boldsymbol{R},\boldsymbol{\Lambda}\right) & =\Var{\left[S_{t}|\boldsymbol{R},\boldsymbol{\Lambda}\right]}\\
 & =\E{\left[\Var{\left[N_{t}M_{t}|N_{t},\boldsymbol{R},\boldsymbol{\Lambda}\right]}\Big\vert\boldsymbol{R},\boldsymbol{\Lambda}\right]}+\Var{\left[\E{\left[N_{t}M_{t}|N_{t},\boldsymbol{R},\boldsymbol{\Lambda}\right]}\Big\vert\boldsymbol{R},\boldsymbol{\Lambda}\right]}\\
 & =\E{\left[N_{t}\left(\Lambda^{[2]}R^{[2]}e^{\beta_{0}^{[2]}N_{t}}\right)^{2}\psi^{[2]}\Big\vert\boldsymbol{R},\boldsymbol{\Lambda}\right]}+\Var{\left[N_{t}\Lambda^{[2]}R^{[2]}e^{\beta_{0}^{[2]}N_{t}}\Big\vert\boldsymbol{R},\boldsymbol{\Lambda}\right]}\\
 & =\E{\left[\psi^{[2]}\left(\Lambda^{[2]}R^{[2]}\right)^{2}N_{t}\exp\left(2\beta_{0}^{[2]}N_{t}\right)\Big\vert\boldsymbol{R},\boldsymbol{\Lambda}\right]}+\Var{\left[\Lambda^{[2]}R^{[2]}N_{t}\exp\left(\beta_{0}^{[2]}N_{t}\right)\Big\vert\boldsymbol{R},\boldsymbol{\Lambda}\right]}\\
 & =\psi^{[2]}\left(\Lambda^{[2]}R^{[2]}\right)^{2}M_{N_{t}}^{\prime}\left(2\beta_{0}^{[2]}|\boldsymbol{R},\boldsymbol{\Lambda}\right)+\left(\Lambda^{[2]}R^{[2]}\right)^{2}\left[M_{N_{t}}^{\prime\prime}\left(2\beta_{0}^{[2]}|\boldsymbol{R},\boldsymbol{\Lambda}\right)-\left\{ M_{N_{t}}^{\prime}\left(\beta_{0}^{[2]}|\boldsymbol{R},\boldsymbol{\Lambda}\right)\right\} ^{2}\right]
\\
\end{aligned}
\]
which finishes the proof with Lemma \ref{dan.lem.1}.
\end{proof}

\begin{lemma}\label{dan.lem.3} Under the settings in Model \ref{mod2},
we have
\[
u_{2}\left(\boldsymbol{R},\boldsymbol{\Lambda}\right)=e^{\beta_{0}^{[2]}}\Lambda^{[1]}\Lambda^{[2]}R^{[1]}\exp\left(\Lambda^{[1]}R^{[1]}\left(e^{\beta_{0}^{[2]}}-1\right)\right)
\]
and
\[
\begin{aligned}v_{2}\left(\boldsymbol{R},\boldsymbol{\Lambda}\right)%
%
 & =\left(\Lambda^{[2]}\right)^{2}\left(\Lambda^{[1]}R^{[1]}\right)^{2}e^{4\beta_{0}^{[2]}}\exp\left(\Lambda^{[1]}R^{[1]}\left(e^{2\beta_{0}^{[2]}}-1\right)\right)\\
 & \quad\quad\quad\quad+\left(\Lambda^{[2]}\right)^{2}\Lambda^{[1]}R^{[1]}e^{2\beta_{0}^{[2]}}\exp\left(\Lambda^{[1]}R^{[1]}\left(e^{2\beta_{0}^{[2]}}-1\right)\right)\\
 & \quad\quad\quad\quad-\left(\Lambda^{[2]}\right)^{2}\left(\Lambda^{[1]}R^{[1]}\right)^{2}e^{2\beta_{0}^{[2]}}\exp\left(2\Lambda^{[1]}R^{[1]}\left(e^{\beta_{0}^{[2]}}-1\right)\right).
\end{aligned}
\]

\end{lemma}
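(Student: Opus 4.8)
The plan is to observe that both $u_2(\boldsymbol{R},\boldsymbol{\Lambda})$ and $v_2(\boldsymbol{R},\boldsymbol{\Lambda})$ are nothing more than the conditional mean and conditional variance, given $(\boldsymbol{R},\boldsymbol{\Lambda})$, of the single random object $\widetilde{S}_t(N_t,\boldsymbol{\Lambda})$, whose closed form is already supplied by Proposition \ref{Prop:P_freq_1}, namely $\widetilde{S}_t(N_t,\boldsymbol{\Lambda}) = \Lambda^{[2]} N_t \exp(\beta_0^{[2]} N_t)$. Since $\Lambda^{[1]}$ and $\Lambda^{[2]}$ are constants once we condition on $\boldsymbol{\Lambda}$, the entire computation reduces to evaluating conditional moments of the form $\E[N_t^k e^{z N_t}\mid\boldsymbol{R},\boldsymbol{\Lambda}]$, each of which is an appropriate derivative of the conditional moment generating function $M_{N_t}(\cdot\mid\boldsymbol{R},\boldsymbol{\Lambda})$ recorded in Lemma \ref{dan.lem.1}.

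For $u_2$, I would simply write $u_2(\boldsymbol{R},\boldsymbol{\Lambda}) = \Lambda^{[2]} \E[N_t e^{\beta_0^{[2]} N_t}\mid\boldsymbol{R},\boldsymbol{\Lambda}] = \Lambda^{[2]} M'_{N_t}(\beta_0^{[2]}\mid\boldsymbol{R},\boldsymbol{\Lambda})$ and substitute the first-derivative formula of Lemma \ref{dan.lem.1} evaluated at $z = \beta_0^{[2]}$. This produces the claimed expression $e^{\beta_0^{[2]}}\Lambda^{[1]}\Lambda^{[2]}R^{[1]}\exp(\Lambda^{[1]}R^{[1]}(e^{\beta_0^{[2]}}-1))$ in a single step.

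For $v_2$, the plan is to apply the elementary identity $\Var[X\mid\boldsymbol{R},\boldsymbol{\Lambda}] = \E[X^2\mid\boldsymbol{R},\boldsymbol{\Lambda}] - (\E[X\mid\boldsymbol{R},\boldsymbol{\Lambda}])^2$ with $X = \widetilde{S}_t(N_t,\boldsymbol{\Lambda})$, so that $v_2(\boldsymbol{R},\boldsymbol{\Lambda}) = (\Lambda^{[2]})^2(\E[N_t^2 e^{2\beta_0^{[2]} N_t}\mid\boldsymbol{R},\boldsymbol{\Lambda}] - (\E[N_t e^{\beta_0^{[2]} N_t}\mid\boldsymbol{R},\boldsymbol{\Lambda}])^2)$. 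I would then identify the first bracketed term as $(\Lambda^{[2]})^2 M''_{N_t}(2\beta_0^{[2]}\mid\boldsymbol{R},\boldsymbol{\Lambda})$ and the subtracted term as $(\Lambda^{[2]})^2 (M'_{N_t}(\beta_0^{[2]}\mid\boldsymbol{R},\boldsymbol{\Lambda}))^2$, substitute the second- and first-derivative formulas of Lemma \ref{dan.lem.1}, and split $M''$ into its two natural summands. The quadratic piece of $M''$ gives the $e^{4\beta_0^{[2]}}$ summand, the linear piece gives the $e^{2\beta_0^{[2]}}$ summand, and the squared mean gives the subtracted $e^{2\beta_0^{[2]}}$ term, matching the three displayed summands exactly.

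The only point requiring genuine care — and the sole place an error could enter — is the bookkeeping of the argument at which the generating function is differentiated: because $\widetilde{S}_t$ carries the factor $e^{\beta_0^{[2]} N_t}$, the squared first moment is the mgf-derivative evaluated at $z=\beta_0^{[2]}$, whereas the second-moment term carries $e^{2\beta_0^{[2]} N_t}$ and hence must be evaluated at $z=2\beta_0^{[2]}$. Tracking this doubling correctly is precisely what aligns the exponents $e^{4\beta_0^{[2]}}$ and $e^{2\beta_0^{[2]}}$ in the first two summands with the $e^{2\beta_0^{[2]}}$ in the third; once this is handled, the remainder is routine algebra with no substantive obstacle.
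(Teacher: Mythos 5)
Your proposal is correct and follows essentially the same route as the paper: both compute $u_{2}$ and $v_{2}$ as the conditional mean and variance of $\widetilde{S}_{t}\left(N_{t},\boldsymbol{\Lambda}\right)=\Lambda^{[2]}N_{t}\exp\left(\beta_{0}^{[2]}N_{t}\right)$ given $(\boldsymbol{R},\boldsymbol{\Lambda})$, expand the variance as second moment minus squared first moment, and substitute the derivatives of the conditional moment generating function from Lemma \ref{dan.lem.1} evaluated at $\beta_{0}^{[2]}$ and $2\beta_{0}^{[2]}$ respectively. Your remark about tracking the doubling of the argument is exactly the bookkeeping the paper performs implicitly, so there is no gap.
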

\begin{proof} The first equation is from
\[
\begin{aligned}u_{2}\left(\boldsymbol{R},\boldsymbol{\Lambda}\right) & =\E\left[\Lambda^{[2]}N_{t}\exp\left(\beta_{0}^{[2]}N_{t}\right)|\boldsymbol{R},\boldsymbol{\Lambda}\right]\\
 & =\Lambda^{[2]}\E\left[N_{t}\exp\left(\beta_{0}^{[2]}N_{t}\right)|\boldsymbol{R},\boldsymbol{\Lambda}\right]\\
 & =e^{\beta_{0}^{[2]}}\Lambda^{[1]}\Lambda^{[2]}R^{[1]}\exp\left(\Lambda^{[1]}R^{[1]}\left(e^{\beta_{0}^{[2]}}-1\right)\right)
\end{aligned}
\]
where the second equality is from Lemma \ref{dan.lem.1} in Appendix.
For the proof of the second equation, we have
\[
\begin{aligned}v_{2}\left(\boldsymbol{R},\boldsymbol{\Lambda}\right) & =\Var\left[\Lambda^{[2]}N_{t}\exp\left(\beta_{0}^{[2]}N_{t}\right)|\boldsymbol{R},\boldsymbol{\Lambda}\right]\\
 & =\left(\Lambda^{[2]}\right)^{2}\Bigg[
 \E\left[N_{t}^{2}\exp\left(2\beta_{0}^{[2]}N_{t}\right)|\boldsymbol{R},\boldsymbol{\Lambda}\right]-
 \left(E\left[N_{t}\exp\left(\beta_{0}^{[2]}N_{t}\right)|\boldsymbol{R},\boldsymbol{\Lambda}\right]\right)^{2}\Bigg]\\
 & =\left(\Lambda^{[2]}\right)^{2}\left(\Lambda^{[1]}R^{[1]}\right)^{2}e^{4\beta_{0}^{[2]}}\exp\left(\Lambda^{[1]}R^{[1]}\left(e^{2\beta_{0}^{[2]}}-1\right)\right)\\
 & \quad\quad\quad\quad+\left(\Lambda^{[2]}\right)^{2}\Lambda^{[1]}R^{[1]}e^{2\beta_{0}^{[2]}}\exp\left(\Lambda^{[1]}R^{[1]}\left(e^{2\beta_{0}^{[2]}}-1\right)\right)\\
 & \quad\quad\quad\quad-\left(\Lambda^{[2]}\right)^{2}\left(\Lambda^{[1]}R^{[1]}\right)^{2}e^{2\beta_{0}^{[2]}}\exp\left(2\Lambda^{[1]}R^{[1]}\left(e^{\beta_{0}^{[2]}}-1\right)\right),
\end{aligned}
\]
where the third equality is from Lemma \ref{dan.lem.1} in Appendix.
\end{proof}

Then, following the procedure in B\"uhlmann premium, under Model \ref{mod2},
we have the conditional mean $\E{\left[S_{t}|\boldsymbol{\Lambda}\right]}$ and
the B\"uhlmann factor, $Z_{1}(\Lambda)$, can be expressed as
\[
\E\left[S_{t}|\boldsymbol{\Lambda}\right]=u_{1}\left(\boldsymbol{\Lambda}\right)\quad and\quad Z_{1}(\boldsymbol{\Lambda})=\frac{ta_{1}\left(\boldsymbol{\Lambda}\right)}{ta_{1}\left(\boldsymbol{\Lambda}\right)+v_{1}\left(\boldsymbol{\Lambda}\right)}.
\]
Similarly, the conditional mean $\E{\left[\E{\left[S_{t}|N_{t},\boldsymbol{\Lambda}\right]}|\boldsymbol{R},\boldsymbol{\Lambda}\right]}$
and B\"uhlmann factor can be expressed as
\[
\E\left[\E\left[S_{t}|N_{t},\boldsymbol{\Lambda}\right]|\boldsymbol{R},\boldsymbol{\Lambda}\right]=u_{2}\left(\boldsymbol{\Lambda}\right)\quad and\quad Z_{2}(\boldsymbol{\Lambda})=\frac{ta_{2}\left(\boldsymbol{\Lambda}\right)}{ta_{2}\left(\boldsymbol{\Lambda}\right)+v_{2}\left(\boldsymbol{\Lambda}\right)}.
\]

\section*{Appendix B: Auxiliary Results for the numerical illustration}

In the following proposition, we provide the analytical expressions
of useful statistics in Model \ref{mod2}. Note that the conditional
expressions are of primary interest to insurers because a priori information
is usually available at the time of the contract.

First, we provide the auxiliary lemma which is necessary for Proposition
\ref{prop.1} and the calculation of MSE.

\begin{lemma}\label{dan.lem.30} Consider the settings in Model \ref{mod2}.
Then, we have
\[
\E{\left[N_{t}^{2}\exp\left(zN_{t}\right)|\boldsymbol{\Lambda}\right]}=\left(\Lambda^{[1]}\right)^{2}e^{2z}M_{R^{[1]}}^{\prime\prime}\left(\Lambda^{[1]}\left(e^{z}-1\right)\right)+\Lambda^{[1]}e^{z}M_{R^{[1]}}^{\prime}\left(\Lambda^{[1]}\left(e^{z}-1\right)\right)
\]
and
\[
\E\left[N_{t}\exp\left(zN_{t}\right)R^{[1]}\exp\left(R^{[1]}\Lambda^{[1]}\left(e^{z}-1\right)\right)
|\boldsymbol{\Lambda}\right]=\Lambda^{[1]}e^{z}M_{R^{[1]}}^{\prime\prime}\left(2\Lambda^{[1]}\left(e^{z}-1\right)\right).
\]

For $t_{1}\neq t_{2}$, we have
\[
\E\left[{N_{t_{1}}N_{t_{2}}\exp\left(z\left(N_{t_{1}}+N_{t_{2}}\right)\right)|\boldsymbol{\Lambda}}\right]=\left(\Lambda^{[1]}\right)^{2}e^{2z}M_{R^{[1]}}^{\prime\prime}\left(2\Lambda^{[1]}\left(e^{z}-1\right)\right).
\]
Finally, we have
\[
\E\left[e^{zN_{t}}|\boldsymbol{\Lambda}\right]=M_{R^{[1]}}\left(\Lambda^{[1]}\left(e^{z}-1\right)\right)
\]
and
\[
\E\left[N_{t}e^{zN_{t}}|\boldsymbol{\Lambda}\right]=\Lambda^{[1]}e^{z}M_{R^{[1]}}^{\prime}\left(\Lambda^{[1]}\left(e^{z}-1\right)\right).
\]

\end{lemma}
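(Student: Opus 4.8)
The plan is to prove all five identities by a single two-step mechanism: first condition on the random effect $\boldsymbol{R}$ and invoke Lemma \ref{dan.lem.1} for the conditional moments of $N_t$, then take the outer expectation over $R^{[1]}$ and recognize each resulting quantity as the moment generating function $M_{R^{[1]}}$ or one of its derivatives. Throughout I would use the shorthand $w := \Lambda^{[1]}(e^z-1)$ together with the elementary identity $\E[(R^{[1]})^k e^{wR^{[1]}}] = M_{R^{[1]}}^{(k)}(w)$, which holds by differentiating $M_{R^{[1]}}(w) = \E[e^{wR^{[1]}}]$ under the integral sign. Since Model \ref{mod1} assumes $R^{[1]}$ is independent of $\boldsymbol{X}$ (hence of $\boldsymbol{\Lambda}$), conditioning on $\boldsymbol{\Lambda}$ leaves the law of $R^{[1]}$ unchanged, so the unconditional MGF notation is legitimate.

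For the two simplest identities (the fourth and fifth), I would write $\E[e^{zN_t}|\boldsymbol{\Lambda}] = \E[\E[e^{zN_t}|\boldsymbol{R},\boldsymbol{\Lambda}]\,|\,\boldsymbol{\Lambda}]$ and substitute the conditional MGF $\exp(\Lambda^{[1]}R^{[1]}(e^z-1)) = e^{wR^{[1]}}$ from Lemma \ref{dan.lem.1}; taking the outer expectation yields $M_{R^{[1]}}(w)$. The fifth identity is identical except that the inner factor is $\Lambda^{[1]}R^{[1]}e^z e^{wR^{[1]}}$, whose outer expectation is $\Lambda^{[1]}e^z M_{R^{[1]}}'(w)$. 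The first identity follows the same way from the two-term expression for $\E[N_t^2 e^{zN_t}|\boldsymbol{R},\boldsymbol{\Lambda}]$ in Lemma \ref{dan.lem.1}: the $(R^{[1]})^2$-term gives $(\Lambda^{[1]})^2 e^{2z} M_{R^{[1]}}''(w)$ and the linear term gives $\Lambda^{[1]}e^z M_{R^{[1]}}'(w)$.

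The two remaining identities each require one extra observation. For the second, the factor $R^{[1]}\exp(R^{[1]}\Lambda^{[1]}(e^z-1))$ is $\boldsymbol{R}$-measurable, so I would pull it outside the inner conditional expectation and apply Lemma \ref{dan.lem.1} only to $\E[N_t e^{zN_t}|\boldsymbol{R},\boldsymbol{\Lambda}] = \Lambda^{[1]}R^{[1]}e^z e^{wR^{[1]}}$; merging the two exponentials into $e^{2wR^{[1]}}$ leaves $\Lambda^{[1]}e^z\E[(R^{[1]})^2 e^{2wR^{[1]}}] = \Lambda^{[1]}e^z M_{R^{[1]}}''(2w)$, which explains the argument $2\Lambda^{[1]}(e^z-1)$. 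For the third identity I would first use the conditional independence of $N_{t_1}$ and $N_{t_2}$ given $(\boldsymbol{R},\boldsymbol{\Lambda})$, valid for $t_1\neq t_2$ under Model \ref{mod2}, to factor the inner expectation into a product of two copies of $\Lambda^{[1]}R^{[1]}e^z e^{wR^{[1]}}$; squaring again produces $(R^{[1]})^2 e^{2wR^{[1]}}$ and hence $(\Lambda^{[1]})^2 e^{2z} M_{R^{[1]}}''(2w)$.

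The computation is entirely routine; the only delicate points are bookkeeping rather than genuine obstacles, namely (i) tracking the argument of the MGF derivative, where the factor $2w = 2\Lambda^{[1]}(e^z-1)$ in the second and third identities arises from merging an extra exponential factor or from squaring the conditional first moment, and (ii) correctly invoking conditional independence across time in the third identity. No additional estimates are needed.
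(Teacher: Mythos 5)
Your proposal is correct and follows essentially the same route as the paper's own proof: condition on $\boldsymbol{R}$, apply Lemma \ref{dan.lem.1} for the conditional moments of $N_t$ (pulling out the $\boldsymbol{R}$-measurable factor in the second identity and using conditional independence across time in the third), then take the outer expectation and identify $\E[(R^{[1]})^k e^{wR^{[1]}}]$ with $M_{R^{[1]}}^{(k)}(w)$. The bookkeeping of the argument $2\Lambda^{[1]}(e^z-1)$ in the second and third identities is handled exactly as in the paper.
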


\begin{proof} For the proof of the first equation, we have
\[
\begin{aligned}\E{\left[N_{t}^{2}\exp\left(zN_{t}\right)|\boldsymbol{\Lambda}\right]} & =\E{\left[\E{\left[N_{t}^{2}\exp\left(zN_{t}\right)|\boldsymbol{\Lambda},\boldsymbol{R}\right]}|\boldsymbol{\Lambda}\right]}\\
 & =\E{\left[\left(\Lambda^{[1]}R^{[1]}\right)^{2}e^{2z}\exp\left(\Lambda^{[1]}R^{[1]}\left(e^{z}-1\right)\right)+\Lambda^{[1]}R^{[1]}e^{z}\exp\left(\Lambda^{[1]}R^{[1]}\left(e^{z}-1\right)\right)|\boldsymbol{\Lambda}\right]}\\
 & =\left(\Lambda^{[1]}\right)^{2}e^{2z}M_{R^{[1]}}^{\prime\prime}\left(\Lambda^{[1]}\left(e^{z}-1\right)\right)+\Lambda^{[1]}e^{z}M_{R^{[1]}}^{\prime}\left(\Lambda^{[1]}\left(e^{z}-1\right)\right),
\end{aligned}
\]
where the second equality is from Lemma \ref{dan.lem.1}. For the
second equation, we have
\[
\begin{aligned}\E\left[{N_{t}\exp\left(zN_{t}\right)R^{[1]}\exp\left(R^{[1]}\Lambda^{[1]}\left(e^{z}-1\right)\right)|\boldsymbol{\Lambda}}\right]%
 & =\E\left[{R^{[1]}\exp\left(R^{[1]}\Lambda^{[1]}\left(e^{z}-1\right)\right)\E{\left[N_{t}\exp\left(zN_{t}\right)|\boldsymbol{\Lambda},\boldsymbol{R}\right]}|\boldsymbol{\Lambda}}\right]\\
 & =\E{\left[\Lambda^{[1]}\left(R^{[1]}\right)^{2}e^{z}\exp\left(2R^{[1]}\Lambda^{[1]}\left(e^{z}-1\right)\right)|\boldsymbol{\Lambda}\right]}\\
 & =\Lambda^{[1]}e^{z}M_{R^{[1]}}^{\prime\prime}\left(2\Lambda^{[1]}\left(e^{z}-1\right)\right),
\end{aligned}
\]
where the second equality is from Lemma \ref{dan.lem.1}. Finally,
for $t_{1}\neq t_{2}$, we have
\[
\begin{aligned}\E{\left[N_{t_{1}}N_{t_{2}}\exp\left(z\left(N_{t_{1}}+N_{t_{2}}\right)\right)|\boldsymbol{\Lambda}\right]} & =\E{\left[\E{\left[N_{t_{1}}N_{t_{2}}\exp\left(z\left(N_{t_{1}}+N_{t_{2}}\right)\right)|\boldsymbol{\Lambda},\boldsymbol{R}\right]}|\boldsymbol{\Lambda}\right]}\\
 & =\E{\left[\E{\left[N_{t_{1}}\exp\left(zN_{t_{1}}\right)|\boldsymbol{\Lambda},\boldsymbol{R}\right]}
 \E{\left[N_{t_{2}}\exp\left(zN_{t_{2}}\right)|\boldsymbol{\Lambda},\boldsymbol{R}\right]}|\boldsymbol{\Lambda}\right]}\\
 & =\E{\left[\left(\Lambda^{[1]}R^{[1]}\right)^{2}e^{2z}\exp\left(2R^{[1]}\Lambda^{[1]}\left(e^{z}-1\right)\right)|\boldsymbol{\Lambda}\right]}\\
 & =\left(\Lambda^{[1]}\right)^{2}e^{2z}M_{R^{[1]}}^{\prime\prime}\left(2\Lambda^{[1]}\left(e^{z}-1\right)\right).
\end{aligned}
\]
where the second equation is the conditional independence between
$N_{t_{1}}$ and $N_{t_{2}}$, and the third equality is from Lemma
\ref{dan.lem.1}.

Finally, for the proof of the last part, we have
\[
\begin{aligned}\E{\left[e^{zN_{t}}|\boldsymbol{\Lambda}\right]} & =\E{\left[\exp\left(\Lambda^{[1]}R^{[1]}\left(e^{z}-1\right)\right)|\boldsymbol{\Lambda}\right]}\\
 & =M_{R^{[1]}}\left(\Lambda^{[1]}\left(e^{z}-1\right)\right)
\end{aligned}
\]
and
\[
\begin{aligned}\E\left[{N_{t}e^{zN_{t}}|\boldsymbol{\Lambda}}\right] & =\E\left[{\E{\left[N_{t}e^{zN_{t}}|\boldsymbol{R},\boldsymbol{\Lambda}\right]}|\boldsymbol{\Lambda}}\right]\\
 & =\Lambda^{[1]}e^{z}\E{\left[R^{[1]}\exp\left(\Lambda^{[1]}R^{[1]}\left(e^{z}-1\right)\right)|\boldsymbol{\Lambda}\right]}\\
 & =\Lambda^{[1]}e^{z}M_{R^{[1]}}^{\prime}\left(\Lambda^{[1]}\left(e^{z}-1\right)\right),
\end{aligned}
\]
where the second equality is from Lemma \ref{dan.lem.1}.
\end{proof}

\begin{proposition}\label{prop.1}

Under Model \ref{mod2}, we have the following conditional expressions.
\begin{enumerate}
\item The mean and variance of the aggregate severity are
\[
\E\left[{S_{t}\Big\vert\boldsymbol{\Lambda}}\right]=\Lambda^{[1]}\Lambda^{[2]}e^{\beta_{0}^{[2]}}
\,M_{R^{[1]}}^{\prime}\left(\zeta_{1}\right)
\]
and
\begin{equation}
\begin{aligned}\Var{\left[S_{t}\Big\vert\boldsymbol{\Lambda}\right]} & =\Lambda^{[1]}\left(\Lambda^{[2]}\right)^{2}\left(1+b^{[2]}\right)e^{2\beta_{0}^{[2]}}\Bigg[\left(1+\psi^{[2]}\right)M_{R^{[1]}}^{\prime}\left(\zeta_{2}\right)+\Lambda^{[1]}e^{2\beta_{0}^{[2]}}M_{R^{[1]}}^{\prime\prime}\left(\zeta_{2}\right)\Bigg]\\
 & \quad-\bigg\{\Lambda^{[1]}\Lambda^{[2]}e^{\beta_{0}^{[2]}}M_{R^{[1]}}^{\prime}\left(\zeta_{1}\right)\bigg\}^{2}.
\end{aligned}
\label{eq.201}
\end{equation}
\item The covariance of aggregate severities is
\[
\cov\left[{S_{t_{1}},S_{t_{2}}\Big\vert\boldsymbol{\Lambda}}\right]=\left(\Lambda^{[1]}\Lambda^{[2]}\right)^{2}e^{2\beta_{0}^{[2]}}\Bigg[\left(1+b^{[2]}\right)M_{R^{[1]}}^{\prime\prime}\left(2\zeta_{1}\right)-\left\{ M_{R^{[1]}}^{\prime}\left(\zeta_{1}\right)\right\} ^{2}\Bigg]
\]
for $t_{1}\neq t_{2}$.
\item The covariance among the frequencies is
\[
\cov\left[{N_{t_{1}},N_{t_{2}}\Big\vert\boldsymbol{\Lambda}}\right]=\left(\Lambda^{[1]}\right)^{2}b^{[1]}
\]
for $t_{1}\neq t_{2}$.
\item The variance of the individual severities is
\[
\Var\left[{Y_{t,j}\Big\vert\boldsymbol{\Lambda}}\right]=\left(\Lambda^{[2]}\right)^{2}\bigg[\left(1+b^{[2]}\right)\left(1+\psi^{[2]}\right)\,M_{R^{[1]}}\left(\zeta_{2}\right)-\left\{ M_{R^{[1]}}\left(\zeta_{1}\right)\right\} ^{2}\bigg].
\]
\item The covariances among the individual severities are

\[
\cov\left[{Y_{t,j_{1}},Y_{t,j_{2}}\Big\vert\boldsymbol{\Lambda}}\right]=\left(\Lambda^{[2]}\right)^{2}\bigg[\left(1+b^{[2]}\right)M_{R^{[1]}}\left(\zeta_{2}\right)-\left\{ M_{R^{[1]}}\left(\zeta_{1}\right)\right\} ^{2}\bigg]
\]
and, for $t_{1}\neq t_{2}$,
\[
\cov\left[{Y_{t_{1},j_{1}},Y_{t_{2},j_{2}}\bigg\vert\boldsymbol{\Lambda}}\right]=\left(\Lambda^{[2]}\right)^{2}\bigg[\left(1+b^{[2]}\right)M_{R^{[1]}}\left(2\zeta_{1}\right)-\left\{ M_{R^{[1]}}\left(\zeta_{1}\right)\right\} ^{2}\bigg].
\]

\item The covariances among the individual severities are
\[
\cov\left[{N_{t},Y_{t,j}\bigg\vert\boldsymbol{\Lambda}}\right]=\Lambda^{[1]}\Lambda^{[2]}\bigg[e^{\beta_{0}^{[2]}}M_{R^{[1]}}^{\prime}\left(\zeta_{1}\right)-M_{R^{[1]}}\left(\zeta_{1}\right)\bigg]
\]
and, for $t_{1}\neq t_{2}$,
\[
\cov\left[{N_{t_{1}},Y_{t_{2},j_{1}}\bigg\vert\boldsymbol{\Lambda}}\right]=0.
\]
\end{enumerate}
\end{proposition}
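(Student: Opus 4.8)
The plan is to derive every item by the same two-stage conditioning. First I condition on the random effects $(\boldsymbol{R},\boldsymbol{\Lambda})$ and, whenever a severity is present, also on the contemporaneous frequency $N_t$; this reduces each statistic to the Poisson moment identities of Lemma~\ref{dan.lem.1} together with the Gamma relations $\E[Y_{t,j}|\Z,N_t]=\Lambda^{[2]}e^{\beta_0^{[2]}N_t}R^{[2]}$ and $\Var[Y_{t,j}|\Z,N_t]=\psi^{[2]}(\Lambda^{[2]}e^{\beta_0^{[2]}N_t}R^{[2]})^2$. Then I integrate out $\boldsymbol{R}$ using the independence of $R^{[1]}$ and $R^{[2]}$, the moments $\E[R^{[2]}]=1$ and $\E[(R^{[2]})^2]=1+b^{[2]}$, and the derivatives $M_{R^{[1]}}^{\prime},M_{R^{[1]}}^{\prime\prime}$ of Lemma~\ref{dan.lem.2}. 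Every recurring $R^{[1]}$-integral, namely $\E[N_t^k\exp(zN_t)|\boldsymbol{\Lambda}]$ and $\E[N_{t_1}N_{t_2}\exp(z(N_{t_1}+N_{t_2}))|\boldsymbol{\Lambda}]$, is precisely one of the lines packaged in Lemma~\ref{dan.lem.30}, so the work is mostly matching each target to the correct line and reading off the arguments $\zeta_1$ and $\zeta_2$.

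For the aggregate-severity items (1)--(2) I would reuse, rather than recompute, the building blocks of Proposition~\ref{prop:P_agg_2}. Since $\E[S_t|\boldsymbol{\Lambda}]=\E[u_1(\boldsymbol{R},\boldsymbol{\Lambda})|\boldsymbol{\Lambda}]=u_1(\boldsymbol{\Lambda})$ the mean is immediate, and the law of total variance gives $\Var[S_t|\boldsymbol{\Lambda}]=\E[v_1(\boldsymbol{R},\boldsymbol{\Lambda})|\boldsymbol{\Lambda}]+\Var[u_1(\boldsymbol{R},\boldsymbol{\Lambda})|\boldsymbol{\Lambda}]=v_1(\boldsymbol{\Lambda})+a_1(\boldsymbol{\Lambda})$, where the two $M_{R^{[1]}}^{\prime\prime}(2\zeta_1)$ terms in $v_1$ and $a_1$ cancel and leave exactly \eqref{eq.201}. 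For the cross-period covariance in (2), conditional independence of distinct periods given $(\boldsymbol{R},\boldsymbol{\Lambda})$ forces $\E[S_{t_1}S_{t_2}|\boldsymbol{R},\boldsymbol{\Lambda}]=u_1(\boldsymbol{R},\boldsymbol{\Lambda})^2$, so the covariance collapses to $\Var[u_1(\boldsymbol{R},\boldsymbol{\Lambda})|\boldsymbol{\Lambda}]=a_1(\boldsymbol{\Lambda})$.

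Items (3)--(6) follow the same recipe. The frequency covariance (3) comes from $\E[N_{t_1}N_{t_2}|\boldsymbol{R},\boldsymbol{\Lambda}]=(\Lambda^{[1]}R^{[1]})^2$ and $\E[(R^{[1]})^2]=1+b^{[1]}$. The severity variance (4) and the same-period severity covariance in (5) start from $\E[Y_{t,j}^2|\Z,N_t]=(1+\psi^{[2]})(\Lambda^{[2]}e^{\beta_0^{[2]}N_t}R^{[2]})^2$ and $\E[Y_{t,j_1}Y_{t,j_2}|\Z,N_t]=(\Lambda^{[2]}e^{\beta_0^{[2]}N_t}R^{[2]})^2$ respectively, exploiting that the severities are i.i.d.\ given $(\Z,N_t)$; taking $\E[\exp(2\beta_0^{[2]}N_t)|\boldsymbol{R},\boldsymbol{\Lambda}]=\exp(\zeta_2 R^{[1]})$ and integrating against $M_{R^{[1]}}(\zeta_2)$ reproduces the stated forms, the factor $1+\psi^{[2]}$ being the only difference between the two. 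The cross-period severity covariance in (5) instead uses independence of the two periods given $(\boldsymbol{R},\boldsymbol{\Lambda})$, producing a squared conditional mean $\exp(2\zeta_1 R^{[1]})$ and hence the argument $2\zeta_1$. The same-period frequency--severity covariance in (6) is the only genuinely new computation: it needs $\E[N_t\exp(\beta_0^{[2]}N_t)|\boldsymbol{R},\boldsymbol{\Lambda}]=M_{N_t}^{\prime}(\beta_0^{[2]}|\boldsymbol{R},\boldsymbol{\Lambda})$ from Lemma~\ref{dan.lem.1}, after which the $R^{[1]}$-integration yields the combination $e^{\beta_0^{[2]}}M_{R^{[1]}}^{\prime}(\zeta_1)-M_{R^{[1]}}(\zeta_1)$.

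The main obstacle is the cross-period frequency--severity term in (6). The point is that although $N_{t_1}$ and the period-$t_2$ pair $(N_{t_2},Y_{t_2,j_1})$ are conditionally independent given $(\boldsymbol{R},\boldsymbol{\Lambda})$, the shared effect $R^{[1]}$ still couples them after marginalisation, since $\E[Y_{t_2,j_1}|\boldsymbol{R},\boldsymbol{\Lambda}]=\Lambda^{[2]}R^{[2]}\exp(\zeta_1 R^{[1]})$ depends on $R^{[1]}$ through $N_{t_2}$. A total-covariance decomposition makes this explicit: the within-$\boldsymbol{R}$ term vanishes, but the between-$\boldsymbol{R}$ term equals $\Lambda^{[1]}\Lambda^{[2]}\big[M_{R^{[1]}}^{\prime}(\zeta_1)-M_{R^{[1]}}(\zeta_1)\big]$. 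This is exactly zero in the independence case $\beta_0^{[2]}=0$ (where $\zeta_1=0$ and $M_{R^{[1]}}^{\prime}(0)=M_{R^{[1]}}(0)=1$), which matches the stated value; so the delicate step is to confirm whether the intended regime is $\beta_0^{[2]}=0$ or whether the shared-$R^{[1]}$ contribution must be carried for general $\beta_0^{[2]}$.
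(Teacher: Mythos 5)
Your derivation follows essentially the same route as the paper's own proof: condition on $(\boldsymbol{R},\boldsymbol{\Lambda})$ (and additionally on $N_t$ wherever a severity appears), apply the law of total variance/covariance, and integrate out $R^{[1]}$ via Lemmas \ref{dan.lem.1}, \ref{dan.lem.2} and \ref{dan.lem.30}. Your shortcut for items (1)--(2) --- writing $\Var[S_t|\boldsymbol{\Lambda}]=v_1(\boldsymbol{\Lambda})+a_1(\boldsymbol{\Lambda})$ and $\cov[S_{t_1},S_{t_2}|\boldsymbol{\Lambda}]=a_1(\boldsymbol{\Lambda})$ --- is a legitimate repackaging of the paper's direct computation, and the cancellation of the $M_{R^{[1]}}^{\prime\prime}(2\zeta_1)$ terms does reproduce \eqref{eq.201}. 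Items (3)--(5) and the same-period part of (6) all check out against the paper.

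On the point you flag at the end: you are right, and the stated result is not. For $t_1\neq t_2$ the correct value is
\[
\cov\left[N_{t_1},Y_{t_2,j}\,\big\vert\,\boldsymbol{\Lambda}\right]=\Lambda^{[1]}\Lambda^{[2]}\left[M_{R^{[1]}}^{\prime}\left(\zeta_1\right)-M_{R^{[1]}}\left(\zeta_1\right)\right]
=\Lambda^{[1]}\Lambda^{[2]}\,\cov\!\left(R^{[1]},e^{\zeta_1 R^{[1]}}\right),
\]
which vanishes only when $\zeta_1=0$ (i.e.\ $\beta_0^{[2]}=0$) or $R^{[1]}$ is degenerate; since Model \ref{mod2} allows general $\beta_0^{[2]}$, the shared-$R^{[1]}$ contribution must be carried. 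The paper's proof of this sub-item goes wrong in its second line: it evaluates the cross moment $\E\big[\E[N_{t_1}Y_{t_2,j}\vert N_{t_1},\boldsymbol{\Lambda},\boldsymbol{R}]\,\big\vert\,\boldsymbol{\Lambda}\big]$ as $\Lambda^{[1]}\Lambda^{[2]}\E[R^{[2]}\exp(\beta_0^{[2]}N_{t_2})\vert\boldsymbol{\Lambda}]$, i.e.\ it replaces $\E[N_{t_1}\vert\boldsymbol{R},\boldsymbol{\Lambda}]=\Lambda^{[1]}R^{[1]}$ by $\Lambda^{[1]}$ and thereby drops the factor $R^{[1]}$ that couples with $\exp(\zeta_1 R^{[1]})$ inside the outer expectation. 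Carried correctly, the cross moment is $\Lambda^{[1]}\Lambda^{[2]}M_{R^{[1]}}^{\prime}(\zeta_1)$ while the product of marginal means is $\Lambda^{[1]}\Lambda^{[2]}M_{R^{[1]}}(\zeta_1)$, which is exactly your between-$\boldsymbol{R}$ term. This covariance does not appear to feed into the MSE formulas of Proposition \ref{prop.11}, so the slip looks local to this item, but your version of the statement is the one to keep: report the nonzero value rather than $0$.
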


\begin{proof} Assume Model 2 and use conditional expectation, covariance,
and variance. Then, the mean and variance of the aggregate severity
conditional on the priori premium are calculated as

\[
\begin{aligned}\E\left[{S_{t}\Big\vert\boldsymbol{\Lambda}}\right] & =\E\left[{\E{\left[S_{t}\big\vert N_{t},\boldsymbol{\Lambda},\boldsymbol{R}\right]}\big\vert\boldsymbol{\Lambda}}\right]\\
 & =\Lambda^{[1]}\Lambda^{[2]}e^{\beta_{0}^{[2]}}\E{\left[R^{[1]}\exp\left(\Lambda^{[1]}R^{[1]}\left(e^{\beta_{0}^{[2]}}-1\right)\right)\big\vert\boldsymbol{\Lambda}\right]}\\
 & =\Lambda^{[1]}\Lambda^{[2]}e^{\beta_{0}^{[2]}}
\,M_{R^{[1]}}^{\prime}\left(\zeta_{1}\right),
\end{aligned}
\]
where the second equality comes from Proposition \ref{Prop_P_agg_1},
and
\[
\begin{aligned}\\
\Var\left[{S_{t}\Big\vert\boldsymbol{\Lambda}}\right] & =\E\left[{\Var{\left[S_{t}\big\vert N_{t},\boldsymbol{\Lambda},\boldsymbol{R}\right]}\Big\vert\boldsymbol{\Lambda}}\right]+\Var\left[{\E{\left[S_{t}\big\vert N_{t},\boldsymbol{\Lambda},\boldsymbol{R}\right]}\Big\vert\boldsymbol{\Lambda}}\right]\\
 & =\E\left[{N_{t}\left(\Lambda^{[2]}R^{[2]}\right)^{2}\psi^{[2]}\exp\left(2\beta_{0}^{[2]}N_{t}\right)\Big\vert\boldsymbol{\Lambda}}\right]+\Var\left[{N_{t}\Lambda^{[2]}R^{[2]}\exp\left(\beta_{0}^{[2]}N_{t}\right)\Big\vert\boldsymbol{\Lambda}}\right]\\
 & =\left(\Lambda^{[2]}\right)^{2}\E{\left[\left(R^{[2]}\right)^{2}\right]}\psi^{[2]}\E\left[{N_{t}\exp\left(2\beta_{0}^{[2]}N_{t}\right)\Big\vert\boldsymbol{\Lambda}}\right]\\
 & \quad+\left(\Lambda^{[2]}\right)^{2}\left[\E{\left[\left(R^{[2]}\right)^{2}\right]}\E{\left[N_{t}^{2}\exp\left(2\beta_{0}^{[2]}N_{t}\right)\Big\vert\boldsymbol{\Lambda}\right]}-\left\{ \E{\left[N_{t}\exp\left(\beta_{0}^{[2]}N_{t}\right)\Big\vert\boldsymbol{\Lambda}\right]}\right\} ^{2}\right]\\
 & =\left(\Lambda^{[2]}\right)^{2}\left(1+b^{[2]}\right)\psi^{[2]}\Lambda^{[1]}e^{2\beta_{0}^{[2]}}M_{R^{[1]}}^{\prime}\left(\Lambda^{[1]}\left(e^{2\beta_{0}^{[2]}}-1\right)\right)\\
 & \quad+\left(\Lambda^{[2]}\right)^{2}\Bigg[\left(1+b^{[2]}\right)\Big\{\left(\Lambda^{[1]}\right)^{2}e^{4\beta_{0}^{[2]}}M_{R^{[1]}}^{\prime\prime}\left(\Lambda^{[1]}\left(e^{2\beta_{0}^{[2]}}-1\right)\right)\\
 & \hfill\qquad+\Lambda^{[1]}e^{2\beta_{0}^{[2]}}M_{R^{[1]}}^{\prime}\left(\Lambda^{[1]}\left(e^{2\beta_{0}^{[2]}}-1\right)\right)\Big\}\\
 & \hfill\qquad-\left\{ \Lambda^{[1]}e^{\beta_{0}^{[2]}}M_{R^{[1]}}^{\prime}\left(\Lambda^{[1]}\left(e^{\beta_{0}^{[2]}}-1\right)\right)\right\} ^{2}\Bigg]\\
 & =\Lambda^{[1]}\left(\Lambda^{[2]}\right)^{2}\left(1+b^{[2]}\right)e^{2\beta_{0}^{[2]}}\Bigg[\left(1+\psi^{[2]}\right)M_{R^{[1]}}^{\prime}\left(\zeta_{2}\right)+\Lambda^{[1]}e^{2\beta_{0}^{[2]}}M_{R^{[1]}}^{\prime\prime}\left(\zeta_{2}\right)\Bigg]\\
 & \quad-\bigg\{\Lambda^{[1]}\Lambda^{[2]}e^{\beta_{0}^{[2]}}M_{R^{[1]}}^{\prime}\left(\zeta_{1}\right)\bigg\}^{2},
\end{aligned}
\]
where the second last equality comes from Lemma \ref{dan.lem.30},
respectively.

The covariance of aggregate severities conditional on the priori premium
for $t_{1}\neq t_{2}$ is
\[
\begin{aligned}
 &\cov\left[{S_{t_{1}},S_{t_{2}}\bigg\vert\boldsymbol{\Lambda}}\right] \\
& =\cov\left[\E\left[{S_{t_{1}}\big\vert N_{t_{1}},\boldsymbol{\Lambda},\boldsymbol{R}}\right],\E\left[{S_{t_{2}}\big\vert N_{t_{2}},\boldsymbol{\Lambda},\boldsymbol{R}}\right]\Big\vert\boldsymbol{\Lambda}\right]+\E\left[{\cov{\left[S_{t_{1}},S_{t_{2}}\big\vert N_{t_{1}},N_{t_{2}},\boldsymbol{\Lambda},\boldsymbol{R}\right]}\Big\vert\boldsymbol{\Lambda}}\right]\\
 & =\cov\left[{\Lambda^{[2]}R^{[2]}N_{t_{1}}\exp\left(\beta_{0}^{[2]}N_{t_{1}}\right),\Lambda^{[2]}R^{[2]}N_{t_{2}}\exp\left(\beta_{0}^{[2]}N_{t_{2}}\right)\Big\vert\boldsymbol{\Lambda}}\right]\\
 & =\left(\Lambda^{[2]}\right)^{2}\Bigg\{\left(1+b^{[2]}\right)\E\left[{N_{t_{1}}N_{t_{2}}\exp\left(\beta_{0}^{[2]}\left(N_{t_{1}}+N_{t_{2}}\right)\right)\Big\vert\boldsymbol{\Lambda}}\right]\\
 & \hfill-\E\left[{N_{t_{1}}\exp\left(\beta_{0}^{[2]}N_{t_{1}}\right)\Big\vert\boldsymbol{\Lambda}}\right]
 \E\left[N_{t_{2}}\exp\left(\beta_{0}^{[2]}N_{t_{2}}\right)\Big\vert\boldsymbol{\Lambda}\right]\Bigg\}\\
 & =\left(\Lambda^{[2]}\right)^{2}\Bigg[\left(1+b^{[2]}\right)\left(\Lambda^{[1]}\right)^{2}e^{2{\beta_{0}^{[2]}}}M_{R^{[1]}}^{\prime\prime}\left(2\Lambda^{[1]}\left(e^{\beta_{0}^{[2]}}-1\right)\right)-\left\{ \Lambda^{[1]}e^{\beta_{0}^{[2]}}M_{R^{[1]}}^{\prime}\left(\Lambda^{[1]}\left(e^{\beta_{0}^{[2]}}-1\right)\right)\right\} ^{2}\Bigg]\\
 & =\left(\Lambda^{[1]}\Lambda^{[2]}\right)^{2}e^{2\beta_{0}^{[2]}}\Bigg[\left(1+b^{[2]}\right)M_{R^{[1]}}^{\prime\prime}\left(2\zeta_{1}\right)-\left\{ M_{R^{[1]}}^{\prime}\left(\zeta_{1}\right)\right\} ^{2}\Bigg],
\end{aligned}
\]
where the second last equality comes from Lemma \ref{dan.lem.30}.

The covariance among the frequencies conditional on the priori premium
for $t_{1}\neq t_{2}$ is
\[
\begin{aligned}\cov\left[{N_{t_{1}},N_{t_{2}}\big\vert\boldsymbol{\Lambda}}\right] & =\E\left[{\cov\left[{N_{t_{1}},N_{t_{2}}\big\vert\boldsymbol{\Lambda},\boldsymbol{R}}\right]\big\vert\boldsymbol{\Lambda}}\right]+\cov{\left[\E{\left[N_{t_{1}}\big\vert\boldsymbol{\Lambda},\boldsymbol{R}\right]},\E\left[{N_{t_{2}}\big\vert\boldsymbol{\Lambda},\boldsymbol{R}}\right]\big\vert\boldsymbol{\Lambda}\right]}\\
 & =\left(\Lambda^{[1]}\right)^{2}\Var\left[{R^{[1]}}\right].
\end{aligned}
\]

Note that the all last equalities of following proofs comes from Lemma
\ref{dan.lem.30}. The variance of the individual severities conditional
on the priori premium is
\[
\begin{aligned}\\
\Var\left[{Y_{t,j}\big\vert\boldsymbol{\Lambda}}\right] & =\E\left[{\Var\left[{Y_{t,j}\big\vert\boldsymbol{\Lambda},\boldsymbol{R}}\right]\Big\vert\boldsymbol{\Lambda}}\right]+\Var\left[{\E{\left[Y_{t,j}\big\vert\boldsymbol{\Lambda},\boldsymbol{R}\right]}\Big\vert\boldsymbol{\Lambda}}\right]\\
 & =\left(1+\psi^{[2]}\right)\E\left[{\left(\Lambda^{[2]}R^{[2]}\right)^{2}\exp\left(2\beta_{0}^{[2]}N_{t}\right)\Big\vert\boldsymbol{\Lambda}}\right]-\left\{ \E{\left[\Lambda^{[2]}R^{[2]}\exp\left(\beta_{0}^{[2]}N_{t}\right)\Big\vert\boldsymbol{\Lambda}\right]}\right\} ^{2}\\
 & =\left(\Lambda^{[2]}\right)^{2}\bigg[\left(1+b^{[2]}\right)\left(1+\psi^{[2]}\right)\,M_{R^{[1]}}\left(\zeta_{2}\right)-\left\{ M_{R^{[1]}}\left(\zeta_{1}\right)\right\} ^{2}\bigg].
\end{aligned}
\]

The covariances among the individual severities conditional on the
priori premium are

\[
\begin{aligned}\\
\cov\left[{Y_{t,j_{1}},Y_{t,j_{2}}\Big\vert\boldsymbol{\Lambda}}\right] & =\E{\left[\cov{\left[Y_{t,j_{1}},Y_{t,j_{2}}\big\vert\boldsymbol{\Lambda},\boldsymbol{R}\right]}\Big\vert\boldsymbol{\Lambda}\right]}+\cov\left[{\E{\left[Y_{t,j_{1}}\big\vert\boldsymbol{\Lambda},\boldsymbol{R}\right]},\E{\left[Y_{t,j_{2}}\big\vert\boldsymbol{\Lambda},\boldsymbol{R}\right]}\Big\vert\boldsymbol{\Lambda}}\right]\\
 & =\Var{\left[\Lambda^{[2]}R^{[2]}\exp\left(\beta_{0}^{[2]}N_{t}\right)\Big\vert\boldsymbol{\Lambda}\right]}\\
 & =\E\left[{\left(\Lambda^{[2]}R^{[2]}\right)^{2}\exp\left(2\beta_{0}^{[2]}N_{t}\right)\Big\vert\boldsymbol{\Lambda}}\right]-\left\{ \E{\left[\Lambda^{[2]}R^{[2]}\exp\left(\beta_{0}^{[2]}N_{t}\right)\Big\vert\boldsymbol{\Lambda}\right]}\right\} ^{2}\\
 & =\left(\Lambda^{[2]}\right)^{2}\bigg[\left(1+b^{[2]}\right)M_{R^{[1]}}\left(\zeta_{2}\right)-\left\{ M_{R^{[1]}}\left(\zeta_{1}\right)\right\} ^{2}\bigg]
\end{aligned}
\]
and, for $t_{1}\neq t_{2}$,

\[
\begin{aligned}\\
\cov\left[{Y_{t_{1},j_{1}},Y_{t_{2},j_{2}}\bigg\vert\boldsymbol{\Lambda}}\right] & =\E\left[{\cov{\left[Y_{t_{1},j_{1}},Y_{t_{2},j_{2}}\big\vert\boldsymbol{\Lambda},\boldsymbol{R}\right]}\Big\vert\boldsymbol{\Lambda}}\right]+\cov{\left[\E{\left[Y_{t_{1},j_{1}}\big\vert\boldsymbol{\Lambda},\boldsymbol{R}\right]},\E{\left[Y_{t_{2},j_{2}}\big\vert\boldsymbol{\Lambda},\boldsymbol{R}\right]}\Big\vert\boldsymbol{\Lambda}\right]}\\
 & =\cov{\left[\Lambda^{[2]}R^{[2]}\exp\left(\beta_{0}^{[2]}N_{t_{1}}\right),\Lambda^{[2]}R^{[2]}\exp\left(\beta_{0}^{[2]}N_{t_{2}}\right)\Big\vert\boldsymbol{\Lambda}\right]}\\
 & =\left(\Lambda^{[2]}\right)^{2}\Bigg\{\left(1+b^{[2]}\right)\E{\left[\exp\left(\beta_{0}^{[2]}\left(N_{t_{1}}+N_{t_{2}}\right)\right)\Big\vert\boldsymbol{\Lambda}\right]}\\
 & \hfill-\E{\left[\exp\left(\beta_{0}^{[2]}N_{t_{1}}\right)\Big\vert\boldsymbol{\Lambda}\right]}\E{\left[\exp\left(\beta_{0}^{[2]}N_{t_{2}}\right)\Big\vert\boldsymbol{\Lambda}\right]}\Bigg\}\\
 & =\left(\Lambda^{[2]}\right)^{2}\bigg[\left(1+b^{[2]}\right)M_{R^{[1]}}\left(2\Lambda^{[1]}\left(e^{\beta_{0}^{[2]}}-1\right)\right)-\left\{ M_{R^{[1]}}\left(\Lambda^{[1]}\left(e^{\beta_{0}^{[2]}}-1\right)\right)\right\} ^{2}\bigg].
\end{aligned}
\]

The covariances among the individual severities conditional on the
priori premium are
\[
\begin{aligned}\\
\cov{\left[N_{t},Y_{t,j}\Big\vert\boldsymbol{\Lambda}\right]} & =\E\left[{\E{\left[N_{t}\,Y_{t,j_{1}}\big\vert N_{t},\boldsymbol{\Lambda},\boldsymbol{R}\right]}\Big\vert\boldsymbol{\Lambda}}\right]-\E\left[{\E{\left[N_{t}\big\vert\boldsymbol{\Lambda},\boldsymbol{R}\right]}\Big\vert\boldsymbol{\Lambda}}\right]\E\left[{\E{\left[Y_{t,j_{1}}\big\vert\boldsymbol{\Lambda},\boldsymbol{R}\right]}\Big\vert\boldsymbol{\Lambda}}\right]\\
 & =\Lambda^{[2]}\E{\left[R^{[2]}N_{t}\exp\left(\beta_{0}^{[2]}N_{t}\right)\Big\vert\boldsymbol{\Lambda}\right]}-\Lambda^{[1]}\Lambda^{[2]}\E{\left[R^{[2]}\exp\left(\beta_{0}^{[2]}N_{t}\right)\Big\vert\boldsymbol{\Lambda}\right]}\\
 & =\Lambda^{[1]}\Lambda^{[2]}\bigg[e^{\beta_{0}^{[2]}}M_{R^{[1]}}^{\prime}\left(\zeta_{1}\right)-M_{R^{[1]}}\left(\zeta_{1}\right)\bigg]
\end{aligned}
\]
and, for $t_{1}\neq t_{2}$,

\[
\begin{aligned}\cov{\left[N_{t_{1}},Y_{t_{2},j}\bigg\vert\boldsymbol{\Lambda}\right]} & =\E\left[{\E{\left[N_{t_{1}}\,Y_{t_{2},j}\big\vert N_{t_{1}},\boldsymbol{\Lambda},\boldsymbol{R}\right]}\Big\vert\boldsymbol{\Lambda}}\right]-\E\left[\E{\left[N_{t_{1}}\big\vert\boldsymbol{\Lambda},\boldsymbol{R}\right]}\Big\vert\boldsymbol{\Lambda}\right]\E{\left[\E{\left[Y_{t_{2},j}\big\vert\boldsymbol{\Lambda},\boldsymbol{R}\right]}\Big\vert\boldsymbol{\Lambda}\right]}\\
 & =\Lambda^{[1]}\Lambda^{[2]}\E{\left[R^{[2]}\exp\left(\beta_{0}^{[2]}N_{t_{2}}\right)\Big\vert\boldsymbol{\Lambda}\right]}-\Lambda^{[1]}\Lambda^{[2]}\E{\left[R^{[2]}\exp\left(\beta_{0}^{[2]}N_{t_{2}}\right)\Big\vert\boldsymbol{\Lambda}\right]}\\
 & =0.\\
\\
\end{aligned}
\]
\end{proof}

Finally, we provide the MSE formulas for two B\"uhlmann methods in Section
\ref{sec:Two-B=0000FChlmann-Premiums}.

\begin{proposition}\label{prop.11} Under the settings in Model \ref{mod2},
we have
\[
\begin{aligned} & {\rm MSE}_{1}\left(\boldsymbol{\Lambda},t\right)\\
 & =\left(\Lambda^{[1]}\Lambda^{[2]}\right)^{2}\left(1+b^{[2]}\right)e^{2\beta_{0}^{[2]}}M_{R^{[1]}}^{\prime\prime}\left(2\zeta_{1}\right)+\widehat{\alpha}_{0}^{2}+t\widehat{\alpha}_{1}^{2}
 \left[\Var\left[S_{1}|\boldsymbol{\Lambda}\right]+\left(u_{1}\left(\boldsymbol{\Lambda}\right)\right)^{2}\right]\\
 & \quad\quad\quad\quad\quad\quad\quad+2t\widehat{\alpha}_{0}\widehat{\alpha}_{1}u_{1}\left(\boldsymbol{\Lambda}\right)+
 t(t-1)\widehat{\alpha}_{1}^{2}\left[\cov\left[S_{1},S_{2}|\boldsymbol{\Lambda}\right]+\left(u_{1}\left(\boldsymbol{\Lambda}\right)\right)^{2}\right]-2\widehat{\alpha}_{0}u_{1}\left(\boldsymbol{\Lambda}\right)\\
 & \quad\quad\quad\quad\quad\quad\quad\quad\quad\quad\quad\quad\quad\quad\quad\quad\quad\quad\quad\quad\quad\quad-2\widehat{\alpha}_{1}t\left(\Lambda^{[1]}\Lambda^{[2]}\right)^{2}e^{2\beta_{0}^{[2]}}\left(1+b^{[2]}\right)M_{R^{[1]}}^{\prime\prime}\left(2\zeta_{1}\right)
\end{aligned}
\]
and
\[
\begin{aligned} & {\rm MSE}_{2}\left(\boldsymbol{\Lambda},t\right)\\
 & =\left(\Lambda^{[1]}\Lambda^{[2]}\right)^{2}\left(1+b^{[2]}\right)e^{2\beta_{0}^{[2]}}M_{R^{[1]}}^{\prime\prime}\left(2\zeta_{1}\right)+\left(\widehat{\alpha}_{0}^{*}\right)^{2}\\
 & \quad+t\left(\widehat{\alpha}_{1}^{*}\right)^{2}\Lambda^{[1]}\left(\Lambda^{[2]}\right)^{2}e^{2\beta_{0}^{[2]}}\left[\Lambda^{[1]}e^{2\beta_{0}^{[2]}}M_{R^{[1]}}^{\prime\prime}\left(\zeta_{2}\right)+M_{R^{[1]}}^{\prime}\left(\zeta_{2}\right)\right]+2t\widehat{\alpha}_{0}^{*}\widehat{\alpha}_{1}^{*}u_{2}\left(\boldsymbol{\Lambda}\right)
\\
 & \quad\quad\quad\quad+t(t-1)\left(\widehat{\alpha}_{1}^{*}\Lambda^{[1]}\Lambda^{[2]}e^{\beta_{0}^{[2]}}\right)^{2}M_{R^{[1]}}^{\prime\prime}\left(2\zeta_{1}\right)-2\widehat{\alpha}_{0}^{*}u_{2}\left(\boldsymbol{\Lambda}\right),%
\end{aligned}
\]
where $\widehat{\alpha}_{j}$ and $\widehat{\alpha}_{j}^{*}$ for
$j=0,1,\cdots,t$ are defined in \eqref{alpha_agg} and  respectively.

\end{proposition}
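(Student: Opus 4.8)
The plan is to treat both identities as the expansion of a single quadratic. Writing $T:=\E[S_{t+1}\mid\boldsymbol{R},\boldsymbol{\Lambda}]=u_1(\boldsymbol{R},\boldsymbol{\Lambda})$ for the common target (Proposition \ref{Prop_P_agg_1}) and using that the optimal coefficients satisfy $\widehat{\alpha}_1=\cdots=\widehat{\alpha}_t$ and $\widehat{\alpha}_1^*=\cdots=\widehat{\alpha}_t^*$ by \eqref{alpha_agg} and \eqref{alpha_freq}, each predictor is affine, ${\rm Prem}_1=\widehat{\alpha}_0+\widehat{\alpha}_1\sum_{k=1}^t S_k$ and ${\rm Prem}_2=\widehat{\alpha}_0^*+\widehat{\alpha}_1^*\sum_{k=1}^t\widetilde{S}_k$. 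I would then expand ${\rm MSE}_i(\boldsymbol{\Lambda},t)=\E[T^2\mid\boldsymbol{\Lambda}]-2\E[T\,{\rm Prem}_i\mid\boldsymbol{\Lambda}]+\E[{\rm Prem}_i^2\mid\boldsymbol{\Lambda}]$ and evaluate the three groups separately, collapsing every double sum to a diagonal/off-diagonal dichotomy using time-stationarity and conditional independence of the $S_k$ (resp.\ the $N_k$) given $(\boldsymbol{R},\boldsymbol{\Lambda})$.

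For ${\rm MSE}_1$, the leading term $\E[T^2\mid\boldsymbol{\Lambda}]=\E[u_1(\boldsymbol{R},\boldsymbol{\Lambda})^2\mid\boldsymbol{\Lambda}]$ is precisely the expectation appearing as the first bracketed term in the evaluation of $a_1$ in Proposition \ref{prop:P_agg_2}, namely $(\Lambda^{[1]}\Lambda^{[2]})^2(1+b^{[2]})e^{2\beta_0^{[2]}}M_{R^{[1]}}''(2\zeta_1)$. The quadratic $\E[(\sum_k S_k)^2\mid\boldsymbol{\Lambda}]$ splits into $t$ diagonal contributions $\E[S_1^2\mid\boldsymbol{\Lambda}]=\Var[S_1\mid\boldsymbol{\Lambda}]+u_1(\boldsymbol{\Lambda})^2$ and $t(t-1)$ off-diagonal ones $\E[S_1S_2\mid\boldsymbol{\Lambda}]=\cov[S_1,S_2\mid\boldsymbol{\Lambda}]+u_1(\boldsymbol{\Lambda})^2$, both supplied verbatim by Proposition \ref{prop.1}. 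For the cross term I would condition on $(\boldsymbol{R},\boldsymbol{\Lambda})$: since $T$ is $(\boldsymbol{R},\boldsymbol{\Lambda})$-measurable and each $S_k$ has conditional mean $u_1(\boldsymbol{R},\boldsymbol{\Lambda})$, one gets $\E[S_kT\mid\boldsymbol{\Lambda}]=\E[u_1(\boldsymbol{R},\boldsymbol{\Lambda})^2\mid\boldsymbol{\Lambda}]$ for every $k\le t$, which reproduces the last displayed term.

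For ${\rm MSE}_2$ the skeleton is identical with $\widetilde{S}_k=\Lambda^{[2]}N_k e^{\beta_0^{[2]}N_k}$ (Proposition \ref{Prop:P_freq_1}) in place of $S_k$. The diagonal moment $\E[\widetilde{S}_1^2\mid\boldsymbol{\Lambda}]=(\Lambda^{[2]})^2\E[N^2e^{2\beta_0^{[2]}N}\mid\boldsymbol{\Lambda}]$ and the off-diagonal moment $\E[\widetilde{S}_1\widetilde{S}_2\mid\boldsymbol{\Lambda}]=(\Lambda^{[2]})^2\E[N_{t_1}N_{t_2}e^{\beta_0^{[2]}(N_{t_1}+N_{t_2})}\mid\boldsymbol{\Lambda}]$ are read off directly from Lemma \ref{dan.lem.30}, recalling $\zeta_2=\Lambda^{[1]}(e^{2\beta_0^{[2]}}-1)$ and $2\zeta_1=2\Lambda^{[1]}(e^{\beta_0^{[2]}}-1)$. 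The point requiring care is the cross term: conditioning on $(\boldsymbol{R},\boldsymbol{\Lambda})$ gives $\E[\widetilde{S}_kT\mid\boldsymbol{\Lambda}]=\E[u_1(\boldsymbol{R},\boldsymbol{\Lambda})\,u_2(\boldsymbol{R},\boldsymbol{\Lambda})\mid\boldsymbol{\Lambda}]$, and since $u_1=R^{[2]}u_2$ with $R^{[2]}$ independent of $R^{[1]}$ and $\E[R^{[2]}]=1$ (Lemma \ref{dan.lem.3}), this collapses to $\E[u_2(\boldsymbol{R},\boldsymbol{\Lambda})^2\mid\boldsymbol{\Lambda}]=(\Lambda^{[1]}\Lambda^{[2]})^2e^{2\beta_0^{[2]}}M_{R^{[1]}}''(2\zeta_1)$ — notably \emph{without} the $(1+b^{[2]})$ factor that accompanies the $\E[T^2\mid\boldsymbol{\Lambda}]$ term, because $\widetilde{S}_k$ carries no $R^{[2]}$.

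The underlying computations are routine substitutions from Lemmas \ref{dan.lem.1}, \ref{dan.lem.2}, and \ref{dan.lem.30}; the main obstacle is organizational rather than conceptual. I expect the delicate bookkeeping to be twofold: (i) correctly reducing the $O(t^2)$ pairwise products to the two representatives $\E[\cdot_1^2]$ and $\E[\cdot_1\cdot_2]$ via conditional independence and stationarity, and (ii) tracking which terms inherit $\E[(R^{[2]})^2]=1+b^{[2]}$ (those quadratic in the severity random effect, i.e.\ $\E[T^2\mid\boldsymbol{\Lambda}]$ and the $S_k$-products in ${\rm MSE}_1$) versus only $\E[R^{[2]}]=1$ (the cross terms $\widetilde{S}_kT$ in ${\rm MSE}_2$). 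Once these are kept straight, collecting the coefficients of $\widehat{\alpha}_0^2,\ \widehat{\alpha}_0\widehat{\alpha}_1,\ \widehat{\alpha}_1^2$ and their starred analogues yields the two stated expressions.
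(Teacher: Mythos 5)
Your proposal is correct and follows essentially the same route as the paper: expand the squared error into target, predictor, and cross terms, collapse the $O(t^2)$ products to diagonal and off-diagonal representatives by stationarity, and evaluate them with Proposition \ref{prop.1} and Lemma \ref{dan.lem.30}, conditioning on $(\boldsymbol{R},\boldsymbol{\Lambda})$ for the cross terms. Your observation that the $\widetilde{S}_k$--target cross term loses the $(1+b^{[2]})$ factor because $\widetilde{S}_k$ carries no $R^{[2]}$ is exactly the content of the paper's equation \eqref{dan.103}.
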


\begin{proof}

First, ${\rm MSE}_{1}\left(\boldsymbol{\Lambda},t\right)$ can be
expressed as
\begin{equation}
\begin{aligned} & {\rm MSE}_{1}\left(\boldsymbol{\Lambda},t\right)\\
 & :=\E\left[\left(\E\left[S_{t+1}|\boldsymbol{R},\boldsymbol{\Lambda}\right]-{\rm Prem}_{1}\left(\boldsymbol{\Lambda},\mathcal{F}_{t}^{[{\rm agg}]}\right)\right)^{2}\Big|\boldsymbol{\Lambda}\right]\\
 & =\mathbb{E}\Bigg[\left(\E\left[S_{t+1}|\boldsymbol{R},\boldsymbol{\Lambda}\right]\right)^{2}+\left(\widehat{\alpha}_{0}^{2}+t\widehat{\alpha}_{1}^{2}S_{1}^{2}+2t\widehat{\alpha}_{0}\widehat{\alpha}_{1}S_{1}+t(t-1)\widehat{\alpha}_{1}^{2}S_{1}S_{2}\right)\\
 & \quad\quad\quad\quad\quad\quad\quad\quad\quad\quad\quad\quad\quad\quad-2\widehat{\alpha}_{0}
 \E\left[S_{t+1}|\boldsymbol{R},\boldsymbol{\Lambda}\right]-2\widehat{\alpha}_{1}tS_{1}
 \E\left[S_{t+1}|\boldsymbol{R},\boldsymbol{\Lambda}\right]\Big|\boldsymbol{\Lambda}\bigg],
\end{aligned}
\label{dan.20}
\end{equation}
where the second equality is just expansion of the square expression.
Finally, \eqref{dan.20} and the following equalities
\[
\begin{aligned}\E\left[2\widehat{\alpha}_{1}tS_{1}\E\left[{S_{t+1}|\boldsymbol{R},\boldsymbol{\Lambda}}\right]\Big
|\boldsymbol{\Lambda}\right]
=2\widehat{\alpha}_{1}t\E\left[\left(\E\left[S_{t}|\boldsymbol{R},\boldsymbol{\Lambda}\right]\right)^{2}\Big|\boldsymbol{\Lambda}\right]\end{aligned}
\]
and
\[
\begin{aligned}
\E\left[\left(\E\left[{S_{t+1}|\boldsymbol{R},\boldsymbol{\Lambda}}\right]\right)^{2}\Big|\boldsymbol{\Lambda}\right] & =\E\left[\left(\Lambda^{[1]}\Lambda^{[2]}R^{[1]}R^{[2]}e^{\beta_{0}^{[2]}}
\exp\left(\Lambda^{[1]}R^{[1]}\left(e^{\beta_{0}^{[2]}}-1\right)\right)\right)^{2}\Big|\boldsymbol{\Lambda}\right]\\
 & =\left(\Lambda^{[1]}\Lambda^{[2]}e^{\beta_{0}^{[2]}}\right)\left(1+b^{[2]}\right)M_{R^{[1]}}^{\prime\prime}\left(2\zeta_{1}\right)
\end{aligned}
\]
conclude the proof of the first part.

Now, ${\rm MSE}_{2}\left(\boldsymbol{\Lambda},t\right)$ can be expressed
as
\begin{equation}
\begin{aligned} & {\rm MSE}_{2}\left(\boldsymbol{\Lambda},t\right)\\
 & :=\E\left[\left(\E\left[{S_{t+1}|\boldsymbol{R},\boldsymbol{\Lambda}}\right] -{\rm Prem}_{2}\left(\boldsymbol{\Lambda},\mathcal{F}_{t}^{[{\rm freq}]}\right)\right)^{2}|\boldsymbol{\Lambda}\right]\\
 & =\mathbb{E}\bigg[\left(\E\left[S_{t+1}|\boldsymbol{R},\boldsymbol{\Lambda}\right]\right)^{2}+\left(\alpha_{0}^{*}\right)^{2}+t\left(\alpha_{1}^{*}\E{\left[S_{1}|N_{1},\boldsymbol{\Lambda}\right]}\right)^{2}+2t\alpha_{0}^{*}\alpha_{1}^{*}\E{\left[S_{1}|N_{1},\boldsymbol{\Lambda}\right]}\\
 & \quad\quad\quad\quad+t(t-1)\left(\alpha_{1}^{*}\right)^{2}\E{\left[S_{1}|N_{1},\boldsymbol{\Lambda}\right]}\E{\left[S_{2}|N_{2},\boldsymbol{\Lambda}\right]} -2\alpha_{1}t\E\left[S_{1}|N_{1},\boldsymbol{\Lambda}\right]\E\left[S_{t+1}|\boldsymbol{R},\boldsymbol{\Lambda}\right]\Big|\boldsymbol{\Lambda}\bigg],
\end{aligned}
\label{dan.18}
\end{equation}
where the second equality is just expansion of the square expression.
From Lemma \ref{dan.lem.30}, we also have
\begin{equation}
\begin{aligned}\E\left[{t\left(\alpha_{1}^{*}\E{\left[S_{1}|N_{1},\boldsymbol{\Lambda}\right]}\right)^{2}\Big|\boldsymbol{\Lambda}}\right] & =t\left(\alpha_{1}^{*}\Lambda^{[2]}\right)^{2}\E\left[{N_{t}^{2}\exp\left(2\beta_{0}^{[2]}N_{t}\right)\Big|\boldsymbol{\Lambda}}\right]\\
 & =t\left(\alpha_{1}^{*}\Lambda^{[2]}\right)^{2}\bigg[\left(\Lambda^{[1]}\right)^{2}e^{4\beta_{0}^{[2]}}M_{R^{[1]}}^{\prime\prime}\left(\zeta_{2}\right)+\Lambda^{[1]}e^{2\beta_{0}^{[2]}}M_{R^{[1]}}^{\prime}\left(\zeta_{2}\right)\bigg]
\end{aligned}
\label{dan.100}
\end{equation}
and
\begin{equation}
\begin{aligned}\E\left[{t(t-1)\left(\alpha_{1}^{*}\right)^{2}\E{\left[S_{1}|N_{1},\boldsymbol{\Lambda}\right]}\E{\left[S_{2}|N_{2},\boldsymbol{\Lambda}\right]}\Big|\boldsymbol{\Lambda}}\right] & =t(t-1)\left(\alpha_{1}^{*}\right)^{2}\E{\left[N_{1}N_{2}\exp\left(\beta_{0}^{[2]}\left(N_{1}+N_{2}\right)\right)\Big|\boldsymbol{\Lambda}\right]}\\
 & =t(t-1)\left(\alpha_{1}^{*}\Lambda^{[1]}\Lambda^{[2]}e^{\beta_{0}^{[2]}}\right)^{2}M_{R^{[1]}}^{\prime\prime}\left(2\zeta_{1}\right).
\end{aligned}
\label{dan.102}
\end{equation}
Furthermore, we have
\begin{equation}
\begin{aligned} & \E{\left[-2\alpha_{1}t\E{\left[S_{1}|N_{1},\boldsymbol{\Lambda}\right]}\E\left[{S_{t+1}|\boldsymbol{R},\boldsymbol{\Lambda}}\Big|\boldsymbol{\Lambda}\right]\right]}\\
 & =-2\alpha_{1}t\E{\left[\Lambda^{[2]}N_{1}\exp\left(\beta_{0}^{[2]}N_{1}\right)\Lambda^{[1]}\Lambda^{[2]}R^{[1]}R^{[2]}\exp\left(\Lambda^{[1]}R^{[1]}\left(e^{\beta_{0}^{[2]}}-1\right)\right)\Big|\boldsymbol{\Lambda}\right]}\\
 & =-2\alpha_{1}t\left(\Lambda^{[1]}\Lambda^{[2]}e^{\beta_{0}^{[2]}}\right)^{2}M_{R^{[1]}}^{\prime\prime}\left(2\zeta_{1}\right),
\end{aligned}
\label{dan.103}
\end{equation}
where the second equality is also from Lemma \ref{dan.lem.30}. Finally,
\eqref{dan.18}, \eqref{dan.100}, \eqref{dan.102}, and \eqref{dan.103}
conclude the proof of the second part.
\end{proof}

\section*{Appendix C: Tables}

\label{apdx.tabs}

\begin{table}[H]
\caption{(Data example) Estimation results under the frequency-severity Model
\ref{mod2} with dependence}
\vspace{-0.05in}
 \centering 
\begin{tabular}{lrrrrlrrrrlcc}
\hline
 &  &  & \multicolumn{2}{c}{95$\%$ CI} &  &  &  &  &  &  &  & \tabularnewline
\hline
parameter & Est & Std.dev & lower & upper &  &  &  &  &  &  &  & \tabularnewline
\hline
\multicolumn{3}{l}{\textbf{Frequency part}} &  &  &  &  &  &  &  &  &  & \tabularnewline
\quad{}Intercept & -1.884 & 0.292 & -2.442 & -1.294 & {*} &  &  &  &  &  &  & \tabularnewline
\quad{}City & 0.002 & 0.324 & -0.636 & 0.634 &  &  &  &  &  &  &  & \tabularnewline
\quad{}County & 1.279 & 0.317 & 0.644 & 1.883 & {*} &  &  &  &  &  &  & \tabularnewline
\quad{}School & -0.289 & 0.280 & -0.819 & 0.271 &  &  &  &  &  &  &  & \tabularnewline
\quad{}Town & -2.038 & 0.365 & -2.737 & -1.312 & {*} &  &  &  &  &  &  & \tabularnewline
\quad{}Village & -0.701 & 0.307 & -1.291 & -0.101 & {*} &  &  &  &  &  &  & \tabularnewline
\quad{}Coverage2 & 1.009 & 0.211 & 0.602 & 1.430 & {*} &  &  &  &  &  &  & \tabularnewline
\quad{}Coverage3 & 1.898 & 0.223 & 1.464 & 2.328 & {*} &  &  &  &  &  &  & \tabularnewline
\hline
\multicolumn{3}{l}{\textbf{Severity part}} &  &  &  &  &  &  &  &  &  & \tabularnewline
\quad{}Intercept & 8.394 & 0.366 & 7.712 & 9.140 & {*} &  &  &  &  &  &  & \tabularnewline
\quad{}City & -0.034 & 0.345 & -0.726 & 0.616 &  &  &  &  &  &  &  & \tabularnewline
\quad{}County & 0.527 & 0.333 & -0.126 & 1.169 &  &  &  &  &  &  &  & \tabularnewline
\quad{}School & -0.130 & 0.325 & -0.748 & 0.532 &  &  &  &  &  &  &  & \tabularnewline
\quad{}Town & 0.497 & 0.434 & -0.362 & 1.342 &  &  &  &  &  &  &  & \tabularnewline
\quad{}Village & 0.291 & 0.340 & -0.364 & 0.974 &  &  &  &  &  &  &  & \tabularnewline
\quad{}Coverage2 & 0.189 & 0.233 & -0.281 & 0.625 &  &  &  &  &  &  &  & \tabularnewline
\quad{}Coverage3 & 0.048 & 0.250 & -0.451 & 0.525 &  &  &  &  &  &  &  & \tabularnewline
\quad{}$\psi^{[2]}$ & 1.478 & 0.091 & 1.309 & 1.664 & {*} &  &  &  &  &  &  & \tabularnewline
\quad{}$\beta_{0}^{[2]}$ & -0.034 & 0.013 & -0.058 & -0.009 & {*} &  &  &  &  &  &  & \tabularnewline
\hline
\multicolumn{3}{l}{\textbf{Random effect part}} &  &  &  &  &  &  &  &  &  & \tabularnewline
\quad{}$b^{[1]}$ & 1.563 & 0.297 & 1.066 & 2.199 & {*} &  &  &  &  &  &  & \tabularnewline
\quad{}$b^{[2]}$ & 0.222 & 0.049 & 0.129 & 0.320 & {*} &  &  &  &  &  &  & \tabularnewline
\hline
\end{tabular}
\label{est.model2}
\end{table}

\bibliographystyle{plainnat}
\bibliography{Bib_Oh}

\end{document}